%% file: main.tex
\documentclass[letterpaper,11pt,twoside]{article}

\usepackage{authblk}

\usepackage{graphicx} 
\usepackage[margin=1in]{geometry}

\input{preamble.sty}
\bibliography{refs}

\title{Deterministic Incremental Triconnectivity in $O(\log^2 n)$ Worst-Case Time\footnote{Related version: This paper is based on the M.Sc. thesis by Niels Inselmann Christiansen, supervised by Eva Rotenberg, defended February 2026 at the Technical University of Denmark.}}

\author{
Niels Inselmann Christiansen \footnote{This research is supported by Danmarks Frie Forskningsfond via the grant ``Dynamic Graphs: Distributed and Geometry'' with grant-ID 10.46540/4251-00004B.}}
\author{Eva Rotenberg$^\dagger$
}

\affil{IT University of Copenhagen}

\date{} 

\begin{document}
\maketitle
\thispagestyle{empty}
\vspace{5em}
\begin{abstract}
We consider the problem of maintaining the triconnectivity in a graph $G$ in the incremental dynamic setting. We consider updates in the form of edge insertions and edge subdivisions, and triconnectivity queries, i.e. queries that take as input variables the vertices $a$ and $b$ and are of the form ``are there at least three vertex disjoint paths between $a$ and $b$?''. All algorithms presented in this paper are deterministic.

When starting from a biconnected graph, all updates take worst-case $O(\log n)$ time, and all queries are answered in $O(\log n)$ time. By combining this result for biconnected graphs with a result of Holm and Rotenberg [SODA'20] we get a deterministic algorithm for general graphs with a worst-case update and query time of $O(\log^2 n)$. Both algorithms represent an improvement of the state-of-the-art by a factor of $O(\log n)$.
%
\end{abstract}

\newpage

\setcounter{page}{1}
\pagestyle{fancy}
\input{tex/05introduction}
\input{tex/12heavypath}

\input{tex/20incremental}
\input{tex/30IncrementalWorstCase}

\input{tex/40ending}

\input{tex/input-til-referenceliste}
\printbibliography

\end{document}

%% file: tex/05introduction.tex

\section{Introduction \& preliminaries}

In a graph, two vertices are \emph{triconnected} (also denoted 3-vertex-connected) if there exist three internally vertex disjoint paths between them. This relation induces a covering on the vertices of the graph, the member sets of which we call the \textit{triconnected components} of G. 
The triconnected components of a graph have received much attention due to their importance in graph drawing: a triconnected graph has a unique planar embedding, and thus, the triconnected components give information about any planar embedding of a planar graph.
In the static setting, a linear time algorithm for finding the triconnected components is due to Hopcroft and Tarjan~\cite{Hopcroft1973}%
. 
In this work, we
consider the \textit{incremental} variant of this problem where edges and vertices arrive interspersed with queries about the triconnected components of the graph.
The analogous question about graph connectivity in the incremental setting is solved by the well-established union find data structure, which has tight $\Theta(\alpha(n))$ amortised and $\Theta(\frac{\log n}{\log\log n})$ worst-case time per operation. For triconnectivity, an $O(\log(n))$ amortised update time has been established for decades \cite{Battista1996}, but the first polylogarithmic incremental worst-case algorithm was only introduced in 2020~\cite{HolmRotenberg2019}. In this work, we take a major step towards obtaining tight bounds in the worst case setting also: 
Given a biconnected graph, we give $O(\log n)$ deterministic worst-case bounds for updating triconnectivity information under edge-insertions, which is tight up to $O(\log\log n)$-factors.
Our algorithm also leads to an improvement for general graphs, improving from worst-case $O(\log^3 n)$ to $O(\log^2 n)$.


\subsection{Related Work}
Vertex connectivity and the triconnectivity in graphs has received attention from the graph theory community for over a century~\cite{menger,maclane1937,Whitney1932,tutte66}. 
On the computational front, Hopcroft and Tarjan~\cite{Hopcroft1973} gave linear time algorithms for dividing a general graph into biconnected components and dividing a biconnected graph into triconnected components. 

An algorithm solving the incremental connectivity (commonly attributed to McIlroy and Morris) was proved to have an amortised running time of $O(\alpha (n))$ \cite{Tarjan:75}, which is optimal in the cell-probe model \cite{tarjan79,fredmansaks89}. This lower bound extends to bi- and triconnectivity with a simple reduction and has matching algorithms by Westbrook and Tarjan \cite{WestbrookT92} and Di Battista and Tamassia~\cite{Battista1996}. The latter~\cite{Battista1996} introduced \textit{SPQR-trees} as a data structure for storing the triconnected components of a graph and gave procedures for maintaining this under subdivisions and edge insertions. Similar to SPQR-trees, Tutte~\cite{tutte66} had previously introduced a decomposition (now dubbed the \emph{Tutte decomposition}) that reflects the relation between triconnected components and separation pairs in the graph; this Tutte-decomposition can be computed in linear time~\cite{Hopcroft1973,GutwengerM00,Planken25}. For incremental worst-case time complexity, there is a lower bound for incremental connectivity of $\Omega(\frac{\log n}{\log \log n})$ \cite{blum86,fredmansaks89} which is tight with matching algorithm due to Alstrup, Ben-Amram and Rauhe \cite{alstrupBR99}. In the realm of incremental worst-case update-time algorithms, the current state-of-the-art for biconnectivity is $O(\log n)$ due to La Poutré and Westbrook \cite{PoutreW98} and $O(\log^3 n)$ for triconnectivity due to Holm and Rotenberg \cite{HolmRotenberg2019}.

The fully dynamic connectivity problem was first studied by Frederickson \cite{Frederickson85} and has since received much attention by the community \cite{Eppstein97,HenzingerK99,HeTh97,HolmLT01,Thorup00,Eppstein:2003,patrascu06,Wulff-Nilsen16a,ChuzhoyGLNPS20,Kapron:2013}. It has a lower bound by Patrascu and Demaine \cite{patrascu06} in the cell-probe model of $\Omega(\log n)$ for the maximum amongst update time or query time, and current state of the art of amortised, randomised $O((\log n)(\log \log n)^2)$ due to Huang, Huang, Kopelowitz, Pettie and Thorup \cite{HuangHKP17}. The fully dynamic biconnectivity problem has also received attention \cite{WestbrookT92,Rauch92,Rauch94,Henzinger95,Henzinger95fully,Eppstein99,Henzinger00} together with the related 2-edge-connectivity problem \cite{WestbrookT92,Frederickson97,Eppstein99,HolmLT01,Henzinger97,BridgeFinding,HolmRotenberg2019}. The triconnectivity problem, supporting relational $\operatorname{triconnected}(u,v)$ queries, has yet to admit a polylogarithmic time algorithm, and has been studied for planar graphs~\cite{Eppstein97,DecrementalSPQR}. 

Many of these algorithms maintain a dynamic forest as a skeleton for the graph using e.g. Sleator and Tarjan's Link-Cut Trees \cite{SleatorTarjan1983} or Top-Trees by Alstrup, Holm, De Lichtenberg and Thorup \cite{top1}. These allow, among other things, $O(\log n)$-time insertion of edges and vertices and adding weight to a path of the forest. There are examples of these operations being supported using a weighted variant of search trees such as Bent, Sleator and Tarjan's biased search trees \cite{BentSleatorTarjan1985} or Sleator and Tarjan's splay trees \cite{SleatorT85}, which allow for a 'trees-within-trees' structure that maintains a total logarithmic depth for any leaf node.

\textbf{Subsequent and parallel work.} Recent results give efficient randomised, amortised algorithms for general vertex connectivity of a graph~\cite{Adena26}. Note here that our results are deterministic and that the update time guarantees hold in the worst-case%
, and that we answer pair-wise relational queries.

\subsection{Our contribution}
We study the maintenance of connected and biconnected graphs. Let $G$ be a biconnected simple graph, we give a data structure which can maintain a representation of the triconnected components of $G$ while supporting the following updates and query:
\begin{itemize}
    \item \V{Subdivide$(e)$}: For $e \in E(G)$. Replace the edge $e=uv$ with a path $uxv$ through a new vertex $x$. 
    \item \V{InsertEdge$(u,v)$}: For $(u,v) \notin E(G)$. Connect $u$ and $v$ with a new edge.
    \item \V{Triconnected$(u,v)$}: Are the vertices $u$ and $v$ triconnected?
\end{itemize}
It is well known every 2-connected graph can be built from the double bond ($K_2^2$) using only these operations.
In \cref{spqr2} we prove
\newcommand{\theoremonetext}{
There is a data structure that maintains a biconnected graph $G$ in linear space, while supporting the operations \V{Subdivide} and \V{InsertEdge} in deterministic worst-case $O(\log n)$ time and answers \V{Triconnected} in $O(\log n)$ deterministic worst-case time. 
}
\begin{bigtheorem}\label{result:biconn}
\theoremonetext
\end{bigtheorem}\noindent
Combining this with a theorem by Holm and Rotenberg \cite{HolmRotenberg2019} we consequently obtain 
\newcommand{\theoremtwotext}{
There is a data structure that maintains a graph on $n$ vertices in linear space, while supporting the operations \V{Subdivide} and \V{InsertEdge} in deterministic worst-case $O(\log^2 n)$ time and answers \V{Triconnected} in deterministic worst-case $O(\log^2 n)$ time. 
}
\begin{bigcorollary}\label{result:general}
\theoremtwotext
\end{bigcorollary}\noindent

\subsection{Techniques and challenges}
Our paper mainly builds on the work of Holm and Rotenberg \cite{HolmRotenberg2019}, which further build on the dynamic trees of Sleator and Tarjan \cite{SleatorTarjan1983} and the incremental SPQR-tree of Di Battista and Tamassia \cite{Battista1996}. Holm and Rotenberg reduce the problem of maintaining the SPQR-tree of some graph, to maintaining a tree under edge insertion, edge deletion and path contraction efficiently. For this purpose they use ST-trees along with an edge marking system, to indicate contracted edges. These edge markings are a classical application of ST-trees and can be implemented without additional asymptotic cost, however edge insertion in the SPQR-tree requires finding a specific edge, the search procedure of which requires an additional factor of $O(\log n)$ in the running time. We describe how one, by storing additional information in the ST-tree, can find these important edges without a slowdown. \cref{hp} is dedicated to recalling ST-trees and related techniques. \cref{spqr} defines SPQR-trees and the tree operations required to maintain them, and \cref{spqr2} restates the reduction of Holm and Rotenberg, ending with a discussion of the representation of the tree and how to implement all operations in $\BigO(\log n)$ time for biconnected graphs.

\subsection{Preliminaries}
Let $G=(V(G),E(G))$ be a (multi)graph with vertex set $V(G)$ and edge (multi)set $E(G)$. Throughout this paper we use the convention that $n=|V(G)|$ and $m=|E(G)|$. A tree $T$ is a connected graph with no cycles. It is well known that trees have exactly $m=n-1$ edges and that there is a unique path between any two vertices of $T$. In this paper, we will denote the vertices of a tree as nodes. A tree rooted at a node $r$ is a directed tree where all edges point in the direction of the unique path to $r$. All nodes $v$ (except the root) have a neighbour closest to the root; we call this the parent, $p(v)$. All neighbours which are not the parent are children of $v$. The set of children is denoted $c(v)$. The set of nodes which contain $v$ on their root path are the descendants of $v$. The subtree of $T$ rooted at $v$, denoted $T_v$, is $v$ and all its descendants. 

The graph on $n$ vertices where all pairs share an edge is called the complete graph on $n$ vertices, $K_n$. We call the multigraph on exactly $2$ vertices with $m$ distinct edges between them $K_2^m$.

\textbf{Connectivity}
A graph $G$ is $k$-connected if there are $k$ internally-vertex-disjoint paths between any two vertices of G. A 1-connected, 2-connected or 3-connected graph is generally called connected, biconnected or triconnected. Menger's theorem \cite{menger} states that for a $k$-connected graph, no set $c \subseteq V(G)$ of size $k-1$ will leave $G-c$ disconnected. Such a set is generally known as a \textit{(k-1)-cut} and serves as a certificate that a graph is not $k$-connected. A 1-cut is commonly called a cut-vertex and a 2-cut a separation pair. 

This motivates the decomposition of graphs into biconnected components.
\begin{definition}
Given a graph $G$, its biconnected components (or blocks) are:
\begin{enumerate}[label=(\roman*)]
    \item Its maximal biconnected subgraphs
    \item The bridges (cut-edges) of $G$ (trivial blocks)
\end{enumerate} 
\end{definition}
The idea to decompose a graph using $k$-separators has since been applied to define the triconnected components of a graph which we will revisit in \cref{spqr}.

%% file: tex/12heavypath.tex
\section{Heavy Path Decompositions }\label{hp}
In this section we will recall the dynamic trees of Sleator and Tarjan \cite{SleatorTarjan1983} in the weighted setting, which we will use later to store the triconnected components of a biconnected graph, while supporting changes to the tree in logarithmic worst-case time. We present the weighted formulation used by Holm and Rotenberg \cite{HolmRotenberg2019}.

\begin{definition} 
    Let $w$ be a positive weighting of the nodes of a rooted tree $T$. Define the weight $\w(v)$ of the subtree $T_v$ rooted at $v$ to be 
    $$
    w(T_v) = \w(v) :=  w(v) + \sum_{u \in c(v)} \w(u). 
    $$
    Then a (proper) heavy path decomposition of $T$ is a partitioning of $E(T)$ into two groups, \textit{heavy} and \textit{light} such that the edge  $(v,p(v))$ is heavy if and only if
    $$
    \w(v) > \frac{\w(p(v))}{2}
    $$
    and light otherwise.
\end{definition}

Apart from splitting the edges, this also partitions $T$ into so-called \textit{heavy paths} which we will use as a kind of compressed or generalised edge. This partitioning can be done in linear time using a simple traversal of $T$ by computing $\w$ bottom up. This definition, however, is somewhat limited as the partitioning of heavy and light is fixed for any tree. To mitigate this, we will consider a variant which enables us to force a single path to be heavy.

\begin{definition}
    A $u$-exposed heavy path decomposition of a rooted tree $T$ is a partitioning of $E(T)$ into two groups, \textit{solid} and \textit{dashed} such that the edge $(v,p(v))$ is solid if and only if it is on the $u$-to-root path or $p(v)$ is not on the $u$-to-root path and $(v,p(v))$ is heavy. 
\end{definition}

We can consider a solid edge to be almost heavy and a dashed edge to be almost light. We note that a proper heavy path decomposition is $u$-exposed for some choice of $u$. We now prove some properties of heavy path decompositions and describe some internal operations.

\begin{lemma}\label{hp:oneheavy}
    A vertex $v$ has at most one heavy child.
\end{lemma}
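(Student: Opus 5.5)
The plan is a direct counting argument by contradiction, using only the definition of $\w$ and the heavy-edge condition $\w(u) > \w(v)/2$.

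Suppose $v$ has two distinct heavy children $u_1$ and $u_2$, so that $(u_1,v)$ and $(u_2,v)$ are both heavy edges. By the definition of a heavy edge this means
$$\w(u_1) > \frac{\w(v)}{2} \quad\text{and}\quad \w(u_2) > \frac{\w(v)}{2}.$$

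Next I would unfold the recursive definition $\w(v) = w(v) + \sum_{u \in c(v)} \w(u)$. Every summand $\w(u)$ is positive, since it is a sum of positive node weights. The term $w(v)$ is strictly positive because the weighting is assumed positive. Therefore
$$\w(v) > \w(u_1) + \w(u_2).$$

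Combining the two displays gives
$$\w(v) > \w(u_1) + \w(u_2) > \frac{\w(v)}{2} + \frac{\w(v)}{2} = \w(v),$$
which is a contradiction. Hence at most one child of $v$ can be heavy.

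The only point that needs care is strict versus non-strict inequality. The heavy condition is strict, so even without using $w(v) > 0$ we already get $\w(v) \ge \w(u_1) + \w(u_2) > \w(v)$. The positivity of the weights is only needed to discard the other children's contributions, which we do by bounding them below by zero. There is no real obstacle here.
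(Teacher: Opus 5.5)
Your proposal is correct and matches the paper's proof. Both assume two heavy children, add the two strict heavy-edge inequalities to get that their combined subtree weight exceeds that of $v$, and contradict this using the recursive definition of subtree weight, which already gives the reverse non-strict inequality; this is also why the paper only needs $\geq$ there, as you point out.
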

\begin{proof}
    Assume that $(c_1,v)$ and $(c_2,v)$ are heavy. Then $$
    \w(c_1) + \w(c_2) > \w(v).
    $$
    However, this is contradicted by
    $$
    \w(v)=w(v)+\sum_{u\in c(v)}\w(u)\geq \w(c_1)+\w(c_2).
    $$
\end{proof}

\begin{corollary}
    A node has at most one solid child.
\end{corollary}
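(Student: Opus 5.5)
The approach is a case split on whether $v$ lies on the $u$-to-root path, using the definition of a $u$-exposed heavy path decomposition together with \cref{hp:oneheavy}. Let $P$ denote the $u$-to-root path, and fix a node $v$. A child $c$ of $v$ has $(c,v)$ solid exactly when either $(c,v)$ lies on $P$, or $v=p(c)$ is not on $P$ and $(c,v)$ is heavy. We aim to show that in each case at most one child of $v$ qualifies.

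\emph{Case 1: $v$ is not on $P$.} Then no edge $(c,v)$ lies on $P$. Indeed, if $(c,v)\in E(P)$, its endpoint $v$ would be on $P$. So the first clause of the definition never applies, and $(c,v)$ is solid if and only if it is heavy. By \cref{hp:oneheavy}, $v$ has at most one heavy child, and hence at most one solid child.

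\emph{Case 2: $v$ is on $P$.} Now the second clause never applies, since it requires $p(c)=v$ to be off $P$. So $(c,v)$ is solid if and only if $(c,v)$ lies on $P$. Since $P$ is a path directed from $u$ towards the root, each node of $P$ has at most one predecessor on it. If $v=u$, no child edge of $v$ is on $P$ at all. Otherwise, exactly one child of $v$ precedes it on $P$. This follows from uniqueness of the $u$-to-root path in a tree: $P$ enters $v$ through exactly one edge coming from below. Hence $v$ has at most one solid child.

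Combining the two cases proves the corollary. The only point needing care is that the two clauses of the definition apply in disjoint situations, determined by whether $v=p(c)$ is on $P$, so no node can acquire one solid child from each clause. This disjointness is built into the definition, so I expect no real obstacle.
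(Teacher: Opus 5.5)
Your proof is correct and is essentially the paper's argument. The paper states this as an immediate corollary of \cref{hp:oneheavy} without writing out details, and your case split supplies exactly those details: off the exposed path, solid coincides with heavy, and on the path the only possible solid child is the path predecessor, which follows from uniqueness of the $u$-to-root path rather than from \cref{hp:oneheavy}.
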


\begin{lemma}\label{hp:lightdepth}
    Let $r$ be the root of $T$ and $W=\w(r)$. For any node $v$ in a heavy path decomposition the light depth is at most $\log(\frac{W}{\w(v)})$.
\end{lemma}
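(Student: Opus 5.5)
The plan is to argue that every light edge on the path from $v$ to the root at least doubles the subtree weight, and then to compare the weights at the two ends of the path. Here the \emph{light depth} of $v$ is the number of light edges on the $v$-to-root path.

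Write the path as $v = v_0, v_1 = p(v_0), \dots, v_k = r$. The first step is to note that subtree weights never decrease going up the path. Since $w$ is positive, the definition of $\w$ gives
$$\w(v_{i+1}) = w(v_{i+1}) + \sum_{u \in c(v_{i+1})} \w(u) \geq \w(v_i).$$

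The second step handles the light edges. If $(v_i, v_{i+1})$ is light, the definition of a heavy path decomposition gives $\w(v_i) \leq \w(v_{i+1})/2$, that is, $\w(v_{i+1}) \geq 2\w(v_i)$.

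Next we chain these inequalities along the whole path. Let $\ell$ be the number of light edges on it. Heavy edges contribute a factor of at least $1$ and light edges a factor of at least $2$, so induction on $i$ gives
$$W = \w(r) = \w(v_k) \geq 2^{\ell}\, \w(v).$$
Since $\w(v) > 0$, rearranging yields $\ell \leq \log\bigl(\frac{W}{\w(v)}\bigr)$, with the logarithm taken base $2$.

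There is no real obstacle. The only points needing care are the use of positivity of $w$ to justify the monotonicity in the first step, and the convention that the logarithm is base $2$.
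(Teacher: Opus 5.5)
Your proof is correct and follows essentially the same argument as the paper: each light edge at least doubles the subtree weight, subtree weights do not decrease along the path between light edges, and chaining gives $2^{\ell}\w(v) \le W$. Your chain is slightly tidier than the paper's. You take the non-strict inequality $2\w(v_i) \le \w(v_{i+1})$ directly from the definition of a light edge and run the chain over every edge of the path, whereas the paper asserts a strict inequality and indexes its chain of light children a little loosely.
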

\begin{proof}
    Consider the light edges of the $v$-to-root path and let $l_1,\dots,l_m$ be the light children of said edges. Clearly
    $$
    2\w(l_i) < \w(p(l_i)) \leq \w(l_{i+1}).
    $$
    Applying this inequality along the entire path, we get
    $$
    2^m\w(v)\leq 2^m\w(l_1) < 2^{m-1}\w(l_2) < \dots < \w(l_m) \leq \w(r).
    $$
    Thus $$
    m \leq \log(\w(r)) - \log(\w(v)) =\log\left(\frac{W}{\w(v)}\right).
    $$
\end{proof}

\begin{lemma}\label{hp:dashedepth}
    The dashed depth of a node $v$ in a $u$-exposed heavy path decomposition of $T$ is at most 1 more than the light depth. 
\end{lemma}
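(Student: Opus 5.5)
Fix a node $v$ and compare, edge by edge, the dashed edges on its $v$-to-root path with the light edges on that same path. The dashed depth counts the dashed edges on this path and the light depth counts the light edges, so it suffices to show that all but at most one dashed edge on the $v$-to-root path is light.

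Let $P_u$ be the $u$-to-root path, and take an edge $(x,p(x))$ on the $v$-to-root path. We split into three cases according to where $x$ and $p(x)$ lie relative to $P_u$.
\begin{enumerate}
\item If $(x,p(x))$ lies on $P_u$, then it is solid by definition, so it contributes nothing to the dashed depth.
\item If $p(x)$ is not on $P_u$, then by definition $(x,p(x))$ is solid iff it is heavy. So if it is dashed, it is light.
\item The only remaining case is that $p(x)\in P_u$ but $x\notin P_u$. Such an edge may be dashed while being heavy.
\end{enumerate}

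The main step is to show that case 3 occurs at most once along the $v$-to-root path. The set $P_u$ is closed under taking parents: if $y\in P_u$ then $p(y)\in P_u$. Hence, walking from $v$ toward the root, once the walk enters $P_u$ it stays in $P_u$. Therefore there is at most one edge whose child endpoint lies outside $P_u$ and whose parent endpoint lies inside $P_u$, namely the edge where the $v$-to-root path first meets $P_u$.

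Combining the cases, every dashed edge on the $v$-to-root path is light, with at most one exception. This gives
\[
\text{dashed depth}(v)\le \text{light depth}(v)+1 .
\]
No step here is a real obstacle; the only care needed is the closure-under-parents observation in case 3.
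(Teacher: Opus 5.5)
Your proof is correct and follows the same approach as the paper. Below the point where the $v$-to-root path meets the $u$-to-root path, dashed coincides with light; edges on the $u$-to-root path are solid; and only the single attaching edge $(x,p(x))$ can be dashed while heavy. Your closure-under-parents remark makes explicit why that attaching edge is unique, which the paper leaves implicit.
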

\begin{proof}
    On the path from $v$ to the first node $x$ with a parent on the $u$-to-root path, the notion of light and dashed is equivalent. The path from $p(x)$ is entirely solid. Thus, only the edge $(x,p(x))$ might be heavy but not solid. This means that the dashed depth is at most 1 more than the light depth. 
\end{proof}\noindent
With this, we are ready to describe the internal updates. Suppose we have a heavy path decomposition $T$. We will support the following operations on $T$. Let $u$ be a vertex in $T$ and $r$ the root of $T$.
\begin{itemize}
    \item \V{expose}$(u)$: If $T$ is proper, make $T$ $u$-exposed.
    \item \V{conceal}$(u)$: If $T$ is $u$-exposed, make $T$ proper. 
    \item \V{reverse}$(r)$: If $T$ is $u$-exposed, make $u$ the root and $T$ $r$-exposed.
    \item \V{link-exposed}$(u,v)$: Given node $u$ in a $u$-exposed heavy path decomposition and a root node $v$ in another, make $u$ the dashed parent of $v$ resulting in a new $u$-exposed heavy path decomposition containing the nodes of both trees.
    \item \V{cut-exposed}$(u)$: Where $T$ is $u$-exposed and $u$ is not the root $r$ of $T$, remove the edge $(u,p(u))$ creating two new trees $T_r$ and $T_u$. $T_r$ is $p(u)$-exposed and $T_u$ is $u$-exposed.
\end{itemize}\noindent
To do this, we also need the following 2 internal operations used to edit the tree locally. Let $h(u)$ be the (unique) solid child of $u$ if it exists.
\begin{itemize}
    \item \V{splice}$(u)$: Where $u$ is a light child of its parent. If $(u,p(u))$ is dashed, make it solid. If $p(u)$ has a solid child, make it dashed. 
    \item \V{slice}$(u)$: Where $u$ is a light child of $p(u)$. If $(u,p(u))$ is solid, make it dashed. If $p(u)$ has a heavy child $h(p(u))$, make $(h(p(u)),p(u))$ solid.
\end{itemize}\noindent
We denote this collection of operations the \textit{heavy path operations}.

Splice and slice provide a natural way to implement expose and conceal for some vertex $u$. To expose, split $u$ and then splice every light vertex on the path from $u$-to-root. To conceal slice each light child on the $u$-to-root path and $u$. Note that all solid children are heavy when exposing. Using these, it is easy to see 

\begin{observation}
    Let $\ell(u)$ be the light depth of $u$. Then, on a proper tree, we can expose $u$ using $\ell(u)$ splices. On a $u$-exposed tree, we can conceal $u$ in $\ell(u)+1$ slices.
\end{observation}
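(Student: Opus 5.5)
The observation has two parts, exposing and concealing, and I would prove each by induction along the $u$-to-root path, using \cref{hp:oneheavy} and its corollary so that every \V{splice}/\V{slice} changes only the constant number of edge labels at $p(x)$. Throughout, let $x_0=u, x_1=p(u),\dots,x_k=r$ be the $u$-to-root path. The light edges $(x_i,x_{i+1})$ on this path are exactly $\ell(u)$ in number.

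\textbf{Expose.} Start from a proper decomposition and process the path bottom-up. I would maintain the invariant that after handling $x_0,\dots,x_i$, every edge $(x_j,x_{j+1})$ with $j<i$ is solid. Every edge off the path is solid iff it is heavy, except that the solid child of some $x_{j+1}$ with $j<i$ is now $x_j$.

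Note first that an edge already solid in a proper tree is heavy. So heavy path edges need no work, since they are already solid. For each light edge $(x_i,x_{i+1})$, one \V{splice}$(x_i)$ makes it solid and dashes the previous solid child of $x_{i+1}$. That child was the unique heavy child, by \cref{hp:oneheavy}. This is exactly what the definition of $u$-exposed requires, because an off-path child of an on-path node must be dashed.

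There is one more requirement at $u$ itself: its heavy child is not on the path, so it must be dashed. I would account for this with the initial step the text calls ``split'' $u$, i.e.\ \V{slice} applied at $u$'s heavy child. The light-edge count then gives $\ell(u)$ splices, plus this local cut at $u$. The main thing to check is that the splice count matches the statement exactly, so I would verify that the split at $u$ is a separate primitive and not one of the counted splices.

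\textbf{Conceal.} Start from a $u$-exposed tree and apply \V{slice} to $u$'s position and to each light child $x_i$ on the path. Each \V{slice}$(x_i)$ dashes the solid light edge $(x_i,x_{i+1})$ and re-solidifies the heavy child of $x_{i+1}$, if there is one. By \cref{hp:oneheavy} that heavy child is unique, and it is off the path, since the path child $x_i$ is light.

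At $u$, one further slice is needed: it re-solidifies $u$'s heavy child, which had been dashed because $u$ is the exposed endpoint. This gives $\ell(u)+1$ slices.

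To finish, I would check the invariant that off-path edges in subtrees were never touched, so they remain heavy iff solid. Heavy path edges were solid and heavy, so they are correct in the proper decomposition. The resulting labeling therefore matches the proper heavy path decomposition. The main obstacle is just bookkeeping at the endpoint $u$ (the ``$+1$''), which I would handle by this explicit case distinction.
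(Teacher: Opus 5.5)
Your argument is the same as the paper's: expose cuts $u$'s solid path at $u$ and then performs one \V{splice} per light edge on the $u$-to-root path. Conceal performs one \V{slice} per light edge plus one extra at $u$ to re-solidify $u$'s heavy child, giving $\ell(u)+1$. The only slip is calling the initial step of expose a \V{slice} at $u$'s heavy child $c$: by definition that would dash $(c,u)$ and then immediately re-solidify it, since $c$ is $u$'s heavy child, so it achieves nothing. The step is really the path-level split the paper uses, and, as you suspected, it is not counted among the $\ell(u)$ splices, so the count is unaffected.
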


\begin{lemma}
    All the other operations change the colour of at most one edge.
\end{lemma}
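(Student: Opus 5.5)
We would prove the lemma by going through the three remaining heavy path operations, \V{reverse}, \V{link-exposed} and \V{cut-exposed}, one at a time. In each case we compare the old and new colourings edge by edge, using the fact that the colour of $(v,p(v))$ depends only on three things: the subtree weights $\w(v)$ and $\w(p(v))$, and whether $v$ or $p(v)$ lies on the exposed path. We then check which of these quantities the operation can change. The common idea is that all changes happen on the exposed (solid) path, where colour is forced to be solid whatever the weights. Off that path, each node's subtree, and hence its weight and its parent's weight, is preserved.

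\textbf{Step 1: \V{cut-exposed}$(u)$.} Since $T$ is $u$-exposed, the whole $u$-to-root path is solid. Removing $(u,p(u))$ changes $\w$ only at the ancestors of $u$, i.e.\ only on the $p(u)$-to-root path. In $T_r$, which we want $p(u)$-exposed, every edge on that path stays solid by definition. Now take an edge hanging off this path, $(x,p(x))$ with $p(x)$ on the path. Both before and after the cut it is solid if and only if it is heavy. Before the cut, however, $p(x)$ already had its unique solid child on the path, so such an $x$ was dashed. We must argue that no such $x$ becomes solid, which holds because the exposure rule only makes path edges solid. Hence $x$ stays dashed. Edges inside $T_u$ keep their weights, and $T_u$ is $u$-exposed with a trivial path, so its colouring is exactly the proper one, which it already was below $u$. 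The only edge affected is the removed one.

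\textbf{Step 2: \V{link-exposed}$(u,v)$.} This is the symmetric case. Weights increase only along the $u$-to-root path, whose edges remain solid. The new edge $(v,u)$ is dashed by specification, and $v$'s own subtree weights are untouched, so the colours inside the old tree rooted at $v$ (which was proper) do not change. Again the only "change" is the single new edge.

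\textbf{Step 3: \V{reverse}$(r)$.} Re-rooting at $u$ reverses parent pointers, but only along the $u$-to-$r$ path. Off-path edges keep the same parent and the same subtree, so their weights and colours are unchanged. The path edges were solid before (the path was $u$-exposed) and are solid afterwards (the tree is now $r$-exposed, and the path is the same one). Read this way, no edge changes colour.

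\textbf{Main obstacle.} The hardest point is checking, in Step 1 and Step 3, that an off-path child $x$ of a path node keeps its colour even though $\w(p(x))$ may change. For example, after a cut a light sibling could become heavy relative to a lighter parent. We plan to handle this by noting the precedence in the definition: a child of an exposed-path node is solid only if it is itself on the path. This makes the weight comparison irrelevant for all such edges.
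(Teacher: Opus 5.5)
Your proposal is correct and takes the same route as the paper. \emph{link-exposed} and \emph{cut-exposed} only add or remove the single edge, and \emph{reverse} keeps the $u$--$r$ path solid while the edges hanging off it stay dashed; you additionally check that weight changes, which occur only on the exposed path, cannot alter any other colour. One small correction: an edge $(x,p(x))$ with $p(x)$ on the exposed path and $x$ off it is dashed regardless of weight, not ``solid iff heavy''. For the same reason, the $u$-exposed colouring of $T_u$ is not exactly the proper one, since the heavy child of $u$ stays dashed. This holds both before and after each operation, so your conclusions are unaffected.
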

\begin{proof}
To implement \V{link-exposed} and \V{cut-exposed}, we only need to add or remove a single edge. The argument for reverse is as follows:  If the tree is $u$-exposed, then \V{reversing} the root path does not change the colour of any $r\dots u$ edge, as the resulting tree is $r$-exposed. All the off-path edges are still dashed.
\end{proof}

\subsection{Implementing Heavy Path Decompositions}\label{hp:implement}

To represent heavy paths and dashed children of nodes, we shall use Bent, Sleator and Tarjan's Globally Biased Search Trees \cite{BentSleatorTarjan1985}. Let $P$ be an ordered set and $wt$ a positive weighting of $P$. 

\begin{theorem}\label{hp:fastpath}
\cite{BentSleatorTarjan1985}. There is a data structure that maintains a directed, vertex-weighted path $P$ as a binary search tree. To each internal node, the data structure associates a left and right subpath and the unique edge connecting the two. If $W$ is the total weight in $P$ then the following queries are supported:
\begin{itemize}
    \item \V{Path}$(v)$: For $v\in P$, find the root of the search tree containing $v$ in $\BigO(\log \frac{W}{wt(v)})$
    \item \V{Head}$(P)$,\V{Tail}$(P)$: Find the first and last vertices of $P$ in $\BigO(1)$ time.
\end{itemize}
and the following dynamic operations: 
\begin{itemize}
    \item \V{Reverse}$(P)$: Reverse the direction of $P$ in $\BigO(1)$ time.
    \item \V{Reweight}$(v, wt'_v)$: Change the value of $wt(v)$ to $wt'_v$ in a path with total weight $W$ before and $W'$ after. This is $\BigO(\max(\log \frac{W}{wt(v)}, \log \frac{W'}{wt_v'}))$. 
    \item \V{Concatenate}$(P,Q)$: Let $P$ be a path with $\V{Tail}(P)=u$ and $Q$ another path with $\V{Head}(Q)=v$. Insert the edge $(u,v)$. If $W_P$ and $W_Q$ are the weights of $P$ and $Q$ respectively then \V{Concatenate} is
    $$
    \BigO\left(\log \dfrac{\max(W_P,W_Q)}{\max(wt(u),wt(v))}\right).
    $$
    \item \V{Split}$(v)$: Delete the edges of $v$ resulting in up to three new paths. The lower part from \V{Head}$(v)$ up to but not including $v$, $v$ itself and $p(v)$ to $\V{Tail}(v)$
    . If $P=\pathh(v)$ and $W$ is the total weight on $P$ then split is $\BigO\left(\log \frac W {wt(v)}\right).$
\end{itemize}
If the items are ordered by their weight, the following operations are supported using \V{Concatenate} and \V{Split}.
\begin{itemize}
    \item \V{Insert}$(v,p)$: Inserts $v$ into path $p$ in $\BigO(\log \frac{W}{wt(v)})$.
    \item \V{Delete}$(v)$: Deletes $v$ from path $p$ in $\BigO(\log \frac{W}{wt(v)})$.
\end{itemize}
\end{theorem}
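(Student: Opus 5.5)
This statement is imported from Bent, Sleator and Tarjan, so the plan is to reconstruct their globally biased search tree and its analysis rather than derive it from earlier results here.

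\emph{Rank invariant.} Represent $P$ as a biased binary search tree. The leaves are the vertices of $P$ in path order, and each internal node stores the left subpath, the right subpath, and the edge joining them. Give every node $x$ a rank $r(x)$:
\begin{itemize}
\item a leaf $v$ gets $r(v)=\lfloor \log wt(v)\rfloor$;
\item an internal node gets a rank of about $\log$ of the total weight below it.
\end{itemize}
The invariant is that ranks strictly increase along every root path, with the local balance conditions of Bent--Sleator--Tarjan: a child has rank at most that of its parent, and a grandchild strictly less. From this, the depth of a leaf $v$ is $O(r(\mathrm{root})-r(v)+1)=O(\log\frac{W}{wt(v)})$. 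That bound immediately gives \V{Path}$(v)$ by walking up parent pointers.

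\emph{Constant-time operations.} For \V{Head} and \V{Tail}, store at the root pointers to the extreme leaves and keep them updated during concatenation and split. \V{Reverse} is done lazily: each node carries a ``reversed'' bit, the meanings of left and right are swapped when the bit is pushed down, and flipping the root's bit costs $O(1)$.

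\emph{Concatenate and split.} For \V{Concatenate}$(P,Q)$, walk down the right spine of the heavier tree's partner until ranks match, then attach with a new internal node. Only $O(|r(\mathrm{root}_P)-r(\mathrm{root}_Q)|)$ rebalancing steps occur, along the spine to the junction. Bounding this walk by $\log\frac{\max(W_P,W_Q)}{\max(wt(u),wt(v))}$ uses the fact that the spines end at $u$ and $v$ and that ranks along a spine drop to the rank of its endpoint leaf. For \V{Split}$(v)$, walk from $v$ to the root and reassemble the left and right pieces by successive concatenations. The costs telescope because the piece ranks increase along the path, giving $O(\log\frac{W}{wt(v)})$ total. \V{Reweight} is a split around $v$, changing its leaf rank, and two concatenations. \V{Insert} and \V{Delete} are a split at the position of $v$ plus concatenations; the position is found by descent in the stated ordering.

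\emph{Main obstacle.} The hardest step is the telescoping argument for split. One must show that successive concatenations of increasingly heavy pieces cost a sum bounded by the rank difference between the root and $v$, rather than $\log\frac{W}{wt(v)}$ per step. I would first try a potential argument: each concatenation costs at most the difference of consecutive piece ranks plus $O(1)$, and the $O(1)$ terms are charged to the depth of $v$, which is already $O(\log\frac{W}{wt(v)})$.
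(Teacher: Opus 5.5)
\textbf{There is a genuine gap: your invariant omits Bent, Sleator and Tarjan's bias condition, and without it neither the Path bound nor the Split bound follows.} The paper gives no proof of this statement; it imports it from Bent, Sleator and Tarjan, so your reconstruction has to carry the whole argument.

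\textbf{The Path bound.} Your rank rules are: leaf rank $\lfloor \log wt(v)\rfloor$, ranks non-decreasing upwards, grandchildren strictly smaller. They give $\mathrm{depth}(v)\le 2(r(\mathrm{root})-r(v))+O(1)$, but nothing in them bounds $r(\mathrm{root})$ by $\log W+O(1)$, because internal ranks are unconstrained from above. The phrase ``an internal node gets a rank of about $\log$ of the weight below it'' cannot serve as a definition. A leaf of weight $2^k$ between two unit leaves sits at depth $2$, and with $r(x)=\lfloor\log w(x)\rfloor$ its grandparent also has rank $k$, which violates strict decrease. Read up to an additive constant, it is an invariant whose maintenance under Concatenate and Split you never describe. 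In Bent--Sleator--Tarjan, internal ranks are free and are tied to weights by a \emph{bias} condition on minor nodes (nodes whose rank is at least two below the parent's). From that condition one proves that a node of rank $k$ has weight at least $2^{k-O(1)}$ below it, and only then does the bound for Path follow.

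\textbf{The Split bound.} The step you single out as the main obstacle in fact fails under your invariant. Take the join you describe: descend the spine of the higher-rank tree to the rank of the other. Then your telescoping sum over the accumulated left pieces $L_1\subseteq L_2\subseteq\cdots$ is $r(L_{\mathrm{last}})-r(L_1)+O(\mathrm{depth}(v))$. Here $L_1$ is the off-path subtree nearest to $v$; it contains the left neighbour $v^-$ and may be arbitrarily light. So you only get $O(\log(W/wt(v^-))+\log(W/wt(v^+)))$, not $O(\log(W/wt(v)))$.

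A concrete instance: let $v$ have weight $2^{k-1}$ with a unit leaf $v^-$ as its sibling, and let the next off-path subtree be a balanced tree on $2^k$ unit leaves. All your invariants hold and $\log(W/wt(v))=O(1)$. Yet merging $v^-$ into that subtree walks its entire right spine, which takes $\Theta(k)$ steps.

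What excludes this configuration is a \emph{global} bias condition: the leaves immediately before and after the subtree of a minor node must themselves be heavy, with rank at least the parent's rank minus one. In the example it forces the last leaf of the big subtree to be heavy, so that spine has length $O(1)$. In general it is what lets the spine walks of a split be charged to rank increases along the ancestors of $v$, and that charging does telescope to $O(\log(W/wt(v)))$.

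\textbf{The Concatenate accounting.} The same condition shows your Concatenate accounting is incomplete. After attaching, minor nodes on both spines acquire a new neighbouring leaf: $v$ on $P$'s right spine and $u$ on $Q$'s left spine. They may need repair down to rank roughly $\max(r(u),r(v))$, not only along the spine you walked to the junction. That repair is what the $\max(wt(u),wt(v))$ in the stated bound pays for.
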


The goal is now to use \cref{hp:fastpath} to represent a heavy path decomposition of $T$. Direct the edges of $T$ towards the root. Then store the heavy paths of $T$ using a biased search tree, where each node is weighted by a to-be-defined weight function $wt$. Let each node store a pointer to its parent and a biased search tree of its children, weighted by $\w$, which we denote the child set. Using the parent pointers it is possible to traverse any $u$-to-root path using repeated \V{Tail} and \V{Path} calls. The heavy path weighting function $wt$ is defined as follows
$$
wt(v) = \begin{cases}
    \w(v) & \text{if $v$ has no solid child} \\
    \w(v)-\w(u) & \text{if $v$ has solid child $u$}.
\end{cases}
$$
The maintenance of this weighting is non-trivial but solvable using standard techniques. Refer to \cite{SleatorTarjan1983} for details. Note that the total weight in the child set of $v$ is $wt(v)$. We can now describe how to implement the heavy path operations. 
\begin{itemize}
    \item \V{splice}$(v)$: To splice a node $v$, first remove $v$ from the child set of $p(v)$. Then split $\pathh(p(v))$ by $p(v)$ and concatenate the upper part, $p(v)$ and $\pathh(v)$. Add the tail of the lower part to the child set of $p(v)$.
    
    \item \V{slice}$(v)$: To slice a node $v$, split $\pathh(v)$ by $p(v)$ and find $h(p(v))$ if it exists. Remove $h(p(v))$ from the child set of $p(v)$ then concatenate the upper part, $p(v)$ and the path of the heavy child. Add $v$ as a dashed child of $p(v)$. 
    \item \V{reverse}$(r)$: To reverse the root-to-$u$ path of a $u$-exposed tree use reverse on $\pathh(r)$. 
    \item \V{link-exposed}$(u,v)$: Add $v$ as a dashed child of $u$ and reweight $u$ according to the new weight.
    \item \V{cut-exposed}$(u)$: Split the root-path of $T$ at the edge connecting $u$ to $p(u)$. 
    
\end{itemize}

With this we see that the heavy path operations can be implemented using a logarithmic\footnote{To expose and conceal, we need $\BigO(\ell(v))=\BigO(\log n)$ splices and slices.} amount of search tree operations. We analyse the specific running times of each operation.
\begin{observation}
    For a path $P$, the total weight $W_P$ is $\w(\V{Tail}(P))$.
\end{observation}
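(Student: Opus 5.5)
The claim is that the total weight of a heavy path $P$, i.e.\ $\sum_{v\in P} wt(v)$, equals $\w(\V{Tail}(P))$. The plan is a telescoping sum along the path, unfolding the definition of $wt$ node by node. Since edges are directed towards the root, write the path from head to tail as $v_1, v_2, \dots, v_k$, so $v_k=\V{Tail}(P)$ is closest to the root. Each edge $(v_i,v_{i+1})$ is solid, so $v_i$ is the solid child of $v_{i+1}$ for $1\le i<k$.

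First I will pin down which case of $wt$ applies at each node. By the corollary to \cref{hp:oneheavy}, a node has at most one solid child. For $i\ge 2$ that child is therefore exactly $v_{i-1}$, giving $wt(v_i)=\w(v_i)-\w(v_{i-1})$. For the head $v_1$ I will argue that it has no solid child: if it had one, $c$, then the solid edge $(c,v_1)$ would put $c$ on the same heavy path, and $v_1$ would not be the head. Hence $wt(v_1)=\w(v_1)$.

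Summing then telescopes:
\[
\sum_{i=1}^k wt(v_i)=\w(v_1)+\sum_{i=2}^k\bigl(\w(v_i)-\w(v_{i-1})\bigr)=\w(v_k)=\w(\V{Tail}(P)).
\]

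The only point needing care is matching the heavy paths stored in the biased search trees to the maximal solid paths of the current ($u$-exposed or proper) decomposition. Then "head" means exactly "no solid child", and each interior node's solid child is its predecessor on the path. I will state this correspondence explicitly: it holds because the operations \V{splice} and \V{slice} always concatenate or split paths at precisely the edges whose colour they change. Everything else is routine.
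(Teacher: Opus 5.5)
Your proof is correct and is the telescoping argument the paper leaves implicit by stating this as an observation: by the definition of $wt$, the head contributes $\w(v_1)$ and each later node contributes $\w(v_i)-\w(v_{i-1})$, so the sum collapses to $\w(\V{Tail}(P))$. Your closing remark about the biased search tree representation does no harm but is not needed, since at this point in the paper the heavy paths are by definition the maximal solid paths.
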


\begin{lemma}
    The $v$-to-root path of any node $v$ has length $\BigO(\log \frac{W}{wt(v)})$.
\end{lemma}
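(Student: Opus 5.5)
The plan is to read ``length'' as the number of search-tree steps needed to walk from $v$ to the root of the whole structure. That walk alternates two kinds of step. It climbs inside the biased search tree of a heavy path using \V{Path} and \V{Tail}. It also climbs inside the biased child set of a node, to get from a dashed child to its parent. I will bound each step with \cref{hp:fastpath} and show that the per-step bounds telescope.

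\textbf{Setup.} Let $P_1,\dots,P_k$ be the heavy (solid) paths met on the $v$-to-root walk, with $v\in P_1$. Let $t_i=\V{Tail}(P_i)$ and $x_{i+1}=p(t_i)\in P_{i+1}$, and set $x_1=v$. Each edge $(t_i,x_{i+1})$ is dashed.

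\textbf{Cost per path.} By \cref{hp:fastpath}, locating the root of the search tree of $P_i$ from $x_i$ costs $O(1+\log\frac{W_{P_i}}{wt(x_i)})$. By the observation above, $W_{P_i}=\w(t_i)$. Moving from $t_i$ up to $x_{i+1}$ goes through the child set of $x_{i+1}$. That child set is a biased tree of total weight $wt(x_{i+1})$ in which $t_i$ has weight $\w(t_i)$, so this step costs $O(1+\log\frac{wt(x_{i+1})}{\w(t_i)})$.

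\textbf{Telescoping.} Summing over $i$, the logarithmic terms collapse:
\[
\log\frac{\w(t_1)}{wt(v)}+\log\frac{wt(x_2)}{\w(t_1)}+\log\frac{\w(t_2)}{wt(x_2)}+\dots+\log\frac{\w(t_k)}{wt(x_k)} = \log\frac{\w(t_k)}{wt(v)}=\log\frac{W}{wt(v)}.
\]
Here $\w(t_k)=W$ because $t_k$ is the root of $T$. Each intermediate ratio is at least $1$, since $wt(x_{i+1})\ge \w(t_i)$: the weight $wt(x_{i+1})$ includes $w(x_{i+1})$ and the subtree weights of all dashed children of $x_{i+1}$. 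Therefore no term is negative and the sum is exact.

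\textbf{Additive terms.} What remains is the $O(1)$ paid per step, i.e.\ $O(k)$ in total, where $k$ is one more than the dashed depth of $v$. By \cref{hp:dashedepth} and \cref{hp:lightdepth}, the dashed depth is at most $1+\log\frac{W}{\w(v)}$. Since $\w(v)\ge wt(v)$, this gives $k=O(1+\log\frac{W}{wt(v)})$, and the total is $O(\log\frac{W}{wt(v)})$, with the usual convention that the bound means $O(1+\log\frac{W}{wt(v)})$.

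\textbf{Main obstacle.} The delicate point is that the telescoping needs $wt(x_{i+1})\ge \w(t_i)$ and needs the child-set weights to coincide with the $\w$ values. This depends on the weight invariants of the representation (maintained as in \cite{SleatorTarjan1983}) being exactly as stated, including in a $u$-exposed, non-proper decomposition. There the solid child need not be heavy, but it is still the only child excluded from the child set, so the inequality still holds.
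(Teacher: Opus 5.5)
Your proposal is correct and follows essentially the same route as the paper. You charge each heavy path its \V{Path}/\V{Tail} cost, use that the tail of each path is a dashed child of the next path's entry node so the logarithms telescope to $\log\frac{W}{wt(v)}$, and bound the number of paths (hence the additive constants) by the dashed-depth lemma. The only difference is that you explicitly pay for the hop through the child set of $x_{i+1}$, and its cost $\log\frac{wt(x_{i+1})}{\w(t_i)}$ is exactly the slack the paper absorbs via the inequality $W_{P_i}\le wt(p_{i+1})$ while following a stored parent pointer, so the two sums coincide.
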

\begin{proof}
    We call \V{Path} and \V{Tail} repeatedly to find the root. Let $v=p_1$ and let $p_2,p_3,\dots,p_{k-1}$ be the parents of each path tail and let $P_1,P_ ,\dots,P_k$ be the traversed heavy paths with $P_k$ containing the root. Note that \cref{hp:dashedepth} implies $k\leq1+\log \frac{W}{\w(v)}$. 
    
    Now by \cref{hp:fastpath} each pair of \V{Path} and \V{Tail} calls are $\BigO(\log\frac{W_{P_i}}{wt(p_i)})$ and since $\V{Tail}(P_i)$ is a dashed child of $p_i$ we observe that 
    $$
    W_{P_i} \leq wt(p_{i+1})
    $$
    Applying this inequality and summing the run times yields
    $$
    \BigO\left(\sum_{i=1}^{k-2}\log\frac{wt( p_{i+1})}{wt(p_i)} + \log\frac{W}{wt(p_{k-1})}\right)=\BigO\left(\log\frac{W}{wt(v)}\right)
    $$
    
\end{proof}

\begin{lemma}\label{spqr:spliceslice}
    \V{splice}$(v)$ and \V{slice}$(v)$ run in time $$
    \BigO\left(\log\left(\frac{\w(\rho)}{wt(p(v))}\right) + \log\left(\frac{wt(p(v))}{\w(v)}\right)\right) = \BigO\left(\log\left(\frac{\w(\rho)}{\w(v)}\right)\right)
    $$
     where $\rho$ is the tail of $\pathh(p(v))$.
\end{lemma}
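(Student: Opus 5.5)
We bound the cost of \V{splice}$(v)$ and \V{slice}$(v)$ by charging each constituent biased-search-tree operation from \cref{hp:fastpath} separately and summing. The operations fall into two groups. The first group acts on the heavy path $\pathh(p(v))$: a \V{Split} at $p(v)$ and a \V{Concatenate} of the upper part, $p(v)$, and the new solid subpath. The second group acts on the child set of $p(v)$: one \V{Delete} and one \V{Insert}, namely removing $v$ (respectively $h(p(v))$) and adding the tail of the lower part (respectively $v$). These operations can also trigger \V{Reweight}s, since $wt(p(v))$ changes when its solid child changes.

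\textbf{First group.} By the observation above, the weight of $\pathh(p(v))$ is $\w(\rho)$. \V{Split} at $p(v)$ therefore costs $\BigO(\log \frac{\w(\rho)}{wt(p(v))})$.

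For \V{Concatenate}, we use that the cost is governed by the larger of the two joined paths, divided by the larger of the two endpoint weights. One of the endpoints is $p(v)$, and we may assume it is concatenated first as a singleton with the upper part. Every intermediate path has weight at most $\w(\rho)$, because the new path still ends at $\rho$ and its tail weight is unchanged.

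A subtlety is that $wt(p(v))$ changes between the old and new solid child. We must check that both values, $\w(p(v))-\w(\text{old solid child})$ and $\w(p(v))-\w(\text{new})$, are comparable to what appears in the bound. For \V{splice} the new solid child is $v$, and $v$ is light, so $\w(v)\le \w(p(v))/2$. This gives $wt'(p(v))\ge \w(p(v))/2 \ge \w(v)$. Hence the new weight costs at most $\log\frac{\w(\rho)}{\w(v)}$, which we split as the two summands of the statement. The \V{Reweight} of $p(v)$ is handled in the same way.

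\textbf{Second group.} The child set of $p(v)$ has total weight $wt(p(v))$, before or after the change. Deleting or inserting $v$ costs $\BigO(\log\frac{wt(p(v))}{\w(v)})$, which is exactly the second term.

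The other inserted or deleted element is the old solid child $h$ (for \V{splice}) or the heavy child $h(p(v))$ (for \V{slice}). Its weight $\w(h)$ could be small, but only when $p(v)$ has no heavy child. In that case we use the bound on the tail of the lower path: its weight is at least the weight of $v$'s competitor. This is the step I expect to be the main obstacle, since a light solid child may have tiny weight. I would first try to argue that the child set's total weight $wt(p(v))$ counts this child, and that when $h$ is the \emph{inserted} element its weight satisfies $\w(h)\ge$ the relevant $wt$ up to a factor of $2$.

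If that fails, the fallback is to charge this element against the final equality. Every node's relevant weight is at least $\w(v)$, or the cost is absorbed by $\log\frac{\w(\rho)}{\w(v)}$ using $\w(v)\le wt(p(v))\le\w(\rho)$.

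\textbf{The equality.} The equality in the statement is immediate: the two logarithms telescope, $\log\frac{\w(\rho)}{wt(p(v))}+\log\frac{wt(p(v))}{\w(v)}=\log\frac{\w(\rho)}{\w(v)}$. This requires $wt(p(v))\ge \w(v)$ so that both terms are nonnegative, which holds because $v$ belongs to the child set of $p(v)$.
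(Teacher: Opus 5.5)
\textbf{Gap: the weight of the other child $h$ moved in or out of the child set.} Your decomposition matches the paper's: \V{Split}, \V{Concatenate} and \V{Reweight} on $\pathh(p(v))$, plus insert/delete in the child set of $p(v)$. Your bounds for the operations involving $v$ and the path are adequate. The gap is the step you flag yourself: the cost of moving $h$ into the child set (the old solid child, in \V{splice}) or out of it (the heavy child $h(p(v))$, in \V{slice}).

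The missing idea is that $h$ is always heavy.
\begin{itemize}
\item In \V{slice}, the extracted child is $h(p(v))$, which is heavy by definition. If no heavy child exists, nothing is extracted at all.
\item In \V{splice}, the displaced solid child is heavy because splice is only invoked inside \V{expose} on a proper decomposition, working bottom-up. The solid child of $p(v)$ has therefore not yet been touched; the paper notes that all solid children are heavy when exposing.
\end{itemize}
Hence $\w(h) > \w(p(v))/2$, which is at least half the total weight of the child set, so inserting or deleting $h$ costs $\BigO(1)$. This is how the paper closes the argument.

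\textbf{Your fallback fails.} Without heaviness, nothing relates $\w(h)$ to $\w(v)$, and no absorption argument via $\w(v)\le wt(p(v))\le\w(\rho)$ can help. Take $p(v)=\rho$ and $\w(v)=\w(p(v))/3$. Let the solid child $h$ be light with $\w(h)=1$, which is possible on the exposed path of a $u$-exposed tree. Then:
\begin{itemize}
\item The claimed bound $\log\frac{\w(\rho)}{wt(p(v))}+\log\frac{wt(p(v))}{\w(v)}$ is about $\log 3=\BigO(1)$.
\item Inserting $h$ into a child set of weight roughly $2\w(p(v))/3$ is only bounded by $\BigO(\log \w(p(v)))$.
\end{itemize}
So the lemma genuinely depends on the heaviness of $h$, which comes from the context in which splice and slice are called.

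Your remark that $\w(h)$ ``could be small only when $p(v)$ has no heavy child'' also conflates the two cases. In \V{slice} that situation involves no child-set operation at all. In \V{splice} the relevant question is whether the current solid child is heavy, not whether some heavy child exists.
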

\begin{proof}
We analyse splice first. Removing $v$ from the child list of $p(v)$ is done using delete. The total weight in the child list is at most $wt(p(v))$. Thus the delete takes $\BigO(\log\frac{wt(p(v))}{\w(v)})$. Splitting $\pathh(p(v))$ is $\BigO(\log \frac{\w(\rho)}{wt(p(v))})$. Concatenating $\pathh(v)$ to that of $p(v)$ is $\BigO(\log\frac{\w(\rho)}{wt(p(v))})$ as $p(v)$ is heavier than $v$. This changes $wt(p(v))$ and constitutes a reweight which is $\BigO(\log\frac{\w(\rho)}{wt(p(v))})$. Inserting the previously solid child into the child list of $p(v)$ is constant, as the child was heavy.
The time taken for slice is almost the same. Splitting and concatenating the relevant paths is identical. Finding and extracting the heavy child takes constant time. In the end, the only difference is adding $v$ as a child of $p(v)$. This is $\BigO(\log\frac{wt(p(v)}{\w(v)})$. 
\end{proof}

\begin{lemma}\label{hp:expose}
    Expose and conceal take time $\BigO(\log(\frac{W}{wt(v)}))$, where W is the total weight in $T$.
\end{lemma}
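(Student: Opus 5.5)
Expose and conceal are sequences of splices and slices along the $v$-to-root path, so I will bound the cost of each step by \cref{spqr:spliceslice} and show the resulting sum telescopes. Let $l_1,\dots,l_m$ be the light (for expose) or dashed (for conceal) children on the $v$-to-root path, ordered from $v$ upward, and let $\rho_i$ be the tail of $\pathh(p(l_i))$ when the operation on $l_i$ is performed. For conceal there is one extra slice at $v$ itself. By \cref{hp:lightdepth} and \cref{hp:dashedepth}, $m\le 1+\log\frac{W}{\w(v)}$.

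\cref{spqr:spliceslice} bounds the cost of the operation at $l_i$ by
\[
\BigO\!\left(\log\frac{\w(\rho_i)}{wt(p(l_i))}+\log\frac{wt(p(l_i))}{\w(l_i)}\right).
\]
I will sum the two terms separately. For the first term, I process $l_1,\dots,l_m$ bottom-up. Since $l_{i+1}$ lies on the path above $p(l_i)$, the tail $\rho_i$ is an ancestor of $p(l_i)$ lying below or equal to $l_{i+1}$. Hence $\w(\rho_i)\le \w(l_{i+1})$, and also $\w(l_{i+1})\le wt(p(l_{i+1}))$, because $l_{i+1}$ is a dashed child of $p(l_{i+1})$ at the moment it is processed. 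So consecutive terms chain: $\log\frac{\w(\rho_i)}{wt(p(l_i))}\le \log\frac{wt(p(l_{i+1}))}{wt(p(l_i))}$. Interleaving with the second terms, $\log\frac{wt(p(l_i))}{\w(l_i)}$ and $\log\frac{\w(l_{i+1})}{wt(p(l_{i+1}))}\le 0$, the whole sum telescopes to at most
\[
\log\frac{W}{\w(l_1)}+\log\frac{\w(l_1)}{wt(v)}\;\le\;\log\frac{W}{wt(v)},
\]
up to the additive $\BigO(1)$ per step hidden in the $\BigO$. The initial split at $v$ (for expose) or the final slice at $v$ (for conceal) costs $\BigO(\log\frac{W}{wt(v)})$ directly by \cref{hp:fastpath}.

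The main obstacle is twofold. First, the weights $wt$ change as splices are performed, so I must check that the weights entering each term are the ones current at that moment. Doing the operations bottom-up should ensure that the quantities relating consecutive steps are unaffected, since $\w$ of a subtree never changes under these operations. Second, each step carries an additive constant. These constants sum to $\BigO(m)=\BigO(1+\log\frac{W}{\w(v)})$, which is absorbed because $\w(v)\ge wt(v)$. I would first try to state the telescoping inequality $\w(\rho_i)\le \w(l_{i+1})\le wt(p(l_{i+1}))$ cleanly, treating expose and conceal symmetrically, since \cref{spqr:spliceslice} gives identical bounds for splice and slice.
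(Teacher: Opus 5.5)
For expose, your argument is essentially the paper's. In a proper tree, at the moment $l_i$ is spliced the tail of $\pathh(p(l_i))$ is exactly $l_{i+1}$ (or $r$). So the splice costs $\BigO(\log\frac{\w(l_{i+1})}{\w(l_i)})$ telescope, the split at $v$ adds $\log\frac{\w(l_1)}{wt(v)}$, and the additive constants are covered by the light-depth bound. The paper also explicitly charges the \V{Path}/\V{Tail} calls used to locate each next light edge; these fit inside the same telescoping terms, and you should say so.

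The genuine gap is conceal, which you assert is symmetric to expose. In a $v$-exposed tree the whole $v$-to-root path is a single solid path with tail $r$. No edge on it is dashed, so the list of $l_i$ you define for conceal is literally empty. The edges that must be sliced are the light \emph{solid} ones. If you slice them bottom-up as proposed, every edge above $l_i$ is still solid when $l_i$ is processed, so $\rho_i=r$ and $\w(\rho_i)=W$. Your key inequality $\w(\rho_i)\le\w(l_{i+1})$ is then false. Since $\w(l_i)\le\w(p(l_i))/2\le\w(l_{i+1})/2$, the per-slice bound $\log\frac{W}{\w(l_i)}$ is at least $m-i+1$, and the bounds sum to $\Omega(m^2)$. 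That is quadratic in the target when $m=\Theta(\log\frac{W}{\w(v)})$.

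To telescope you must slice top-down. Slicing the highest light edge $(l_m,p(l_m))$ detaches $v\dots l_m$ as its own solid path with tail $l_m$, so the next slice is charged against $\w(l_m)$. In general the slice at $l_i$ then costs $\BigO(\log\frac{\w(l_{i+1})}{\w(l_i)})$. But this requires finding the topmost light edge of a solid path, which cannot be read off from dashedness. You need extra information in the biased search tree, e.g.\ each internal node records whether its subpath contains a light edge, so that a descent finds it at no more than the cost of the split that follows. This is exactly the ``additional detail'' the paper defers to Sleator and Tarjan, and your proposal neither identifies nor supplies it.

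Separately, your justification of $\w(l_{i+1})\le wt(p(l_{i+1}))$ via dashedness does not apply in conceal. The inequality survives only because $l_{i+1}$ is the light solid child, so $wt(p(l_{i+1}))=\w(p(l_{i+1}))-\w(l_{i+1})\ge\w(l_{i+1})$.
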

\begin{proof}
\input{tex/figures/expose}
    We begin with expose. As we have already shown the light depth of any node to be $\BigO(\log\frac{W}{\w(v)})$, we only need to show that the accumulated running times of splice and slice do not exceed this. Let $v$ be the node to expose and $r$ the root of $T$. Let $P_0=\pathh(v)$ and $P_1, P_2\,\dots P_k$ be the remaining solid paths on the root path and $\rho_i$ the tails of said paths. Refer to \autoref{fig:expose} for a graphical representation. 
    Expose will call splice on $\rho_0,\rho_1,\dots,\rho_{k-1}$ in sequence. Thus, the total time used in splice is 
    $$
    \BigO\left(\log\frac{\w(\rho_1)}{\w(\rho_0)} + \log\frac{\w(\rho_2)}{\w(\rho_1)} + \dots + \log\frac{\w(\rho_k)}{\w(\rho_{k-1})}\right) = \BigO\left(\log\frac{\w(\rho_k)}{\w(\rho_0)}\right)
    $$
    The final detail is the time to find the next light edge. Given $v$ we can find $P_0$ in $\BigO(\log\frac{\w(\rho_0)}{wt(v)})$ and given $\rho_{i-1}$ we can find $\rho_i$ in $\BigO(\log\frac{wt(p(\rho_{i-1}))}{\w(p_{i-1})} + \log\frac{\w(\rho_i)}{wt(p(\rho_{i-1}))})=\BigO(\log \frac{\w(\rho_i)}{\w(\rho_{i-1})})$. Thus, the traversal of the light edges is no slower than splicing and slicing on the respective nodes. The only exception is $v$, which must be taken into account. By noting that $\rho_k=r$, we find that expose is
    $$
    \BigO\left(\log\frac{W}{wt(v)}\right).
    $$
The analysis is analogous for conceal, but requires an additional detail to find light edges on the solid path see \cite{SleatorT85} for details.
\end{proof}

With expose and conceal, the correctness and time bounds of link, cut and reverse also follow.
\begin{theorem}
    There exists a data structure which maintains a weighted heavy path decomposition of a directed tree $T$ while supporting \V{expose}, \V{conceal}, \V{reverse}, \V{link-exposed} and \V{cut-exposed} in $\BigO(\log \frac{W}{wt(v)})$ time.
\end{theorem}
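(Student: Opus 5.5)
The theorem assembles the earlier lemmas, so my plan is to implement each of the five operations with the procedures described above and charge their cost to \cref{hp:expose} and \cref{hp:fastpath}. \V{expose} and \V{conceal} are exactly \cref{hp:expose}, so nothing new is needed there.

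For the remaining three operations, I would first show that each, applied to a $u$-exposed tree, performs only a constant number of biased search tree operations. Each of these operations touches either the root path $\pathh(u)=\pathh(r)$ or the child set of $u$.
\begin{itemize}
\item \V{reverse}$(r)$ calls \V{Reverse} on $\pathh(r)$, costing $\BigO(1)$. Correctness follows from the argument in the lemma on edge colours: after re-rooting at $u$, the old root path is still the entire solid path, and every off-path edge stays dashed.
\item \V{link-exposed}$(u,v)$ inserts the root $v$ into the child set of $u$ with weight $\w(v)$, at cost $\BigO(\log\frac{wt(u)}{\w(v)})$. It then reweights $u$ on its path; since $wt(u)$ only increases, \cref{hp:fastpath} bounds this by $\BigO(\log\frac{W'}{wt(u)})$.
\item \V{cut-exposed}$(u)$ is a single \V{Split} of the root path at the edge $(u,p(u))$, costing $\BigO(\log\frac{W}{wt(u)})$. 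It leaves $u$ as the head of a solid path in $T_u$ and $p(u)$ as the tail of $T_r$'s root path, so both results are exposed as required.
\end{itemize}
One subtlety concerns weights. Because $u$ is exposed, $u$ is the head of the root path, so $wt(u)=\w(u)$. I would also verify that the $wt$ values of the ancestors of $u$ are maintained implicitly: the $wt$ of a node on the solid path depends only on its solid child, so splitting or concatenating the solid path adjusts $wt$ by a single reweight at the boundary node.

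Next I would confirm that the tree produced by each operation is again a valid exposed decomposition. The edge-colour lemma guarantees that at most one edge changes colour. The newly linked edge is dashed, which is correct because $v$'s parent $u$ is on the $u$-to-root path, and by definition only root-path edges are forced solid there. Thus no heavy/light reclassification is needed until the next \V{conceal}.

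The step I expect to be the main obstacle is the bound for \V{conceal}. As the proof of \cref{hp:expose} notes, finding the heavy child $h(p(v))$ during a slice requires locating the heaviest element of a child set, which is not immediately a biased-tree primitive. My first approach is to observe that a heavy child has weight exceeding half the total weight of its child set, so it sits at depth $\BigO(1)$ in the biased search tree and can be found in constant time by descending toward the heavier side. With that in place, the telescoping sum used for expose carries over verbatim.

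Combining these, every operation costs $\BigO(\log\frac{W}{wt(v)})$, which proves the theorem.
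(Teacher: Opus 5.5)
Your treatment of \V{reverse}, \V{link-exposed} and \V{cut-exposed} is fine, and it follows the paper's route with the details written out. The paper only observes that, once \V{expose} and \V{conceal} are available, each of these is a single biased-tree operation on the root path or on $u$'s child set. Your bookkeeping fills this in: $wt(u)=\w(u)$ because $u$ heads the root path, $wt$ is unchanged by \V{Reverse}, and the new edge is correctly dashed.

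The gap is in \V{conceal}, where you misread the caveat in \cref{hp:expose}. Finding $h(p(v))$ is not the issue: the paper's slice analysis already treats it as constant time, and your depth argument only confirms that. The caveat concerns finding the light edges \emph{on the solid path}. After \V{expose}$(u)$, the whole $u$-to-root path is a single biased search tree. Unlike in \V{expose}, its light edges are not tails of distinct solid paths reachable by \V{Path}/\V{Tail} calls, and walking the path node by node costs time linear in its length. Your argument gives no way to locate the edges to slice, so the claim that the expose telescoping ``carries over verbatim'' is unsupported.

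What is needed is a search inside the root path's biased tree. For $x$ on a solid path with solid child $c$ we have $\w(x)=wt(x)+\w(c)$. Hence $(c,x)$ is light iff $\w(c)\le wt(x)$, i.e.\ iff the item $x$ weighs at least as much as all items below it on the path. You must show that these at most $\log\frac{W}{wt(u)}$ items can be found within an $\BigO(\log\frac{W}{wt(u)})$ budget.

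The order of slicing also matters. Slicing top-down makes each split act on a path whose weight is that of the previously sliced light child, so the split costs telescope to $\BigO(\log\frac{W}{\w(u)})$. Splitting bottom-up on the full-weight path only yields an $\BigO(\log^2\frac{W}{\w(u)})$ bound.

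If you simply invoke \cref{hp:expose} for \V{conceal}, as your opening paragraph does, you are at the paper's level of rigour, since the paper also defers this step to Sleator and Tarjan. The justification you add, however, does not supply it.
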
\noindent

Since $wt$ is some internal weighting, the following bound is weaker but easier to conceptualise. Recall that $wt(v)\geq w(v)$. Then for any positive weighting, $w$, of the nodes in $T$ we get:
\begin{corollary}
        There exists a data structure which maintains a weighted heavy path decomposition of a directed tree $T$ while supporting \V{expose}, \V{conceal}, \V{reverse}, \V{link-exposed} and \V{cut-exposed} in $\BigO(\log \frac{W}{w(v)})$ time.
\end{corollary}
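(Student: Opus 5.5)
The corollary should follow from the preceding theorem by a single monotonicity comparison, $wt(v)\geq w(v)$. The theorem bounds every heavy path operation by $\BigO(\log \frac{W}{wt(v)})$. Since $\log$ is increasing and $W>0$, the map $x\mapsto \log\frac{W}{x}$ is decreasing on positive reals. So once we know $wt(v)\geq w(v)>0$, we get $\log\frac{W}{wt(v)}\leq \log\frac{W}{w(v)}$, and the stated bound follows. The data structure and the operations are unchanged; only the bound is weakened.

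It remains to verify $wt(v)\geq w(v)$ from the definition of the internal weighting. There are two cases.

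\textbf{Case 1: $v$ has no solid child.} Then $wt(v)=\w(v)=w(v)+\sum_{u\in c(v)}\w(u)\geq w(v)$, because every term $\w(u)$ is positive.

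\textbf{Case 2: $v$ has a solid child $u$.} Then $wt(v)=\w(v)-\w(u)=w(v)+\sum_{u'\in c(v)\setminus\{u\}}\w(u')\geq w(v)$. Here we use that $u$ appears exactly once among the children of $v$. By the corollary to \cref{hp:oneheavy}, $v$ has at most one solid child, so $wt$ is well defined.

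This equality also matches the earlier remark that the total weight of the child set of $v$ is $wt(v)$, so $wt(v)$ is $w(v)$ plus that remaining child mass. Positivity of $w$, and hence of all $\w$, is assumed in the statement.

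The only point that needs care is that $W$ denotes the same total weight $\w(r)=w(T)$ in both statements, so that only the denominator changes. This holds: $W$ is the total weight in $T$ in \cref{hp:expose} and in the theorem, and it does not depend on the exposed/concealed state. For \V{link-exposed} and \V{cut-exposed}, the bound is read with $W$ the larger total weight before or after the update, exactly as in the theorem. The inequality $wt(v)\geq w(v)$ holds in every intermediate state of the decomposition, whichever edges are solid. I expect no real obstacle here; the argument is a direct substitution.
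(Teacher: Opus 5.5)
Your proposal is correct and follows the same route as the paper: the corollary comes from the preceding theorem using only the inequality $wt(v)\geq w(v)$ and the fact that $x\mapsto\log\frac{W}{x}$ is decreasing, with the data structure itself unchanged. Your case check of $wt(v)\geq w(v)$ against the definition of $wt$ makes explicit what the paper only recalls, and it is accurate.
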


%% file: tex/figures/expose.tex
\begin{figure}
    \centering
\includegraphics[width=0.40\linewidth]{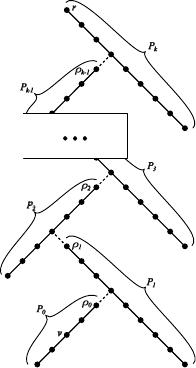}
    \caption{The layout of the paths towards the root when exposing some node $v$.}
    \label{fig:expose}
\end{figure}

%% file: tex/20incremental.tex
\section{Incremental Triconnectivity for Biconnected Graphs}\label{spqr}
In this section, we will construct a decomposition of biconnected graphs into a tree of their triconnected components. These were defined in their modern form by Hopcroft and Tarjan \cite{Hopcroft1973}. We will use a definition of Holm and Rotenberg \cite{HolmRotenberg2019} as it is easier to work with, however an easier to follow definition is given by Di Battista and Tamassia \cite{Battista1996} in their original work on SPQR-trees.

It is well known that all biconnected graphs can be constructed from $K_2^2$ by subdivision of edges and edge insertions (see e.g \cite{Whitney1932}) thus these dynamic operations allow us to span the entire set of biconnected graphs incrementally while supporting \V{Triconnected} queries. We can now move on to the data structure.

\subsection{SPQR-trees}

\input{tex/figures/spqr-example}

\begin{definition}\cite{DecrementalSPQR,HolmRotenberg2019}\label{spqr:props}
    The (strict) SPQR-tree $\Tree(G)$ for a biconnected graph $G$ with at least 3 edges is a tree with nodes labelled S, P and R. Each node $\mu$ has an associated \textit{skeleton graph} $\Gamma(\mu)$ with the following properties:
    \begin{enumerate}[label=(\roman*)]
        \item $V(\Gamma(\mu))\subseteq V(G)$.
        \item For any edge $e\in E(G)$ there is a unique node $\mu \in \Tree(G)$ with $e\in\Gamma(\mu)$
        \item If $x$ is an S-node, then $\Gamma(x)$ is a simple cycle with at least 3 edges
        \item If $x$ is a P-node, then $\Gamma(x)$ is a pair of vertices with at least 3 parallel edges.
        \item If $x$ is an R-node, then $\Gamma(x)$ is a simple triconnected graph.
        \item\label{spqr:partite} No two S-nodes are neighbours and no two P-nodes are neighbours.
        \item For every edge $e=(x,y)$ in the SPQR-tree, $V(\Gamma(x)) \cap V(\Gamma(y))$ is a separation pair $\{s,t\}$ in $G$ between the skeletons on either side of $(x,y)$. The virtual edge $(s,t)$ is in both $\Gamma(x)$ and $\Gamma(y)$. We say that $(x,y)$ is marked by $\{s,t\}$.  
        \item Any edge $e \in E(\Gamma(\mu))$ is either virtual or non-virtual. If an edge is non-virtual $e\in E(G)$.
    \end{enumerate}
\end{definition}
The SPQR-tree is unique (see e.g. \cite{Battista1996}). The skeletons of the tree nodes are also referred to as the \textit{triconnected components} of $G$. A \textit{relaxed} SPQR-tree is a tree that satisfies all but \cref{spqr:partite}.

\begin{lemma}
    \cite{Battista1996} For any SPQR-node $\mu$, $\Gamma(\mu)$ is homeomorphic to a subgraph of $G$. Also any 2-cut in $\Gamma(\mu)$  is a 2-cut in $G$.
\end{lemma}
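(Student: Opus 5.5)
We prove both claims of the lemma by induction along the SPQR-tree. For every tree edge $(\mu,\nu)$ marked by $\{s,t\}$, removing it splits $\Tree(G)$ into two subtrees. Let $G_\nu$ be the subgraph of $G$ formed by the non-virtual edges in the skeletons of the subtree containing $\nu$. This graph contains both $s$ and $t$.

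\textbf{Step 1: $G_\nu$ has an $s$--$t$ path avoiding $G_\mu$.} By property (vii), $\{s,t\}$ separates the two sides, so $G_\nu$ attaches to the rest of $G$ only at $s$ and $t$. We show that $G_\nu$ contains an $s$--$t$ path by induction on the size of the subtree. In $\Gamma(\nu)$, the virtual edge $st$ lies on a cycle or a parallel class of that skeleton: S-skeletons are cycles, P-skeletons have at least 3 parallel edges, and R-skeletons are triconnected. Hence $\Gamma(\nu)-st$ contains an $s$--$t$ path. Each virtual edge on this path corresponds to a deeper tree edge, and by the induction hypothesis it can be replaced by a path in the corresponding deeper subgraph. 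These deeper subgraphs are edge-disjoint, because every edge of $G$ lies in exactly one skeleton by property (ii). They also meet only at their attachment pairs, since any other shared vertex would contradict the separation property. So the concatenation is a path of $G_\nu$.

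\textbf{Step 2: homeomorphism.} Replace each virtual edge $e$ of $\Gamma(\mu)$ by the path obtained in Step 1 on the side of $e$ away from $\mu$. Distinct virtual edges of $\Gamma(\mu)$ correspond to distinct subtrees of $\Tree(G)-\mu$. By the separation property, these subtrees' graphs share vertices only within $V(\Gamma(\mu))$, and only at their own endpoints. Therefore the replacement paths are internally disjoint from each other and from $\Gamma(\mu)$. The result is a subdivision of $\Gamma(\mu)$ contained in $G$. The main obstacle is this disjointness claim, which relies on property (vii) applied to every tree edge on the paths between subtrees.

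\textbf{Step 3: 2-cuts lift.} Let $\{a,b\}$ be a 2-cut of $\Gamma(\mu)$, and let $x,y$ be skeleton vertices in different components of $\Gamma(\mu)-\{a,b\}$. Suppose some path in $G-\{a,b\}$ connects $x$ and $y$. Split this path into its maximal segments outside $V(\Gamma(\mu))$. Each such segment lies inside one $G_\nu$ hanging off a virtual edge $st$, so its endpoints are in $\{s,t\}$. We can therefore replace the segment by the virtual edge $st$ in $\Gamma(\mu)$, and this edge avoids $a,b$ because the segment's endpoints do. This produces an $x$--$y$ walk in $\Gamma(\mu)-\{a,b\}$, a contradiction. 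Hence $\{a,b\}$ separates $G$.
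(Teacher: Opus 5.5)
Your argument is correct. The paper gives no proof of this lemma; it simply cites Di Battista and Tamassia. There the statement is essentially built into the recursive construction: each virtual edge stands for the pertinent graph of a child, which contains a path between the two poles and meets the rest of $G$ only at them.

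You use the same substitution idea, but derive everything from properties (i)--(viii) of the axiomatic definition the paper actually uses. Step~1 reproves the pertinent-graph fact by induction over the tree. It uses (ii) for edge-disjointness, (vii) for vertex-disjointness away from the poles, and (iii)--(v) to get an $s$--$t$ path in $\Gamma(\nu)$ after deleting the marked virtual edge. What this buys is a self-contained check that the paper's definition is enough to support the cited lemma.

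One assumption should be stated explicitly: the virtual edges of $\Gamma(\mu)$ are in one-to-one correspondence with the tree edges at $\mu$. The paper uses this implicitly (e.g.\ in the description of \emph{split}), but it is not among (i)--(viii) as written. Without it Step~2 fails: a virtual edge with no tree edge has no replacement path, and two virtual edges sharing one tree edge would receive overlapping paths.

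There is also a small omission in Step~3. The pieces of the path between consecutive skeleton vertices can be single edges of $G$ that lie in a deeper skeleton rather than in $\Gamma(\mu)$, for instance a real edge $st$ sitting in an adjacent P-node. Your segment argument does not cover these. The same pole argument shows their endpoints are exactly the marking pair of that subtree, so they are likewise replaced by the corresponding virtual edge.
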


\begin{lemma}\label{spqr:connected}
    The nodes containing a vertex $v$ form a connected component of $\Tree(G)$.
\end{lemma}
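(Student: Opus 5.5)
We prove \cref{spqr:connected} by contradiction, using only the separation-pair property of tree edges from \cref{spqr:props}. Suppose the set $N_v$ of tree nodes whose skeleton contains $v$ does not induce a connected subtree of $\Tree(G)$. Pick $\mu,\nu\in N_v$ lying in different components of $\Tree(G)[N_v]$. Choose the pair so that the tree path between them, $\mu=\mu_0,\mu_1,\dots,\mu_k=\nu$ with $k\ge 2$, has all interior nodes outside $N_v$. Such a pair exists: walk along any path joining two components and take consecutive members of $N_v$ on it.

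Consider the tree edge $e=(\mu_0,\mu_1)$, marked by the separation pair $\{s,t\}=V(\Gamma(\mu_0))\cap V(\Gamma(\mu_1))$. Since $\mu_1\notin N_v$, we have $v\notin\{s,t\}$. Removing $e$ splits $\Tree(G)$ into two subtrees: $A$, containing $\mu_0$, and $B$, containing $\mu_1,\dots,\mu_k$. Let $G_A$ and $G_B$ be the subgraphs of $G$ formed by the non-virtual edges in skeletons of $A$ and $B$, respectively.

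The key step is to establish the standard fact underlying property (vii): $V(G_A)\cap V(G_B)=\{s,t\}$, where each skeleton vertex of a node in $A$ is in $V(G_A)$ (and similarly for $B$). This is exactly the meaning of ``$\{s,t\}$ is a separation pair between the skeletons on either side''.

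To see that every skeleton vertex is realized by real edges on its own side, invoke the preceding lemma: each skeleton is homeomorphic to a subgraph of $G$. More concretely, every virtual edge of a skeleton in $A$ other than $(s,t)$ is realized by a path of real edges lying in the subtree hanging behind it, which is still inside $A$. Hence every vertex of a skeleton in $A$ is an endpoint of an edge of $G_A$, and likewise for $B$.

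Now the contradiction follows. Vertex $v$ lies in $\Gamma(\mu_0)$, so $v\in V(G_A)$. It also lies in $\Gamma(\mu_k)$, so $v\in V(G_B)$. Therefore $v\in V(G_A)\cap V(G_B)=\{s,t\}$, contradicting $v\notin\{s,t\}$.

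The main obstacle is making the separation claim rigorous. Definition (vii) states it only informally, as ``between the skeletons on either side''. I would read it as saying that $G-\{s,t\}$ has no vertex common to both sides, which the proof uses directly. If a more formal justification is needed, I would argue by induction on the number of tree nodes: contract or merge along the edge $e$ (gluing skeletons at their virtual edges reproduces $G$), and observe that gluing two graphs along a virtual edge $(s,t)$ identifies only $s$ and $t$.
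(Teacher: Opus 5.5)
Your proof is correct and is the same argument as the paper's: a tree edge on the path between two nodes containing $v$ whose marking avoids $v$ would require the separation pair from property (vii) to separate $v$ from itself. Your choice of consecutive members of $N_v$ makes explicit which edge that is, something the paper's one-line proof leaves implicit.

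One small repair is needed in your auxiliary realization step. In skeletons other than $\Gamma(\mu_0)$ and $\Gamma(\mu_1)$, the virtual edge whose realization may leave its own side is the one leading toward $e$, not $(s,t)$; in $\Gamma(\mu_k)$ this is the edge toward $\mu_{k-1}$. Your conclusion $v\in V(G_B)$ still holds, because $v$ is not incident to that edge, since $\mu_{k-1}\notin N_v$.
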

\begin{proof}
    Suppose two nodes of $\Tree(G)$ contain $v$ and don't form a connected component. Any edge separating them is marked by a 2-cut of $G$, this is a contradiction, as no vertex can be separated from itself.
\end{proof}
\begin{lemma}\label{spqr:twiceconnected}
    The nodes containing both $u$ and $v$ form a connected component of $\Tree(G)$.
\end{lemma}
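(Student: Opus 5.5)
We reduce the statement to \cref{spqr:connected}. By \cref{spqr:connected}, the nodes whose skeletons contain $u$ form a subtree $T_u$ of $\Tree(G)$, and likewise the nodes containing $v$ form a subtree $T_v$. The nodes containing both $u$ and $v$ are exactly $V(T_u)\cap V(T_v)$. So it suffices to show that the intersection of two subtrees of a tree is again a subtree (or empty).

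For that, take two nodes $\mu,\nu$ that both lie in $T_u\cap T_v$. A tree has a unique path $\pi$ between $\mu$ and $\nu$. Since $T_u$ is connected and contains $\mu$ and $\nu$, its own $\mu$–$\nu$ path lies inside $T_u$. By uniqueness of paths in $\Tree(G)$, that path is $\pi$, so $\pi\subseteq T_u$. The same argument gives $\pi\subseteq T_v$. Hence every node of $\pi$ contains both $u$ and $v$, and the set is connected.

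Alternatively, one can mimic the proof of \cref{spqr:connected} directly. Suppose $\mu,\nu$ both contain $\{u,v\}$ and some node $\lambda$ on the path between them does not. Then $\lambda$ misses, say, $u$. Some tree edge on the path, incident to the part containing $\mu$ or to $\lambda$, is marked by a separation pair $\{s,t\}$ with $u\notin\{s,t\}$. Such a pair separates the copies of $u$ on the two sides, which is a contradiction.

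There is no real obstacle here. The only care needed is the case where the set is empty or a single node, which is trivially connected, and to note that connectivity is meant in the induced-subgraph sense of $\Tree(G)$. This is exactly the sense in which \cref{spqr:connected} is stated.
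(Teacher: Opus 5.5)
Your proof is correct and follows the paper's proof. Both reduce to \cref{spqr:connected} and observe that the nodes in question are the intersection of two subtrees, which is connected. The paper only cites that last fact as well known, while you prove it with the unique-path argument.
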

\begin{proof}
    By the previous lemma, the nodes containing $u$ and $v$ respectively form two subtrees $\Tree_1$ and $\Tree_2$. Clearly the nodes containing both is $V(\Tree_1) \cap V(\Tree_2)$. It is well known that the intersection of two trees is connected.
\end{proof}

\begin{lemma}
    In any S or R-node, there is at most one virtual edge connecting $u$ and $v$
\end{lemma}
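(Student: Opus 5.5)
We argue by contradiction, using only the skeleton properties in the definition of the SPQR-tree. For an S-node the claim is almost immediate. $\Gamma(\mu)$ is a simple cycle with at least $3$ edges, so it has no parallel edges at all. In particular, two distinct edges (virtual or not) cannot both join $u$ and $v$. Likewise, if $\mu$ is an R-node, $\Gamma(\mu)$ is by definition a simple triconnected graph, so again no two distinct edges of $\Gamma(\mu)$ share the endpoints $u,v$. So in both cases there is at most one edge of any kind between $u$ and $v$, and a fortiori at most one virtual edge.

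The only subtlety is whether "simple" in the definition is meant to include the virtual edges, that is, whether the skeleton as a multigraph with its virtual edges is simple. If it is, the argument above is complete. If one reads simplicity as a statement only about the underlying graph, I would add a structural argument. Suppose $\Gamma(\mu)$ contained two virtual edges $(u,v)$. They correspond to two distinct tree edges $(\mu,\nu_1)$ and $(\mu,\nu_2)$, both marked by $\{u,v\}$. By \cref{spqr:twiceconnected}, the nodes containing both $u$ and $v$ form a connected subtree, and it contains $\mu,\nu_1,\nu_2$.

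Now apply property (vii). The separation pair $\{u,v\}$ splits $G$ into at least three pieces: the part behind $\nu_1$, the part behind $\nu_2$, and the rest of $\Gamma(\mu)$. The rest of $\Gamma(\mu)$ is nonempty, because an S-node has at least $3$ edges and an R-node is triconnected with more edges. A pair with three or more separation classes is exactly the situation a P-node represents. So the SPQR-tree would have a P-node for $\{u,v\}$ adjacent to $\mu$, carrying a single virtual edge $(u,v)$ in $\Gamma(\mu)$, contradicting the two distinct virtual edges.

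The main obstacle is this last step: making precise, from the listed properties alone, that multiple split components at one pair are always collected into a P-node. I would resolve it by appealing to the uniqueness of the SPQR-tree together with the simplicity clauses (iii) and (v). I expect the intended proof to be just the two-line observation about simplicity from the first paragraph.
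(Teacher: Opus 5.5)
Your first paragraph is correct and is exactly the paper's proof: the definition of the SPQR-tree requires the skeletons of S- and R-nodes to be simple, so no two edges of any kind, virtual or not, can join $u$ and $v$. The fallback argument via P-nodes and uniqueness is unnecessary, and, as you note, not rigorous as written. The definition imposes simplicity on the skeleton $\Gamma(\mu)$ itself, and $\Gamma(\mu)$ contains the virtual edges.
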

\begin{proof}
    Suppose there were two edges connecting $u$ and $v$. As the skeletons of either type are simple, such an edge would violate \cref{spqr:props}.
\end{proof}

\begin{lemma}\label{spqr:uniqueP}
    If more than two SPQR-nodes contain vertices $u$ and $v$, there is a unique P-node containing both.
\end{lemma}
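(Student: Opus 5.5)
By \cref{spqr:twiceconnected}, the nodes of $\Tree(G)$ whose skeletons contain both $u$ and $v$ induce a connected subtree $\Tree_{uv}$. By hypothesis $\Tree_{uv}$ has at least three nodes. I would first show that every tree edge inside $\Tree_{uv}$ is marked by $\{u,v\}$. Let $(x,y)$ be an edge of $\Tree_{uv}$. Both $u$ and $v$ lie in $V(\Gamma(x))\cap V(\Gamma(y))$, and by property (vii) of \cref{spqr:props} this intersection is exactly the marking pair $\{s,t\}$. So $\{s,t\}=\{u,v\}$, and $\Gamma(x)$ and $\Gamma(y)$ both contain the virtual edge $(u,v)$ associated with the tree edge $(x,y)$.

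Next I would use the degree structure. Since $\Tree_{uv}$ is connected with at least three nodes, it has a node $\mu$ of degree at least $2$ within $\Tree_{uv}$. By the first step, $\Gamma(\mu)$ then contains at least two distinct virtual edges between $u$ and $v$, one for each incident tree edge in $\Tree_{uv}$. By the preceding lemma, an S- or R-node has at most one virtual edge connecting $u$ and $v$. Hence $\mu$ must be a P-node, which gives existence.

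For uniqueness, suppose two distinct P-nodes $\mu_1,\mu_2$ both contain $u$ and $v$. Then both lie in $\Tree_{uv}$, and the tree path between them lies in $\Tree_{uv}$ because this subtree is connected. Every interior node of that path has degree at least $2$ in $\Tree_{uv}$, so by the argument above it is also a P-node. If the path has no interior nodes, then $\mu_1$ and $\mu_2$ are adjacent P-nodes. Otherwise $\mu_1$ is adjacent to an interior P-node. Either way two P-nodes are neighbours, contradicting \cref{spqr:partite}. So the P-node is unique.

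The main obstacle is the first step: justifying that a tree edge between two nodes whose skeletons both contain $u$ and $v$ is marked by exactly $\{u,v\}$, and that each such tree edge contributes its own distinct virtual edge to the skeleton. I would read property (vii) as giving an equality $V(\Gamma(x))\cap V(\Gamma(y))=\{s,t\}$, and property (ii) together with the per-edge virtual edges as implying distinctness. If that reading were too loose, I would instead argue directly: two tree edges at $\mu$ correspond to different sides of separations in $G$, so their virtual edges are different edges of $\Gamma(\mu)$.
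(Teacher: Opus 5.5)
Your proof is correct and follows the paper's route. The set of nodes containing both $u$ and $v$ is connected, any node of degree at least two in it carries two $u$-$v$ virtual edges and so must be a P-node, and non-adjacency of P-nodes gives uniqueness; you also make explicit the step that every tree edge in this subtree is marked by $\{u,v\}$ (via property (vii)), and your path argument for uniqueness is a more careful version of the paper's brief connectivity remark.
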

\begin{proof}
    Suppose more than two nodes contain both $u$ and $v$. By \cref{spqr:twiceconnected}, these nodes form a connected component of $\Tree(G)$. The only nodes with degree greater than 1 must be P-nodes as these are the only nodes to admit more than 1 $u,v$ edge. Thus, all S and R-nodes must have a P-node as a neighbour. As no P-node neighbours another, this P-node must be unique for the component to be connected.
\end{proof}

\begin{observation}\label{spqr:star}
    If there are more than two SPQR edges marked with $u$ and $v$ they form a star around the unique P-node $\rho$ with $V(\Gamma(\rho))=\{u,v\}$.
\end{observation}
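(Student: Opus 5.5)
The plan is to derive \cref{spqr:star} from \cref{spqr:twiceconnected} and \cref{spqr:uniqueP}, together with the observation that an SPQR edge marked by $\{u,v\}$ has both endpoints containing $u$ and $v$. Indeed, by property (vii) of \cref{spqr:props}, if $(x,y)$ is marked by $\{u,v\}$ then $V(\Gamma(x))\cap V(\Gamma(y))=\{u,v\}$. So every edge marked $\{u,v\}$ lies inside the set $\mathcal{C}$ of SPQR-nodes whose skeletons contain both $u$ and $v$. By \cref{spqr:twiceconnected}, $\mathcal{C}$ induces a subtree of $\Tree(G)$.

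First I will argue that every edge of this subtree is marked by $\{u,v\}$. If $x,y\in\mathcal{C}$ are adjacent, then $\{u,v\}\subseteq V(\Gamma(x))\cap V(\Gamma(y))$. This intersection is a separation pair, so it has exactly two elements and therefore equals $\{u,v\}$. Consequently, more than two marked edges means that the subtree induced by $\mathcal{C}$ has at least three edges. In particular $|\mathcal{C}|>2$, so \cref{spqr:uniqueP} applies and yields a unique P-node $\rho\in\mathcal{C}$.

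Next I will show that every marked edge is incident to $\rho$. Each edge $(x,y)$ marked $\{u,v\}$ contributes a virtual edge $uv$ to both $\Gamma(x)$ and $\Gamma(y)$, and distinct tree edges contribute distinct virtual edges. An S- or R-node has at most one virtual $uv$ edge (the lemma preceding \cref{spqr:uniqueP}), so such a node has degree at most one in the subtree. Two S/R-nodes cannot be adjacent to each other in the subtree: they would then form a connected component of size two, but the subtree is connected and has more than two nodes. Hence every subtree edge has a P-endpoint, and since $\rho$ is the unique P-node in $\mathcal{C}$, every marked edge is incident to $\rho$. This gives the star.

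Finally, since $\Gamma(\rho)$ is a pair of vertices with parallel edges and contains $u$ and $v$, we get $V(\Gamma(\rho))=\{u,v\}$. The main subtlety, and the step I expect to need the most care, is the claim that the intersection of adjacent skeletons is exactly $\{u,v\}$, so that all subtree edges are indeed marked $\{u,v\}$. This follows from property (vii), since a separation pair has size exactly two.
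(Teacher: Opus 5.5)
Your proof is correct and is essentially the paper's argument. The paper gives no separate proof of this observation and reads it off from the proof of \cref{spqr:uniqueP}, which uses the same ingredients: the nodes containing both $u$ and $v$ form a subtree, S- and R-nodes have at most one virtual $uv$ edge and so have degree at most one there, and hence the unique P-node must be the centre. Your version also makes explicit that every edge of that subtree is marked $\{u,v\}$ because a separation pair has exactly two vertices, which the paper uses only implicitly.
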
\noindent
With this, we can define an important relation between pairs of vertices in $G$.

\begin{definition}\label{spqr:critpath}
    Given some SPQR-tree $\Tree(G)$ let $x,y\in V(G)$. The \textit{critical path}, $m(x,y)$ is then defined as: 
    \begin{enumerate}
        \item If exactly one SPQR-node, $\nu$, contains $x$ and $y$ then $m(x,y)=\nu$.
        \item If exactly two SPQR-nodes, $\mu,\nu$ contain both $x$ and $y$ then they are adjacent and $m(x,y)=\mu\ \nu$.
        \item If more than two SPQR-nodes containing both $x$ and $y$ then $m(x,y)$ is the unique P-node containing both.
        \item If no node contains both $x$ and $y$ then $m(x,y)$ is the unique shortest path $\mu \dots \nu $ in $\Tree(G)$ such that $x \in \Gamma(\mu)$ and $y \in \Gamma(\nu)$.
    \end{enumerate}
\end{definition}
Note that $m(x,y)$ exists and is unique \cite{HolmRotenberg2019}. We call a critical path of type 2. $u,v$-marked. See \cref{fig:spqr-critpath} for an example.

\input{tex/figures/spqr-critpath}

\subsection{Answering queries}

To answer the \V{Triconnected} queries, we will need the following lemmas, which explain how we can infer the triconnectedness of arbitrary nodes from the SPQR-tree.

\begin{lemma}\label{spqr:triconnected}
    A pair of vertices $u$ and $v$ are triconnected if and only if they share a P or R-node.
\end{lemma}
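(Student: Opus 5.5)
We prove the two directions separately, using Menger's theorem in its local form: $u$ and $v$ are triconnected if and only if they are adjacent by multiple edges or no set of at most one vertex together with a further vertex, i.e. no separation pair $\{s,t\}\subseteq V(G)\setminus\{u,v\}$, separates them. Throughout we use that skeletons are homeomorphic to subgraphs of $G$ and that every edge of $\Tree(G)$ is marked by a separation pair of $G$.

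For the \emph{if} direction, suppose first that $u,v$ share an R-node $\mu$. Since $\Gamma(\mu)$ is triconnected, it contains three internally vertex disjoint $u$--$v$ paths. We then replace each virtual edge $(s,t)$ on these paths by an $s$--$t$ path in $G$ that runs through the subgraph on the other side of the corresponding tree edge. Such a path exists because that side, by biconnectivity of $G$, contains an $s$--$t$ path avoiding the rest. Distinct virtual edges of $\Gamma(\mu)$ correspond to distinct subtrees of $\Tree(G)-\mu$, whose expanded graphs share only the vertices of the marking pair. Hence the expanded paths remain internally disjoint.

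If instead $u,v$ share a P-node, then $\Gamma$ has vertex set $\{u,v\}$ with at least three parallel edges. Each edge is either a real edge $uv$ or a virtual edge whose expansion gives a $u$--$v$ path through a distinct subtree. These paths are internally disjoint, which yields three disjoint paths.

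For the \emph{only if} direction, we argue by contraposition. Assume no P- or R-node contains both $u$ and $v$, and consider the critical path $m(u,v)$ of \cref{spqr:critpath}. Case 3 is excluded, since it requires a common P-node. In cases 1 and 2, every node containing both vertices is an S-node, and two S-nodes are never adjacent, so $u,v$ lie in a single S-node $\nu$. Since $\Gamma(\nu)$ is a cycle of length at least 3, we pick one vertex from each of the two $u$--$v$ arcs, at least one of which has an interior vertex. This gives a 2-cut of the cycle separating $u$ from $v$, unless $u$ and $v$ are adjacent on the cycle via a single edge. In that case we argue separately: the edge $(u,v)$ is real or virtual, and if it were virtual the neighbouring node would contain both $u$ and $v$. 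That neighbour must then be a P- or R-node, since it cannot be an S-node, contradicting our assumption. So $(u,v)$ is a single real edge and the other arc has an interior vertex. Removing two vertices of that arc, or one vertex plus the edge, fails to give a vertex cut; here we observe that $u,v$ adjacent with only $2$ disjoint paths is still ``not triconnected'', and one must check the paper's convention for adjacent vertices. By the preceding lemma, any 2-cut of a skeleton is a 2-cut of $G$, so the cut lifts.

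In case 4, we take the first edge of $m(u,v)$ that is marked by $\{s,t\}$. By minimality of the path, $u\notin\{s,t\}$, and similarly $v$ is not in the pair marking the last edge. We then choose a marking pair avoiding both $u$ and $v$. By \cref{spqr:connected}, the nodes containing $u$ lie on one side and those containing $v$ on the other, so this pair separates $u$ from $v$ in $G$.

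The main obstacle is this case analysis in the converse, in particular guaranteeing that some marking pair avoids both $u$ and $v$ when the path has length one, and handling the adjacent-in-S-cycle subcase. For the length-one case, a single tree edge marked $\{s,t\}$ with $u=s$ would put $u$ in both endpoint nodes, contradicting the minimality of $m(u,v)$. This shows that the marking pair avoids $u$, and symmetrically $v$.
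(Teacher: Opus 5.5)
There is a genuine gap in the subcase where $u$ and $v$ are consecutive on the cycle $\Gamma(\nu)$ of the single S-node containing them. It comes from your opening characterization, which is false for adjacent vertices. No vertex set avoiding $u$ and $v$ can separate two adjacent vertices, so under your criterion every pair joined by a single edge would be triconnected. Two consecutive vertices of a cycle graph are a counterexample: the SPQR-tree is a single S-node, so by your criterion the lemma itself would fail.

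The paper fixes the definition explicitly as \emph{three internally vertex-disjoint paths}, so there is no convention left to check. What is missing is a proof that there are at most two such paths; you assert this but do not derive it. The argument is short:
\begin{itemize}
\item You have shown that the cycle edge $uv$ is real. It is the only $u$--$v$ edge of $G$, since $\nu$ is the only node containing both vertices and $\Gamma(\nu)$ is simple.
\item Let $w$ be an interior vertex of the other arc. Then $\{w\}$ separates $u$ from $v$ in $\Gamma(\nu)-uv$.
\item This separation lifts to $G-uv$, because each virtual edge $(a,b)$ of $\Gamma(\nu)$ expands into a subgraph meeting the rest of $G$ only in $\{a,b\}$.
\item Hence every $u$--$v$ path other than the edge itself passes through $w$. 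Three internally disjoint paths would force two of them to share $w$.
\end{itemize}
This is exactly the paper's step that the additional paths \emph{would have to belong to a neighbouring P-node}. Your non-adjacent subcase needs the same expansion argument to lift the cut. The lemma you cite only says a skeleton 2-cut disconnects $G$ somewhere, not that it separates $u$ from $v$ in particular.

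Apart from this, your route is the paper's. You expand virtual edges for the \emph{if} direction, and for the converse you split into no common node, several common nodes, and a single common S-node. Your case-4 argument supplies a justification that the paper only asserts. That case is easier than you suggest. A marking on $m(u,v)$ containing $u$ would place $u$ in a node strictly closer to the $v$-end, and symmetrically for $v$, contradicting minimality. So every marking pair on the path avoids both vertices, whatever the path's length.
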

\begin{proof}
    $\Leftarrow$. Since the skeleton of any P or R-node is triconnected and since they are homeomorphic to a subgraph of $G$, any pair of vertices sharing a P or R-node are triconnected.
    $\Rightarrow$ If $u$ and $v$ are triconnected, they must share at least one node. 
    If there are two or more such nodes, one must be a P or R-node. Suppose then that $u$ and $v$ share a single S-node. The skeleton of this node is a cycle $C$ in $G$, on at least three vertices. Thus, at least one of the two paths on the cycle will contain a 1-cut. If the other does not contain a 1-cut, $u$ and $v$ are neighbours along $C$, and the remaining two paths connecting them will belong to a P-node neighbouring the S-node in question, contradicting our assumption. 
\end{proof}

\begin{observation}
    A pair of vertices $u,v\in G$ share a P or R-node if and only if $m(u,v)$ is $u,v$-marked or $m(u,v)$ is a single P or R-node.
\end{observation}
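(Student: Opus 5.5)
The observation asserts: $u,v$ share a P- or R-node if and only if $m(u,v)$ is $u,v$-marked (type 2 of \cref{spqr:critpath}) or $m(u,v)$ is a single P- or R-node. I will prove it by a case analysis on which of the four types of \cref{spqr:critpath} $m(u,v)$ has. Write $N(u,v)$ for the set of SPQR-nodes whose skeletons contain both $u$ and $v$. By \cref{spqr:twiceconnected}, $N(u,v)$ induces a connected subtree of $\Tree(G)$. The four types correspond exactly to $|N(u,v)|=1$, $|N(u,v)|=2$, $|N(u,v)|\ge 3$ and $|N(u,v)|=0$. The goal in each case is to decide whether $N(u,v)$ contains a P- or R-node.

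\textbf{Forward direction.} Suppose some P- or R-node $\mu$ contains $u$ and $v$, so $N(u,v)\neq\emptyset$ and type 4 is excluded.
\begin{itemize}
\item If $|N(u,v)|=1$, then $N(u,v)=\{\mu\}$ and $m(u,v)=\mu$ is a single P- or R-node.
\item If $|N(u,v)|=2$, then $m(u,v)$ is of type 2, i.e. $u,v$-marked, by definition.
\item If $|N(u,v)|\ge 3$, then \cref{spqr:uniqueP} says $m(u,v)$ is the unique P-node containing both, so again $m(u,v)$ is a single P-node.
\end{itemize}

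\textbf{Reverse direction.} If $m(u,v)$ is a single P- or R-node, it is by definition a node containing both $u$ and $v$, and we are done. It remains to treat the case $m(u,v)=\mu\,\nu$ of type 2, where $\mu,\nu$ are adjacent and both contain $u$ and $v$. By property (vii) of \cref{spqr:props}, $V(\Gamma(\mu))\cap V(\Gamma(\nu))$ is the separation pair marking the edge $\mu\nu$. Both $u$ and $v$ lie in this intersection, so the marking pair is $\{u,v\}$. Suppose neither $\mu$ nor $\nu$ is a P- or R-node. Then both are S-nodes, and two adjacent S-nodes violate \cref{spqr:partite}. Hence one of them is a P- or R-node.

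\textbf{Main obstacle.} The only delicate point is this type-2 step, because it relies on the tree being strict. In a relaxed SPQR-tree, two adjacent S-nodes sharing the virtual edge $uv$ could exist, and the claim would fail. So the argument must explicitly invoke the strict SPQR-tree of \cref{spqr:props}, as the section does. It should also note that the virtual edge $(u,v)$ is present in both skeletons, so the marking pair is exactly $\{u,v\}$ and not merely some pair containing them.

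Combining the observation with \cref{spqr:triconnected}, $u$ and $v$ are triconnected exactly when $m(u,v)$ is $u,v$-marked or a single P- or R-node, which is what reduces \V{Triconnected} to computing the critical path.
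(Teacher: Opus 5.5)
Your proof is correct and is essentially the reasoning the paper leaves implicit; the paper states this as an observation without proof, and the intended argument is to unfold the four cases of \cref{spqr:critpath} by how many nodes contain both $u$ and $v$, where the only non-trivial step is that in the type-2 case the two adjacent nodes cannot both be S-nodes by \cref{spqr:partite} of the strict tree. The detour through the marking pair $\{u,v\}$ is harmless but unnecessary, since the adjacency of $\mu$ and $\nu$ guaranteed by the definition already lets you invoke \cref{spqr:partite}, and your caveat that the statement depends on strictness (it fails in relaxed trees with adjacent S-nodes) is well taken.
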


\subsection{Updates}\label{spqr:updates}
To insert edges, we first require two restructuring primitives: \V{split} and \V{merge}. Let $\mu_1,\mu_2\in\Tree(G)$ be adjacent nodes of the same type and let $\nu$ be an S-node of $\Tree(G)$. Then define
\begin{itemize}
    \item \V{merge}$(\mu_1, \mu_2)$: Contract the tree edge between $\mu_1$ and $\mu_2$ into a new node $\mu$. Let $e$ be the virtual edge representing $(\mu_1, \mu_2)$ in their respective skeleton graphs. Then the skeleton of $\mu$ is  $\Gamma(\mu_1)-e \cup \Gamma(\mu_2)-e$.
    \item \V{split}$(\nu)$: Pick two disjoint paths of $\nu$, $P_1$ and $P_2$. Remove the internal vertices of $P_1$ and $P_2$ from $\Gamma(\nu)$. This splits $\Gamma(\nu)$ into at most 2 disjoint paths. Connect the endpoints of $P_1$ and $P_2$ with virtual edges in $\Gamma(\nu)$ and create two new S-nodes $\nu_1$ and $\nu_2$ with $P_1$ and $P_2$ as their skeletons then add a virtual edge between their endpoints to form a cycle. The neighbours of $\nu$, $\nu_1$ and $\nu_2$ are exactly those corresponding to the virtual edges of their skeleton.
\end{itemize} 

The definition of \V{split} deviates from Di Battista and Tamassia \cite{Battista1996}, but is equivalent for paths of length 1 or 2. Splitting a node $\nu$ by a neighbouring SPQR-node corresponds to choosing a path equal to the virtual edge connecting it to $\nu$. Splitting by a vertex of $\nu$ means choosing one of the paths to be this vertex. An example of split can be seen in \cref{fig:splitmerge}.

With this, we can now describe how to insert an edge in the SPQR-tree. Let $(u,v)$ be an edge to be inserted. If some node contains both $u$ and $v$, the insertion turns out to be relatively simple. If not, we will need to look at $m(u,v)$, which shall be contracted into a single R-node if $u$ and $v$ are linked. That this is necessary follows from the following lemma.
\begin{lemma}[Implicit in \cite{Battista1996}]
    Let G be a graph with SPQR-tree $\Tree(G)$ and let $e=(u,v)$ be an edge not in $G$. Then all pairs of vertices contained in P- and R-nodes on $m(u,v)$ are triconnected in $G+e$. The connective properties of any vertex not contained in a node of this type are unchanged. 
\end{lemma}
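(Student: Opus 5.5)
We prove the two claims separately, relying on the standard correspondence between the SPQR-tree and the 2-cuts of $G$. Every separation pair $\{s,t\}$ of $G$ is either the marking pair of some tree edge, or a pair of non-adjacent vertices on the cycle of some S-node. A pair $x,y$ is triconnected in $G$ exactly when no such 2-cut separates them; this is the content of \cref{spqr:triconnected}.

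\textbf{First claim.} Let $m(u,v)=\mu_1\dots\mu_k$ and let $x,y$ be vertices in a P- or R-node $\mu_i$ on this path. We already know $x,y$ are triconnected in $G$, by \cref{spqr:triconnected}. Triconnectivity is monotone under adding edges: any set of paths witnessing it in $G$ survives in $G+e$. So this part follows at once. The nontrivial case is $x,y$ lying in different nodes $\mu_i,\mu_j$ of the path. For that case the plan is:
\begin{enumerate}
\item Consider a potential 2-cut $\{s,t\}$ of $G+e$ separating $x$ from $y$. It is also a 2-cut of $G$, so it is either a tree-edge marking or an S-node pair.
\item If $\{s,t\}$ marks an edge of $m(u,v)$ between $\mu_i$ and $\mu_j$, then by the minimality in \cref{spqr:critpath} neither $u$ nor $v$ lies in $\{s,t\}$. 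Moreover $u$ and $v$ lie on opposite sides of that tree edge, so $e$ reconnects the two sides and $\{s,t\}$ is no longer a cut.
\item For 2-cuts arising inside an S-node on the path, $u$ and $v$ "enter" that cycle through the two virtual edges to its path-neighbours, or are vertices on the cycle. Hence $e$ crosses every pair of non-adjacent cycle vertices that separates the two entry points.
\item Tree edges off the path keep $u$, $v$, $x$, $y$ on the same side, so they do not separate $x$ from $y$.
\end{enumerate}
With all candidate cuts destroyed, Menger's theorem gives triconnectivity in $G+e$.

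\textbf{Second claim.} Let $z$ be a vertex not contained in any P- or R-node on $m(u,v)$, and let $z'$ be any other vertex. Any 2-cut separating $z$ from $z'$ in $G$ that does not lie on $m(u,v)$ sits on a tree edge or in an S-node off the path, or in an S-node on the path at a pair that $u$ and $v$ do not straddle. In each case $u$ and $v$ lie in a single component of $G-\{s,t\}$, together with the path. Adding $e$ therefore keeps $\{s,t\}$ a separator of $z$ from $z'$, so $z$'s triconnectivity relations are unchanged.

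The main obstacle is step 3 together with its converse in the second claim. We must show precisely that, for an S-node on the path, $e$ destroys exactly the cycle 2-cuts that separate the two attachment points and none of the others. Our approach is to pass to a homeomorphic image of $\Gamma(\mu)$ in $G$: replace each virtual edge by a path through its pertinent subgraph, then argue on a cycle with one chord added. Adding a chord to a cycle triconnects exactly the pairs on opposite arcs of the chord. This is the reason that, after merging, the path contracts to an R-node.
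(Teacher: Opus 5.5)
The paper gives no proof of this lemma; it only attributes it to Di Battista and Tamassia. Your proposal therefore has to stand on its own, and as written it has two genuine gaps.

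\textbf{The second claim is false as you state it.} Take $K_4$ on $w_1,\dots,w_4$ and subdivide $w_1w_2$ by $a$ and $w_3w_4$ by $b$. The SPQR-tree is S--R--S, and for $u=a$, $v=b$ the critical path is the whole tree. The vertex $a$ lies in no P- or R-node of $m(a,b)$. Yet $G+ab\cong K_{3,3}$, so $a$ goes from being triconnected to no vertex (it has degree $2$) to being triconnected to every vertex.

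Indeed, $u$ and $v$ always become triconnected in $G+e$: the edge $e$ plus two disjoint paths in the biconnected $G$. So the literal statement fails whenever $u$ or $v$ lies in no P- or R-node of $m(u,v)$, for example when an end of the path in case~4 is an S-node. The correct version excludes $u$ and $v$. In case~4 of \cref{spqr:critpath}, the insertion makes every pair of vertices of the new node $R^*$ triconnected and changes nothing else, and $V(R^*)$ consists of the vertices of P- and R-nodes on $m(u,v)$ together with $u$ and $v$.

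Your argument fails exactly at the step that carries the content. You show that separating pairs ``not on $m(u,v)$'' survive, but you never show that a non-triconnected pair $z,z'$ is separated by such a pair at all. In the example, the only pairs separating $a$ from $b$ are $\{w_1,w_2\}$ and $\{w_3,w_4\}$, both markings of path edges. For the corrected statement you must prove that, when $z\notin V(R^*)$, some pair separating $z$ from $z'$ survives. Alternatively, if you are willing to cite the correctness of Di Battista and Tamassia's insertion procedure, read both claims off the resulting tree using \cref{spqr:triconnected}.

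\textbf{Adjacent pairs.} The criterion ``$x,y$ are triconnected iff no 2-cut separates them'' is false for adjacent $x,y$, and it is not what \cref{spqr:triconnected} says. Such pairs occur in your first claim. Suppose an internal S-node of the path has cycle $p_1q_1p_2q_2$, with virtual edges $p_1q_1$ and $p_2q_2$ to the R-nodes before and after it, and a real edge $q_1p_2$. Then $x=q_1$ and $y=p_2$ lie in R-nodes of $m(u,v)$ and are not triconnected in $G$, since $p_1$ separates them in $G-xy$; but they are triconnected in $G+e$. No vertex pair separates them in $G$ either, so your concluding appeal to Menger is vacuous: it would equally ``prove'' triconnectivity in $G$. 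You must apply Menger in $G+e-xy$ and exclude a single separating vertex there, or subdivide $xy$ first.

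Two smaller points:
\begin{itemize}
\item Your step~2 uses the minimality of case~4 of \cref{spqr:critpath}, so state that you are in that case. For a $\{u,v\}$-marked critical path between two R-nodes, the first claim is false.
\item In steps~2 and~4, argue with components of $G-\{s,t\}$, not with ``sides'' of a tree edge. At a P-node on the path the same pair marks several tree edges, including off-path ones, and $G-\{s,t\}$ has one component per branch.
\end{itemize}
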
 
\noindent The procedure for inserting an edge into an SPQR tree is as follows.
\begin{itemize}
\item  \V{InsertEdge$(u,v)$}: Find $m(u,v)$ and let $e = (u,v)$.
\begin{enumerate}
    \item If $|m(u,v)|=1$ and it is a P or R-node simply insert $e$ into the skeleton. If the node is an S-node, then we split it between $u$ and $v$ and create a new P-node $\nu$ containing $e$. The two new S-nodes are both connected to the P-node.
    \item If $m(u,v)$ is $u,v$-marked subdivide the single edge of the critical path with a P-node containing $e$.
    \item If no node contains both $u$ and $v$ let $m(u,v) = \mu,\dots, \nu$ be their critical path.
    \begin{enumerate}[label=(\alph*)]
        \item Remove all SPQR-tree edges of $m(u,v)$. In P-nodes, remove the corresponding virtual edge as well. 
        \item Merge all R-nodes of $m(u,v)$ into a new node, $R^*$.
        \item Split all internal S-nodes between their neighbours. If an endpoint is an S-node, split it between its path neighbour and either $u$ or $v$. 
        \item Connect the new S-nodes and the old P-nodes to $R^*$.
        \item If the skeleton of any S- or P-node has $k<3$ edges, merge it into $R^*$.
    \end{enumerate}
\end{enumerate}
\end{itemize}\noindent
What remains now is to describe the subdivision of some edge $(u,v)$. 
\begin{itemize}
    \item \V{Subdivide}$(u,v)$. Let $w$ be the vertex to be inserted. We follow the cases of \cref{spqr:critpath}; note that the presence of $(u,v) \in E(G)$ means that cases 2 and 4 are irrelevant. Let $\chi$ be an S-node containing the cycle $u \ w \ v$ with $(u,v)$ being virtual. 
    \begin{enumerate}
        \item If $m(u,v)=\mu$ is a single node, that is not a P-node, let $(u,v)$ be virtual in $\mu$ and connect $\mu$ to $\chi$. If $\mu$ is an S-node, insert $w$ into the cycle instead.
        \item[3.] If $m(u,v)=\mu$ is a single P-node, connect $\mu$ to $\chi$ and make one of the non-virtual edges virtual.
    \end{enumerate}
\end{itemize}

%% file: tex/figures/spqr-example.tex
\begin{figure}
    \centering
    \includegraphics[width=\linewidth]{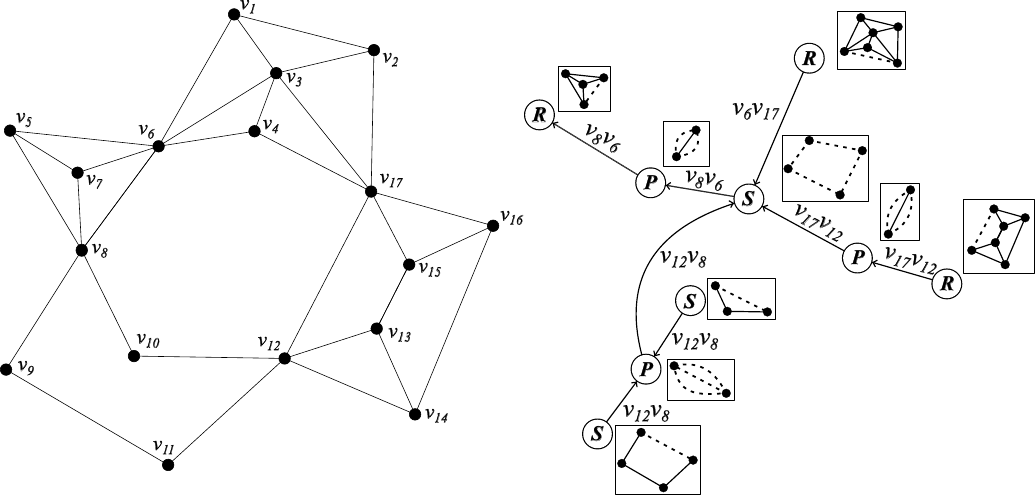}
    \caption{On the left a biconnected graph $G$. On the right the SPQR tree of $G$ with the skeletons pictured. Each edge is marked with the corresponding 2-cut. Note: Based on Figure 2 by Holm et. al. \cite{DecrementalSPQR}}
    \label{fig:spqr-example}
\end{figure}

%% file: tex/figures/spqr-critpath.tex
\begin{figure}
    \centering
    \includegraphics[width=0.5\linewidth]{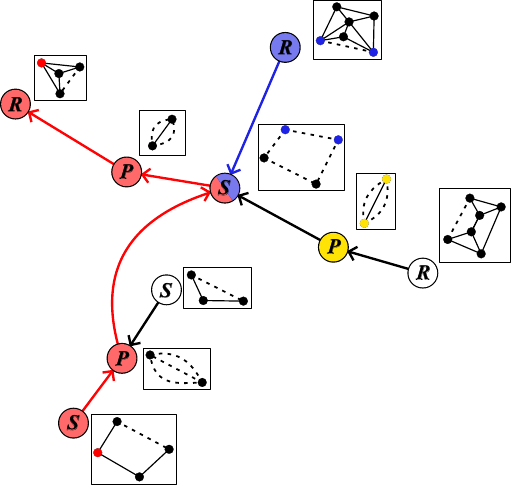}
    \caption{An SPQR tree with three different kinds of critical paths. Blue: Two vertices share exactly two nodes, the path is marked by the blue vertices. Yellow: Two vertices share more than two nodes. Red: Two vertices share no node. }
    \label{fig:spqr-critpath}
\end{figure}

%% file: tex/30IncrementalWorstCase.tex
\section{SPQR-trees as heavy path decompositions}\label{spqr2}

In this section, we will use a heavy path decomposition to implement SPQR-trees with worst-case guarantees. The main challenge in this regard will be to prevent \V{InsertEdge} from requiring a traversal of $m(u,v)$. We will do this by reducing edge insertions to a constant amount of heavy path operations and a contract. This reduction is due to Holm and Rotenberg \cite{HolmRotenberg2019} .

Let $\Tree(G)$ be an SPQR-tree represented by a heavy path decomposition with the nodes marked S and P. With each edge we store its marking and a non-negative value. We will represent R-nodes as connected subtrees of nodes connected by edges with non-zero value. A node in this setting is then either a normal S or P-node or a node with an adjacent, positively valued edge, indicating that said node is part of the representation of an R-node.
The representation of S and P nodes closely resembles the representation discussed in the previous section; however we will defer the details to \cref{spqr:repr}.
For now we shall first define a variant of the SPQR-tree. 

\subsection{Splitting along heavy paths}
Recall that during the insertion of an edge, we split each S-node of the critical path by its neighbours and remove up to 3 edges from every P-node. In this section, we will show how to do this during other operations so that relevant nodes are already split when the time comes to insert an edge. 
We will first formally define how exactly to split each node and then give a mechanism to pre-split all nodes required by any insertion. The choice will essentially follow the cases from the definition of \V{InsertEdge} from \cref{spqr} where we let the heavy paths of $\Tree(G)$ stand in for the critical path.

\begin{definition}
    Let $x$ and $y$ be vertices of $G$. Let $\Tree(G)$ be a rooted relaxed SPQR-tree of $G$ represented by a heavy path decomposition. We can \textit{expose} a vertex $x$ of $G$ by exposing some node of $\Tree(G)$ containing $x$. This makes $\Tree(G)$ $x$-exposed. Similarly, we can root $\Tree(G)$ at $y$ if some node containing $y$ is the root of $\Tree(G)$. We say that $\Tree(G)$ is rooted at $y$. We refer to $x$ and $y$ as the selected vertices of $G$.
\end{definition}

\begin{definition}
    A node $u$ in $\Tree(G)$ is \textit{primary} if it is the root or has a parent of a different type. Otherwise $u$ is \textit{secondary}.
\end{definition}

\begin{definition}\label{incworst:childgroup}
    Let $\mu$ be a primary node in a relaxed SPQR tree represented by a heavy path decomposition with the vertex $x$ exposed and rooted at $y$. Let the solid degree $s(\mu)$ be the number of solid edges connected to $\mu$. Then we define up to 2 \textit{child groups} of $\mu$ as disjoint subsets of the edges of $\Gamma(\mu)$.
    \begin{enumerate}
        \item If $\mu$ is an S-node, we have four possible cases:
        \begin{enumerate}
            \item If $\mu$ has 2 solid edges $e_p$ and $e_c$, then the child groups are the (possibly empty) sets of edges between the virtual edges of $\Gamma(\mu)$ corresponding to $e_p$ and $e_c$.
            \item If $\mu$ has $s(\mu)=1$, is an endpoint of the solid path of the root, and contains either $x$ or $y$, then the child groups are the edges on either side of the virtual edge corresponding to the solid edge and the aforementioned vertex. 
            \item If $\mu$ is the tail of a non-root solid path and has $s(\mu)=1$, then its child groups are the sets of edges between the virtual edges of its parent and solid child.
            \item If $\mu$ has $s(\mu)=0$, is the root and contains both $x$ and $y$, then the child groups are the edges on either side of $x$ and $y$. 
        \end{enumerate}
        \item If $\mu$ is a P-node on the root path containing a selected vertex or a P-node with $s(\mu)=2$; it has at most 1 child group consisting of all edges that don't correspond to an incident solid edge.
        \item In all other cases, there is no child group.
    \end{enumerate}
    A proper child group is a child group with more than 1 edge. 
\end{definition}

Notice that the child groups of an internal node on a heavy path are exactly the virtual edges that would not be included in $R^*$ when inserting an edge between vertices on either side of the node. We shall now define how we separate and unite the child groups from their primary node by redefining split and merge in the context of heavy paths.
\begin{definition}
    Let $\mu$ be a node of a relaxed SPQR-tree with a heavy path decomposition. We have two operations:
    \begin{itemize}
        \item \V{Split$(\mu)$}: Separate the proper child groups of $\mu$ from $\Gamma(\mu)$ and create at most 2 new child nodes of $\mu$ with the proper child groups as skeletons. If any virtual edges are moved, move their corresponding tree-edges to the new children. Connect the new children to $\mu$ with a dashed edge (and insert a virtual edge). If the new children have a heavy child, make the edge between them solid. 
        \item \V{Merge$(\mu)$}: Contract the edges of the at most 2 secondary children of $\mu$ into a single node. If any secondary child had a solid edge, make it dashed. 
    \end{itemize}
    \cref{fig:splitmerge} shows an example of the two operations.
\end{definition}

\input{tex/figures/splitmerge}

With a way to operate on child groups, we can define which nodes are split at what times and how we maintain this under the heavy path tree operations.

\begin{definition}
    A node is \textit{splittable} if it has a proper child group. A node is \textit{mergeable} if it has a secondary child.
\end{definition}

\begin{lemma}\label{split:invariants1}
    Given a relaxed $x$-exposed SPQR-tree rooted at $y$, the following invariants hold under splits and merges:
\begin{enumerate}[series=invariants,label=(I-\arabic*)]
    \item No node is both splittable and mergeable. \label{inv1}
    \item Every secondary node is the dashed child of a primary node, and its skeleton is a proper child group of its parent. Every mergeable primary node has at most 1 solid child and at most 2 secondary children.\label{inv2}
\end{enumerate}
Furthermore, any such tree can be constructed from a strict SPQR-tree of the same graph by splitting a subset of the splittable nodes.
\end{lemma}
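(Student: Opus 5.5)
The plan is to argue by induction over sequences of \V{Split} and \V{Merge} operations, starting from a strict SPQR-tree. The invariants \ref{inv1} and \ref{inv2} will be shown to hold initially and to be preserved by each operation. The ``furthermore'' part will follow by reversing merges.

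\textbf{Base case.} In a strict SPQR-tree, property \ref{spqr:partite} of \cref{spqr:props} says that no two adjacent nodes have the same type. So every node is primary, no node has a secondary child, and no node is mergeable. Hence \ref{inv1} holds trivially, since nothing is mergeable. \ref{inv2} holds vacuously, since there are no secondary nodes and no mergeable nodes.

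\textbf{Split step.} Suppose the invariants hold and we apply \V{Split} to a splittable node $\mu$. By \ref{inv1}, $\mu$ is not mergeable, so it has no secondary children before the split. By the definition of child groups, $\mu$ is primary.
\begin{itemize}
\item The new children receive exactly the proper child groups of $\mu$ as skeletons (plus the virtual edge back to $\mu$). They have the same type as $\mu$, so they are secondary.
\item They are attached by dashed edges, as \ref{inv2} requires.
\item There are at most two child groups, so $\mu$ gets at most 2 secondary children.
\item The solid edges incident to $\mu$ are exactly the edges excluded from the child groups: the parent edge and at most one solid child, by \cref{hp:oneheavy} and its corollary. So $\mu$ keeps at most one solid child.
\item After removal, the remaining edges of $\Gamma(\mu)$ contain no proper child group. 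So $\mu$ is no longer splittable, and \ref{inv1} holds at $\mu$.
\item The new children have a primary parent, so they are not themselves mergeable as primaries. Their skeletons are child groups, which are not child groups of themselves under the case analysis. I would check per case of \cref{incworst:childgroup} that a secondary node is never splittable, arguing that child groups are only defined for primary nodes.
\item The children of the moved virtual edges now hang below a secondary node. Their type differs from $\mu$'s type in the strict tree, so they remain primary. Their child groups are unchanged, because their solid degree and exposure status are unchanged.
\end{itemize}

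\textbf{Merge step.} \V{Merge} on a mergeable $\mu$ contracts its at most two secondary children back into $\mu$. By \ref{inv2}, their skeletons were proper child groups of $\mu$, so this exactly restores the skeleton $\mu$ had before the corresponding split. After the merge, $\mu$ has no secondary child, so it is not mergeable and \ref{inv1} holds at $\mu$. It may now be splittable again, which is allowed. No other node's status changes.

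\textbf{Furthermore part.} Take any tree satisfying the invariants. By \ref{inv2}, every secondary node is a dashed child of a primary node, with a proper child group as its skeleton. Merging every mergeable primary node therefore yields a tree with no secondary nodes. In that tree, no two adjacent nodes have the same type, so it is a strict SPQR-tree. By uniqueness of the SPQR-tree, it is the strict tree of $G$. Reading these merges backwards expresses the original tree as the strict tree with a subset of its splittable nodes split.

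\textbf{Main obstacle.} The hard part is the locality claim: a split or merge at $\mu$ must not change child groups, and hence splittability, at other nodes. This is delicate because child groups depend on solid degrees and on whether a node contains $x$ or $y$. I would handle this by a case check on \cref{incworst:childgroup}. \V{Split} adds only dashed edges, so the solid degrees of the parent and of the solid child are unchanged. The vertices $x$ and $y$ stay in $\mu$, because they are separated from the child groups by the definition in cases 1(b) and 1(d). The one exception is the rule that the heavy child edge of a new child becomes solid. Here I would argue that this edge lies below a dashed edge, off the exposed path, so it cannot affect any primary node's child groups.
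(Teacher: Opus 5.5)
Your outline (strict tree as base case, then one Split and one Merge each preserving (I-1) and (I-2)) is the paper's, and your reversal argument for the ``furthermore'' part is a more explicit version of the paper's one-line assertion. The gap is at the exception you flag yourself: Split makes the edge from a new secondary child $\sigma$ of $\mu$ to its heavy child $\kappa$ solid, and Merge makes it dashed again. You dismiss this because the edge is off the exposed path and ``cannot affect any primary node's child groups''. That is wrong: $\kappa$ was a dashed child of $\mu$ of a different type, so it is primary, and its solid degree $s(\kappa)$ changes. Child groups depend on exactly this quantity.

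For Split the invariants still survive, but for a different reason. If $\kappa$ is an S-node, its child groups do not change. With a solid child, case 1(c) before and case 1(a) after of \cref{incworst:childgroup} select the same edges, since the virtual edge towards the parent is the same. Without a solid child it has none either way. If $\kappa$ is a P-node going from $s(\kappa)=1$ to $s(\kappa)=2$, it gains a child group. It had none before, so by (I-2) it had no secondary child. It therefore ends up at most splittable, never both splittable and mergeable. This is the argument you need in place of yours.

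For Merge, the reverse effect makes your claim that ``no other node's status changes'' false. Let $\kappa$ be a P-node that is the solid child of a secondary child $\sigma$ of $\mu$ and has a solid child of its own. Then $s(\kappa)=2$, $\kappa$ may have a proper child group, and it may legitimately have been split, giving it a secondary child $\kappa'$. Merging $\mu$ makes the edge $(\kappa,\mu)$ dashed, so $s(\kappa)=1$ with $\kappa$ off the root path. By case 3 it now has no child group, so $\kappa'$ violates the first clause of (I-2).

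The repair is to let Merge cascade one level and also merge $\kappa$. This stops, since $\kappa$'s children are S- or R-nodes. It is what \cref{presplit:mergesplitinvariants} does. The paper's own proof of this lemma glosses over the same point with ``all other nodes are unchanged'', so you have to supply this repair rather than cite it.

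The same dependency affects your reversal for the ``furthermore'' part. You must merge deepest-first, because in the reversed sequence $\kappa$ is splittable only after $\mu$ has been re-split and $(\kappa,\sigma)$ made solid again.

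A minor point: property (vi) forbids only S--S and P--P adjacencies, so adjacent R-nodes are allowed in a strict tree. Your base case therefore needs the convention, implicit in the paper as well, that R-nodes are never counted as secondary.
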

\begin{proof}
In a strict SPQR-tree \cref{inv2} holds trivially. \cref{inv1} holds as no node is secondary and thus no node is mergeable.

Given a relaxed SPQR tree where the invariants hold and a splittable primary node $\mu$, splitting $\mu$ makes it mergeable. It is not splittable since all proper child groups were removed by the split. After the split, all edges in a proper child group have been moved to a secondary child, thus ensuring \cref{inv2} for $\mu$. The parent of $\mu$ is unchanged, as is its solid child. $\mu$'s dashed children have all changed parents, but as $\mu$ was splittable, they were all primary before the split and thus cause no violations of \cref{inv2}. All other nodes are unchanged. Thus split preserves the invariants.

Suppose now that $\mu$ is a mergeable node. By \cref{inv2} $\mu$ is primary and has at most 2 secondary children. After a call to merge, $\mu$ is no longer mergeable; instead, the skeleton of $\mu$ now contains child groups, making $\mu$ primary and splittable. All children of $\mu$ after the merge were by \cref{inv2} primary. Merge does not change this. All other nodes are unchanged. Thus, merge preserves the invariants.
To go from a strict SPQR tree to a given relaxed one, use split on a subset of the splittable nodes. 
\end{proof}

With this, we see that every strict SPQR-node is represented by at most 3 nodes in our relaxed SPQR-tree, maintaining the linear space complexity. To use our new definition of child groups, we must also be able to freely choose the selected vertices $x$ and $y$. We introduce two additional operations:
\begin{itemize}
    \item \V{Select}$(u,v)$: Given a proper heavy path decomposition of $\Tree(G)$ and vertices $u,v \in V(G)$. Find $\mu$ containing $u$ and $\nu$ containing $v$. First expose $\mu$ and reverse to root $\Tree(G)$ at $u$. Then conceal the old root and expose $\nu$. Set $\Tree(G)$ to be rooted at $u$ and $v$-exposed.
    
    \item \V{Deselect}: Given a heavy path decomposition of $\Tree(G)$ rooted at $u$. Return a heavy path decomposition without any selected vertices.
\end{itemize}

Note that these two operations might break \cref{inv2}, as they might change the child groups of some nodes. Before fixing this, we shall first define the final form of our SPQR-tree:

\begin{lemma}
    Given a $u$-exposed relaxed SPQR-tree $\Tree(G)$ rooted at $v$ satisfying the invariants of  \cref{split:invariants1}. There is a corresponding relaxed SPQR-tree $\Tree'(G)$ that satisfies the same invariants in addition to the following:
    \begin{enumerate}[resume=invariants,label=(I-\arabic*)]
        \item\label{inv3} A primary node is only splittable if it is the dashed child of an R-node.
    \end{enumerate}
    We call $\Tree'(G)$ a pre-split SPQR tree of $G$ with selected vertices $u$ and $v$ and the invariants the \textit{pre-split invariants}.
\end{lemma}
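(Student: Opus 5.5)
The plan is to obtain $\Tree'(G)$ from $\Tree(G)$ through a finite sequence of the \V{Split} and \V{Merge} operations. Since \cref{split:invariants1} shows both operations preserve \cref{inv1} and \cref{inv2}, only \cref{inv3} needs new work. Concretely, I would perform \V{Split}$(\mu)$ on every splittable primary node $\mu$ that is not the dashed child of an R-node. An R-node here means a node incident to a positively valued edge in the representation.

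\textbf{Step 1: normalise.} First \V{Merge} every mergeable node whose child groups have changed because $u$ and $v$ were selected. This puts all such nodes back in primary form, which is allowed since merging preserves the invariants. By the last part of \cref{split:invariants1}, the tree is still obtainable from the strict SPQR-tree by splitting a subset of the splittable nodes.

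\textbf{Step 2: split.} Then iterate over the primary nodes violating \cref{inv3} and split each one. By the proof of \cref{split:invariants1}, a split node becomes mergeable and not splittable, so it no longer violates \cref{inv3}. Its new secondary children are not primary, so \cref{inv3} says nothing about them.

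\textbf{Termination.} The point to check is that splitting $\mu$ cannot make another primary node newly splittable. Child groups, as defined in \cref{incworst:childgroup}, depend only on the solid/dashed structure and the selected vertices $u,v$. The split changes only edges incident to $\mu$ and its new children. The former dashed children of $\mu$ are re-hung under the secondary children, but they remain dashed children of nodes of the same type as $\mu$. Hence their solid degrees are unchanged and so are their child groups, and their primary status is unchanged because their new parent has the same type as before. So each node needs at most one split, and the process ends after at most $|V(\Tree(G))|$ steps.

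\textbf{Main obstacle.} The delicate case is a newly created secondary child that acquires a solid edge to its heavy child. Such an edge changes the solid degree of that child, which could alter its child groups under cases 1(a)/1(c) of \cref{incworst:childgroup}. I would handle this by checking that the child groups of that heavy child are defined relative to the virtual edges corresponding to its parent and solid child. Those virtual edges are unchanged by re-hanging, so its splittability is unchanged. If this fails, the fallback is to process nodes top-down from the root $v$ and argue by induction on depth that each node is fixed once and never revisited.
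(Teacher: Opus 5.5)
Your overall route is the paper's: split every primary node that violates (I-3), and let \cref{split:invariants1} handle (I-1) and (I-2). The gap is in your termination paragraph. The claim that splitting $\mu$ never makes another primary node newly splittable is false, and your ``main obstacle'' check looks at the wrong node type.

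Since $\mu$ is splittable, (I-1) says it is not mergeable. So it has no secondary children, and every child of $\mu$ has a different type from $\mu$. Your observation about cases 1(a)/1(c) is therefore the right one when $\mu$ is a P-node. When $\mu$ is an S-node, however, the heavy child $c$ of a new secondary child $\mu'$ is a P- or R-node, never an S-node.

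Such a $c$ can be light under $\mu$ yet heavy under the much lighter $\mu'$, so its parent edge becomes solid. If $c$ already had a solid child, $s(c)$ jumps from $1$ to $2$. By case 2 of \cref{incworst:childgroup}, $c$ then acquires a child group consisting of all its other edges, which is proper as soon as $\Gamma(c)$ has at least four edges. Now $c$ is primary, splittable, and a solid child of an S-node, so it violates (I-3) although it did not before. This is exactly the cascade the paper records in \cref{presplit:mergesplitinvariants}. Consequently, a single pass over the initially violating nodes can leave a tree that is not pre-split.

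The repair is easy and needs no claim about new splittable nodes. Make Step 2 a fixpoint loop: while some primary node violates (I-3), split it. Termination then follows by counting:
\begin{itemize}
\item a split turns the splittable, hence non-mergeable, primary node $\mu$ into a mergeable one;
\item all newly created nodes are secondary;
\item every re-hung child keeps a parent of the same type, so no other node changes its primary status or its mergeability.
\end{itemize}
Thus the number of primary non-mergeable nodes strictly decreases, giving at most $|V(\Tree(G))|$ splits.

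Alternatively, your 1(a)/1(c) observation is what bounds the cascade one level down. When the newly splittable P-node is split, the heavy child of its secondary child is an S- or R-node, and its child groups do not change when it gains a solid parent edge.

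Finally, your Step 1 is vacuous. The hypothesis already gives (I-2) with respect to the current selection, so no secondary node is stale.
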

\begin{proof}
    Starting with $\Tree(G)$, we obtain the pre-split SPQR tree of $G$ by splitting all the splittable primary nodes. By \cref{split:invariants1} this preserves \cref{inv1} and \cref{inv2} and establishes \cref{inv3} for all violating primary nodes.
\end{proof}

Furthermore, we can also make our new operations respect the new invariant.
\begin{lemma}\label{presplit:mergesplitinvariants}
    We can restore the pre-split invariants after split or merge using a constant amount of additional splits and merges.
\end{lemma}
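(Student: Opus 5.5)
The proof is a locality argument. A single \V{Split}$(\mu)$ or \V{Merge}$(\mu)$ changes only a constant number of nodes and tree edges, so it can break the pre-split invariants only at a constant number of nodes. Each such node is then repaired with one further split or merge, and I will check that the repairs do not cascade.

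The first step is to determine which nodes can change status. Whether a node is splittable, mergeable, primary or secondary depends only on four things: its own skeleton, the types of itself and its parent, its solid degree (which fixes its child groups in \cref{incworst:childgroup}), and whether its parent is an R-node and the edge to the parent is dashed. A \V{Split}$(\mu)$ creates at most two new secondary children and reassigns some dashed children of $\mu$ to them. A \V{Merge}$(\mu)$ contracts at most two secondary children into $\mu$ and turns at most one solid edge dashed. Hence the affected set is $\mu$, its parent, its at most two new or removed secondary children, the (at most one per child) child whose edge becomes solid or dashed, and the dashed children that moved.

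The moved dashed children are only re-parented from $\mu$ to a secondary node of the same type as $\mu$. So whether their parent is an R-node, and whether their edge is solid, is unchanged, and \cref{inv3} is unaffected for them. This is the reason the affected set really is of constant size.

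Next I go through the invariants on this set. By \cref{split:invariants1}, \cref{inv1} and \cref{inv2} are preserved by any split or merge, so only \cref{inv3} can fail.

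\begin{itemize}
\item After \V{Split}$(\mu)$, $\mu$ is no longer splittable. A new secondary child has a skeleton consisting of a single child group. Its one solid edge, if any, goes to a heavy child, which is solid by construction. Such a child is secondary, and \cref{inv3} speaks only of primary nodes, so it is not a violation. The child $h$ whose edge became solid may have changed its child groups, so $h$ may become splittable, or an already-split $h$ may now carry wrong secondary children. In the first case I apply one \V{Split}$(h)$. In the second I apply \V{Merge}$(h)$ followed by \V{Split}$(h)$.
\item After \V{Merge}$(\mu)$ the situation is symmetric. The child whose edge turned dashed now has an R-node parent only if $\mu$ is an R-node, and otherwise it may need a merge. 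The parent of $\mu$ may change solid degree only if an edge colour at it changed, which merge and split do not do. So the parent needs at most one check.
\end{itemize}

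The main obstacle is showing that these repairs do not cascade. A repair at $h$ is itself a split or merge, and it alters edge colours only at $h$'s own heavy child. The argument I would make is that this heavy child's edge was already solid, because it is the heavy child, and its colour does not change. So the second round creates no new violation, and at most two levels of fix-ups are needed. If this fails in some corner case, such as the root-path S-node cases (b) and (d) of \cref{incworst:childgroup}, I would bound the cascade by noting that each repair moves strictly downward along a solid path and terminates once it reaches a node whose solid degree is unchanged. That happens after $O(1)$ steps, since only one edge colour changes per operation. Altogether this uses a constant number of splits and merges.
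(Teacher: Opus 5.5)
There is a genuine gap in the step you flag as the main obstacle, which is the non-cascade argument. You claim the repair at $h$ ``alters edge colours only at $h$'s own heavy child'' and that this edge ``was already solid''. That is false. When $h$ is split, its dashed children move into a new secondary node $s$, and \V{Split} makes the edge from $s$ to \emph{its} heavy child solid. That child was a dashed child of $h$, so its edge was dashed before. Since the subtree of $s$ is strictly lighter than that of $h$, the child can be heavy relative to $s$ while light relative to $h$. So the second-round repair is structurally the same as the first-round operation: it turns a dashed edge solid at a new node, which is exactly what caused the violation at $h$.

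Your fallback is circular for the same reason. Every repair is itself an operation that flips one edge colour, so ``only one edge colour changes per operation'' bounds nothing. Nothing in your argument stops the cascade from running down the whole depth of the tree.

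The missing idea is the S/P alternation, combined with a property of \cref{incworst:childgroup}. For an S-node, whether its parent edge is solid or dashed does not change its child groups:
\begin{itemize}
\item Cases (a) and (c) give the same groups, namely the edges between the virtual edges of the parent and the solid child.
\item A non-root bottom endpoint has no child group either way.
\item R-nodes have no child groups at all.
\end{itemize}
So only a P-node can change status. Going from solid degree $1$ to $2$ makes it internal and possibly splittable, and the converse happens under merge. Hence splitting a P-node solidifies an edge to an S- or R-node, which needs no repair. Splitting an S-node may solidify an edge to a P-node, which needs one split; the edge that split solidifies then leads to an S- or R-node, and the process stops. Merge is symmetric. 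This caps the cascade at depth two, which is exactly the paper's argument.

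The root-path cases (b) and (d) you worry about never arise, because every node whose edge colour changes hangs below a dashed edge of $\mu$. One smaller inaccuracy: a colour change at an already split $h$ breaks \cref{inv2} rather than \cref{inv3}, since $h$'s secondary children stop being proper child groups. Your merge-then-split repair handles this, but the claim that only \cref{inv3} can fail is wrong.
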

\begin{proof}
    Let $\mu$ be some primary SPQR-node to be split. If $\mu$ is a P-node, then splitting it affects the heavy path of at most one child. However the endpoint of this path must be an S or R-node and thus in either case, the child groups do not change when adding something to the end of the path. Thus a split of $\mu$ preserves \cref{inv3}. If $\mu$ is an S-node then splitting it affects the heavy path of at most two children. If any of these children were a heavy P-node, the child must be split as the P-node is now internal to the heavy path and might now be splittable. This split however does not cascade further as P-nodes can only have S or R-node children. Thus a split of some S-node also preserves \cref{inv3}.

    After a merge of some node $\mu$, any of its former secondary children might have had a solid child which is now the endpoint of its solid path. If this is an S-node, its child groups do not change. If this is a P-node, it might necessitate an additional merge, which stops as its children are either S or R-nodes. Thus \cref{inv3} is preserved by merge. 
\end{proof}

\begin{lemma}
    We can restore the pre-split invariants after a select or deselect using a constant number of merges or splits.
\end{lemma}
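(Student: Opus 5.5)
The plan is to reduce select and deselect to their constituent heavy path operations (\V{expose}, \V{conceal}, \V{reverse}) and to show that each one changes the child groups of only $O(1)$ nodes. Each violation can then be repaired with a constant number of splits and merges, using \cref{presplit:mergesplitinvariants} to absorb any cascades. By \cref{incworst:childgroup}, the child groups of a node $\mu$ depend only on three pieces of data: its solid degree $s(\mu)$ and which incident edges are solid, whether $\mu$ is on the root path, and whether $\mu$ contains a selected vertex or is the root. Thus it suffices to bound, for each operation, the number of nodes where this local data changes.

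First, consider the parts of the operations that change the selected vertices or the root, namely \V{reverse}, changing the status of $x$ or $y$, and \V{Deselect}. These affect only the endpoint nodes of the root path, i.e.\ the nodes containing $x$ or $y$ and the old and new roots. For \V{reverse} this follows from the lemma that it changes the colour of at most one edge. In addition, the root path is unchanged as a set, and the internal nodes keep their two solid edges, so cases 1(a) and 2 of \cref{incworst:childgroup} give the same child groups. For each of these $O(1)$ endpoint nodes, we first merge it if it is mergeable. This makes it primary and unsplit by \cref{split:invariants1}. We then recompute its groups and split it if it is splittable and not the dashed child of an R-node. By \cref{presplit:mergesplitinvariants} each such step costs $O(1)$ further splits and merges.

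Second, and this is the main obstacle, \V{expose} and \V{conceal} perform up to $\ell(u)=O(\log n)$ splices or slices. Each of these changes the solid edges at one node $p(v)$ and at the tail of a path. Naively this touches $\Theta(\log n)$ nodes, and that is not a constant. The plan is to argue that splice and slice only relocate the solid edge among children of $p(v)$, and that for the invariants only the \emph{primary} status and the membership of groups matter.

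The first case is when $p(v)$ is an S-node. Its child groups are defined by the positions of the parent edge and the solid child edge. Swapping the solid child changes these groups, so a merge followed by a split at $p(v)$ is needed. I would try to show this is absorbed into the splice and slice themselves: the constant-time local split and merge is performed as part of each splice, so that the constant number is per heavy path operation. The total work is then charged to the expose, which the later analysis pays for anyway.

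The second case is when $p(v)$ is a P-node. Here the single child group is everything except the solid edges, so again only $p(v)$ and possibly one P-node child need fixing, by the same argument as in \cref{presplit:mergesplitinvariants}. If the statement is meant literally as a constant total, I would instead restrict attention to what select actually alters relative to an already pre-split tree. Off the root path, the decomposition is proper both before and after, so dashed and light coincide there and groups of non-root-path nodes agree. The differences therefore concentrate at the endpoints and at the junction between old and new root paths, of which there are $O(1)$.
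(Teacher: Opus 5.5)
Your core argument matches the paper's proof. Only the two endpoints of the root solid path can have their selection-dependent child groups altered: every internal root-path node has solid degree $2$ and so falls under case 1(a) or case 2 of \cref{incworst:childgroup}. A merge followed by a split at each endpoint therefore restores \cref{inv2}, with \cref{presplit:mergesplitinvariants} absorbing any cascades. Your first option for the $\Theta(\log n)$ splices and slices inside \V{expose} and \V{conceal} is also the paper's: do $O(1)$ merges and splits inside each of them. The paper's extended \V{Splice} and \V{Slice} maintain the pre-split invariants on their own, so the ``constant number'' in the lemma counts only the extra fix-up caused by changing the selected vertices. This is why the paper's proof can start from ``the tree satisfies all but \cref{inv2}''.

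You should drop the fallback in your last paragraph, and likewise your opening claim that each constituent operation changes child groups at only $O(1)$ nodes. Both are false, for two reasons.

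\emph{Exposing $\nu$.} Each splice gives a node on the new root path a different solid child, so that node's case 1(a) or case 2 child groups change. In addition, a demoted heavy P-node child drops from $s=2$ to $s=1$ off the root path and loses its child group.

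\emph{Rerooting at $\mu$.} This changes solid edges off the new root path too. Suppose $w$ lies on the old $r$-to-$\mu$ path and both its old parent edge and its edge towards $\mu$ were light. Then $\w_{\mathrm{old}}(w)\le W/2$, so after rerooting its old parent becomes its heavy child. Hence $w$'s solid edges, and for an S-node its child groups, change, even though $w$ need not lie on the $\mu$-to-$\nu$ path.

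There can be $\Theta(\log n)$ nodes of either kind, so the differences do not concentrate at the endpoints and a single ``junction''. The per-splice and per-slice accounting is therefore necessary, not one option among two.
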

\begin{proof}
    After select or deselect the tree satisfies all but \cref{inv2}. Specifically, the endpoints of the root solid path might violate the child group conditions. A merge followed by a split at each end restores the invariants. To see that this is indeed the case, consider the fact that only the child groups of the root path are affected by the choice of selected vertices. Additionally, all internal nodes on the path have solid degree 2 and are thus also unaffected by the choice. The only remaining violators are the endpoints.
\end{proof}

With this, we have defined exactly which nodes should be split at what times. What we now need is to also maintain this while the SPQR-tree changes.

\subsection{Maintaining a Pre-Split SPQR-tree Under Heavy Path Operations}

This section will show how to maintain a pre-split SPQR tree under all heavy path operations. These will later allow us to implement \V{InsertEdge}, \V{Subdivide} and \V{Triconnected}. Recall that expose and conceal only require splice and slice. However, these currently violate the invariants. Suppose that $\Tree(G)$ has no selected vertices. We extend them as follows: 
\begin{itemize}
    \item \V{Splice}$(\mu)$: Make $p(\mu)$ primary and splittable by merging $p(\mu)$ or $p(p(\mu))$ and let $\mu'$ be the (new) tail of $\mu$'s solid path. Execute splice as normal on $\mu'$, split $p(\mu)$, $p(p(\mu))$ and $\mu'$ and merge the formerly solid heavy child of $p(\mu)$ if relevant.
    \item \V{Slice}$(\mu)$: Merge $p(\mu)$ and $\mu$ then slice $\mu$ as normal and restore the pre-split invariants by splitting $p(\mu)$ and $\mu$. 
\end{itemize}

\begin{lemma}\label{spqr:spliceslice}
    Splice and slice preserve the pre-split invariants. Assuming that split and merge take worst-case $\BigO(1)$ time, then splice and slice run in $\BigO(\log\frac{W}{\w(\mu')})$ time where $W$ is the weight of the solid path of $p(\mu)$. This does not change the analysis of expose and conceal.
\end{lemma}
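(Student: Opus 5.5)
The plan splits into two parts. First, I will show that the modified \V{Splice} and \V{Slice} preserve the pre-split invariants by treating each as a merge/split wrapper around the ordinary heavy path operation. Second, I will charge their running time to the bound of the earlier splice/slice lemma from \cref{hp:implement}, adding only $\BigO(1)$ merges and splits.

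For correctness, I first observe what the plain splice/slice on $\mu'$ does: it changes the colour of at most two tree edges. These are the edge $(\mu',p(\mu))$ and the previously solid child edge of $p(\mu)$. Consequently, the only nodes whose solid degree, or whose position at the end of a solid path, can change are $\mu'$, $p(\mu)$, the formerly solid child of $p(\mu)$, and possibly $p(p(\mu))$. By \cref{incworst:childgroup}, child groups depend only on solid degree, on whether a node is an endpoint of a solid path, and on the selected vertices. Hence these are the only nodes whose child groups can change. So before the recolouring I undo the relevant splits: merge $p(\mu)$, or $p(p(\mu))$ if $p(\mu)$ is secondary. This makes the relevant primary node carry its whole skeleton. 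If $\mu$ is itself a secondary child, I pass to $\mu'$, the tail of its solid path, which is primary by \cref{inv2}. After the recolouring I re-split exactly those nodes. By \cref{split:invariants1} each merge and split preserves \cref{inv1} and \cref{inv2}. By \cref{presplit:mergesplitinvariants} any cascade to restore \cref{inv3} costs only $\BigO(1)$ further operations, because it stops at S- or R-nodes. Slice is symmetric: merge $p(\mu)$ and $\mu$, slice, then split both.

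For the time bound, the plain splice/slice on $\mu'$ costs $\BigO(\log\frac{\w(\rho)}{\w(\mu')})$ by the earlier lemma, where $\rho$ is the tail of the solid path of $p(\mu)$, so $\w(\rho)=W$. The constant number of merges and splits costs $\BigO(1)$ by assumption. The one subtlety is that merging or splitting changes node identities and the weight function. Merges and splits only redistribute the weight of a single strict SPQR-node among at most three relaxed nodes. By \cref{split:invariants1}, the secondary pieces are dashed children hanging off the primary node. Therefore $\w$ of the primary node and of every ancestor is unchanged. The search-tree updates are then reweights of a node and insertions/deletions in a child set. These are dominated by the $\log\frac{W}{\w(\mu')}$ term or are constant, since a new child heavier than half its parent can be inserted in $\BigO(1)$.

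Finally, expose and conceal telescope over $\rho_0,\dots,\rho_k$ exactly as in \cref{hp:expose}, with an extra $\BigO(1)$ per step. Since the number of steps is the light depth, which is $\BigO(\log\frac{W}{\w(v)})$ by \cref{hp:lightdepth}, the analysis does not change. I expect the main obstacle to be the case analysis showing that the affected set really is only $\{\mu',p(\mu),p(p(\mu)),\text{former heavy child}\}$ when the nodes involved are secondary. This matters most when $p(\mu)$ is a secondary S-node whose primary parent's child groups shift. I would handle it by arguing through \cref{inv2}, which says secondary nodes have only primary children and hang from a single primary node.
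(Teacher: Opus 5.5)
Your invariant argument follows the same merge, recolour, re-split pattern as the paper. The running-time part, however, omits the one point the paper's proof actually has to argue.

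After the preliminary merge, the splice is performed on $\mu'$, not on $\mu$. So in expose the $i$-th step costs $\BigO(\log\frac{\w(\rho_{i+1})}{\w(\mu'_i)})$ rather than $\BigO(\log\frac{\w(\rho_{i+1})}{\w(\rho_i)})$. These terms telescope only if $\w(\mu'_i)$ is within a constant factor of $\w(\rho_i)$. Your ``extra $\BigO(1)$ per step'' pays for the merges and splits. Your observation that merges and splits leave $\w$ of primary nodes and their ancestors unchanged controls the upper ends $\w(\rho_{i+1})$. Neither addresses the lower end: $\mu'_i$ is a different node, and if it could be much lighter than $\rho_i$ the sum would not collapse to $\BigO(\log\frac{W}{\w(v)})$.

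The missing fact is that $\mu'$ is either $\mu$ itself or, when $\mu$ is secondary and is contracted into $p(\mu)$ by the merge, the former solid child of $\mu$. That child is a heavy child of $\mu$ by construction of \V{Split}. This is what your phrase ``the tail of its solid path'' should say, since a $\mu$ being spliced is already a tail. Hence $2\w(\mu')>\w(\mu)$, so each term exceeds its telescoping counterpart by at most one. The resulting $\BigO(1)$ per light edge is then absorbed by \cref{hp:lightdepth}.

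There is also a smaller omission in the invariant argument. You merge only $p(\mu)$ or $p(p(\mu))$ beforehand and re-split those afterwards, but the former solid child $h$ of $p(\mu)$ may itself need a merge. If $h$ is a P-node with solid degree $2$, it was split. After the splice it is the tail of a non-root solid path with solid degree $1$, so it has no child group, and its secondary child then violates \cref{inv2}. The paper explicitly merges this node. \cref{presplit:mergesplitinvariants} does not cover this case, because the change is caused by the recolouring itself rather than by a split or merge.
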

\begin{proof}
We first consider splice. After the merge of $p(\mu)$, the pre-split invariants hold in all but $p(\mu)$ and $\mu'$. Splicing then invalidates them for $p(\mu)$'s solid child. Splitting $\mu'$, $p(\mu)$ and merging $p(\mu)$'s former solid child fixes this.

For slice, the merge of $p(\mu)$ and $\mu$ violates the pre-split invariants only in $p(\mu)$ and $\mu$. After slicing $p(\mu)$ might be splittable. This is not the case for $\mu$ as it is the endpoint of a non-root solid path. The re-split fixes this. $p(\mu)$'s potential heavy child satisfies the pre-split invariants at all times as it must also be heavy as a child of a secondary node.

The number of merges and splits is clearly bounded, thus the worst-case bound for both operations is unchanged. The analysis of expose and conceal still works; however, we need to argue that splicing $\mu'$ instead of $\mu$ does not change the run time. This is simple since $\mu'$ is either $\mu$ itself or a heavy child. Thus $2\w(\mu')>\w(\mu)$ which implies that the running time of \V{splice} is still $\BigO(\log\frac{\w(\rho)}{\w(\mu)})$ where $\rho$ is the tail of $p(\mu)$'s heavy path.
\end{proof}\noindent
We will defer the analysis of merge and split to the next section. For now, this result immediately implies the following 

\begin{corollary}
    Expose and conceal preserve the pre-split invariants and run in$\BigO(\log \frac{W}{wt(\mu)})$ time.
\end{corollary}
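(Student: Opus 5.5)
Since expose and conceal are built purely from splices and slices (together with \V{Path}/\V{Tail} traversals, which do not touch the SPQR structure), the plan is to derive the corollary from \cref{spqr:spliceslice} applied to each step of the sequence.

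\textbf{Invariants.} Expose on $\mu$ consists of a split of $\pathh(\mu)$ at $\mu$ followed by splices of $\rho_0,\rho_1,\dots,\rho_{k-1}$, the tails of the solid paths on the $\mu$-to-root path. Conceal is the same pattern with slices. By \cref{spqr:spliceslice}, every extended splice and slice takes a tree satisfying the pre-split invariants to another such tree. Induction over the sequence then shows the invariants hold at the end.

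The initial split of $\pathh(\mu)$ changes only the solid edge below $\mu$. This is exactly a local change of the kind handled in \cref{presplit:mergesplitinvariants}, so the invariants can be restored with $O(1)$ extra merges and splits at $\mu$ and its former solid child. One subtlety is that the extended splice actually operates on $\mu'$, the new tail after merging, rather than on $\rho_i$. I would check that this tail is still the next node the traversal reaches, so the sequence of splices really does expose $\mu$.

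\textbf{Running time.} I would redo the telescoping sum from \cref{hp:expose}. By \cref{spqr:spliceslice}, splice $i$ costs $O(\log(\w(\rho_{i+1})/\w(\mu'_i)))$. Since $\mu'_i$ is either $\rho_i$ or its heavy child, we have $\w(\mu'_i) > \w(\rho_i)/2$, so each term is $O(1+\log(\w(\rho_{i+1})/\w(\rho_i)))$. The additive constant is absorbed because the number of splices is bounded by the light depth (\cref{hp:lightdepth,hp:dashedepth}), which is $O(\log(W/\w(\mu)))$. The traversal costs are as before, with the first \V{Path} call contributing $\log(\w(\rho_0)/wt(\mu))$. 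Summing gives $O(\log(W/wt(\mu)))$.

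\textbf{Main obstacle.} The step needing the most care is conceal, since the paper only defers to \cite{SleatorT85} for locating the light edges along the solid path. I would argue that the extra merges and splits around each slice are $O(1)$ per slice and do not change the node weights $\w$ used in the analysis. The relaxed SPQR nodes are contracted and expanded in place, so subtree weights are preserved. Given that, the same telescoping bound applies to conceal.
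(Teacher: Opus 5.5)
Your proposal follows the paper's route: the corollary comes straight from the extended splice/slice lemma. Invariants are preserved step by step, and the bound $2\w(\mu')>\w(\mu)$ keeps the telescoping analysis of \cref{hp:expose} intact. Your explicit absorption of the per-splice additive constant via the light depth is a welcome detail the paper leaves implicit, though the initial path split at $\mu$ is better justified by the merge--modify--resplit pattern of the extended \V{Slice} (or the \V{cut-exposed} lemma) than by \cref{presplit:mergesplitinvariants}, which concerns SPQR-level split and merge rather than turning a solid edge dashed.
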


Now that we can freely select and expose vertices and nodes, we can move on to modifying the SPQR-tree using the other heavy path operations. 

\begin{lemma}
    Given a $\mu$-exposed pre-split SPQR tree $\Tree(G_1)$, another pre-split SPQR tree $\Tree(G_2)$ with root node $\nu$, both without any selected vertices and supposing that $\mu$ and $\nu$ share exactly two vertices which constitute a cut of $G_1 \cup G_2$ and are of different types. Then connecting $\mu$ and $\nu$ with a dashed edge doesn't break any invariants. Additionally, this insertion can be done using link-exposed$(\mu,\nu)$.
\end{lemma}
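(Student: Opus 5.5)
The plan is to verify the three pre-split invariants \cref{inv1}, \cref{inv2} and \cref{inv3} locally, and to check that the combined tree is still a relaxed SPQR-tree of $G_1\cup G_2$. The key observation is that link-exposed$(\mu,\nu)$ adds a single dashed edge $(\mu,\nu)$ and changes no other edge colour. So the child groups of \cref{incworst:childgroup} can change at most at $\mu$ and $\nu$. Child groups depend only on node type, solid degree, solid-path position and selected vertices, and none of these changes anywhere else.

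\textbf{Relaxed SPQR structure.} Since $\mu$ and $\nu$ share exactly two vertices $\{s,t\}$ forming a cut of $G_1\cup G_2$, we add the virtual edge $(s,t)$ to both $\Gamma(\mu)$ and $\Gamma(\nu)$ and mark the new tree edge by $\{s,t\}$. This gives properties (i)--(v), (vii) and (viii) of \cref{spqr:props} directly: every skeleton keeps its type and gains at most one edge, and in an S-node we are not splitting a cycle. Property \ref{spqr:partite} holds for the new edge because $\mu$ and $\nu$ have different types. It is not required anyway, since the tree is relaxed.

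\textbf{Node $\nu$.} This is the root of $\Tree(G_2)$, so it was primary. Its new parent $\mu$ has a different type, so $\nu$ stays primary. With no selected vertices, a root node can only have child groups under cases 1(b), 1(d) or 2 if it contains a selected vertex, or under cases 1(a) or 2 if it has solid degree 2. Previously $\nu$ was a root with at most one solid edge and no selected vertices, so it had no child groups. It now has a dashed parent edge and the same solid degree (at most 1). If $\nu$ was the tail of its solid path, it is now the tail of a non-root solid path with $s(\nu)=1$, which is case 1(c) for S-nodes. I expect this to be the main obstacle. The way out should be that, without selected vertices, the pre-split root of $\Tree(G_2)$ whose solid child is $h$ already had its skeleton split so that the segment between its parent virtual edge and solid child is trivial. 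If that fails, I would invoke \cref{presplit:mergesplitinvariants}: one split of $\nu$ costs $O(1)$ and restores \cref{inv3}, and this is still consistent with implementing the insertion via link-exposed. \cref{inv2} holds for $\nu$'s secondary children, since they remain dashed children of a primary node with unchanged child groups. \cref{inv1} holds because, once split, $\nu$ is mergeable but not splittable, and otherwise it was neither.

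\textbf{Node $\mu$.} The exposed path of $\Tree(G_1)$ runs from $\mu$ to its root and is solid. The new edge is dashed, so $s(\mu)$ and the solid path through $\mu$ are unchanged. The only effect on $\Gamma(\mu)$ is one extra virtual edge. Since $\mu$ is exposed, it is the head of the root solid path with $s(\mu)\le 1$ and there are no selected vertices. So $\mu$ either has no child group, or, in cases 1(a) and 1(c), the new virtual edge joins an existing child group. Because $\nu$ is primary, it is not a secondary child, and $\mu$ gains no secondary children. Hence \cref{inv2} and \cref{inv1} are unaffected. For \cref{inv3}, the new dashed child $\nu$ is primary and (after the step above) not splittable unless $\mu$ is an R-node, which is allowed.

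Finally, link-exposed$(\mu,\nu)$ performs exactly the operation described: it makes $\nu$ the dashed child of $\mu$ and reweights $\mu$. Together with the skeleton bookkeeping, whose cost is constant, this gives the claim.
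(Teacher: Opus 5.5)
Your check of $\mu$ agrees with the paper. The paper's whole proof is the observation that, with no selected vertices, neither $\mu$ nor $\nu$ has a child group. You should state the $\mu$ part as cleanly as that. Since $\mu$ is exposed, it lies on the root solid path with $s(\mu)\le 1$, so cases 1(a) and 1(c) cannot occur at all, and every remaining case needs a selected vertex. Your hedge that the new virtual edge might ``join an existing child group'' is therefore vacuous. If it were not vacuous, it could turn a one-edge group into a proper one, which you never handle.

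At $\nu$, your local analysis is sharper than the paper's one-liner, which is true only before the link. If $\nu$ is an S-node with a solid child, then after \V{link-exposed}$(\mu,\nu)$ it is the tail of a non-root solid path with $s(\nu)=1$, and case 1(c) applies. This genuinely breaks the invariants. Take $\mu$ a P-node, and let $\Gamma(\nu)$ be the cycle $s\,t\,a\,b\,c$, where $st$ is the virtual edge of the new link and $ab$ is the virtual edge of a heavy, hence solid, child.
\begin{itemize}
\item Before the link, $\nu$ is a root without selected vertices. So it has no child group and, by (I-2), no secondary children.
\item After the link, case 1(c) gives $\nu$ the groups $\{ta\}$ and $\{bc,cs\}$. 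The second is proper, so $\nu$ is a splittable primary node whose parent is not an R-node, contradicting (I-3).
\end{itemize}

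Consequently, neither of your two ways out proves the statement as written.
\begin{itemize}
\item The first is false. $\nu$ is unsplit, and before the link there is no parent virtual edge against which it could have been split.
\item The second (one split of $\nu$, plus the bounded cascade of \cref{presplit:mergesplitinvariants}) is a correct $O(1)$ repair. But it shows that link-exposed \emph{followed by} $O(1)$ splits restores the invariants, not that the dashed edge ``doesn't break any invariants''.
\end{itemize}

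You should commit to one of two honest versions:
\begin{itemize}
\item Add the hypothesis that $\nu$ has no solid child. This covers the lemma's use in \V{Subdivide}, where the linked tree is the fresh single S-node $\chi$, so $s(\nu)=0$ after the link and no case of the definition applies.
\item Weaken the conclusion to allow the repair step, and present that step as part of the procedure rather than as a fallback.
\end{itemize}
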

\begin{proof}
    As no vertices are selected in either of the SPQR-trees, we know that $\mu$ and $\nu$ have no child groups and thus the insertion of a dashed edge does not break any invariants in either. 
\end{proof}
\begin{lemma}
    Given a pre-split SPQR-tree $\Tree(G)$ without selected vertices. Deleting the edge $(\mu,p(\mu))$ does not violate the pre-split invariants and can be done using cut-exposed$(\mu)$ and a constant number of merges.
\end{lemma}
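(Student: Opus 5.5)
The plan is to mirror the proof of the preceding lemma for \V{link-exposed}. Because no vertices are selected, child groups arise only from solid-degree conditions, that is, cases 1(a), 1(c) and 2 of \cref{incworst:childgroup}. The argument proceeds in three stages: first, normalise the tree near the edge to be removed; second, apply \V{cut-exposed}$(\mu)$; third, check the invariants \cref{inv1}, \cref{inv2} and \cref{inv3} on both resulting trees. Only a constant number of nodes around the deleted edge can change status.

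\emph{Normalisation.} Before cutting, I would make sure that neither endpoint of $(\mu,p(\mu))$ is involved in a split, since cutting through a split would leave a secondary node detached from its primary node or a virtual edge without a partner. If $\mu$ is secondary, then by \cref{inv2} it is a dashed child of a primary node, so it is a split-off part of $p(\mu)$. In that case I merge $p(\mu)$ and work with the primary node carrying the relevant virtual edge. The same applies if $\mu$ or $p(\mu)$ is mergeable: I merge them so that both are primary and have no secondary children. By \cref{presplit:mergesplitinvariants}, each such merge restores the pre-split invariants with $\BigO(1)$ extra splits and merges. Afterwards, $(\mu,p(\mu))$ is an edge between two primary nodes.

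\emph{The cut.} Next I expose $\mu$, as required by the precondition of cut-exposed, and call \V{cut-exposed}$(\mu)$. This changes the colour of at most the one removed edge. The only nodes whose solid degree changes are $\mu$ and $p(\mu)$. Consequently, the only nodes whose child groups can change are these two and possibly the new endpoints of the affected solid paths. All other nodes keep their parent, solid edges and type, so \cref{inv1}, \cref{inv2} and \cref{inv3} continue to hold for them.

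\emph{Local check and main obstacle.} The final step is a case analysis on the types of $\mu$ and $p(\mu)$, S or P, or part of an R-representation. It must show that after the cut each of these nodes is either not splittable or is a dashed child of an R-node, which is what \cref{inv3} requires. Where this fails, in particular for a P-node that becomes the endpoint of a solid path, the node may be left with a mergeable/splittable conflict. I would repair it with a merge, exactly as in the proof of \cref{presplit:mergesplitinvariants}, which shows that such repairs do not cascade past P-node children. I expect the main obstacle to be justifying that merges alone suffice, with no splits needed. The argument I would try is this: cutting an edge only lowers solid degrees, and with no selected vertices a lower solid degree can only remove child groups, never create proper ones. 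Therefore any residual violation is a secondary child whose group is no longer proper, and one merge removes it.
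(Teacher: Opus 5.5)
Your core argument matches the paper's proof: expose $\mu$, apply \V{cut-exposed}$(\mu)$, then merge $p(\mu)$, with at most one further merge from the cascade in \cref{presplit:mergesplitinvariants}. Your reason why merges suffice is right. With no selected vertices, after the cut $\mu$ is a root of solid degree $0$, and $p(\mu)$ is the root or the lower end of its tree's root path. By \cref{incworst:childgroup} neither has a child group. The only possible violation is that the secondary children of $p(\mu)$ stop being child groups of it. That is a failure of \cref{inv2}, for S- and P-node $p(\mu)$ alike, not an \cref{inv1}-type conflict. You are slightly more careful than the paper here, since you also check $\mu$, which the paper skips.

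The normalisation stage, however, should be removed. As written it claims something false and contributes nothing.
\begin{itemize}
\item After \V{expose}$(\mu)$, both endpoints are automatically primary. By \cref{inv2} secondary nodes are dashed children, while $\mu$ and $p(\mu)$ have solid parent edges (or $p(\mu)$ is the root).
\item After the expose, $\mu$ has no child groups, and hence by \cref{inv2} no secondary children.
\item $p(\mu)$ cannot be kept free of secondary children. Unless it is the root (where it has no child group anyway), it is an internal node of the root path at the moment of the cut. So \cref{inv3} forces it to be split whenever it has a proper child group, whatever you merged beforehand. Restoring the invariants via \cref{presplit:mergesplitinvariants}, and the splits performed inside the subsequent expose, redo exactly this split.
\end{itemize}
So your claim that after normalisation both endpoints are primary ``and have no secondary children'' describes a state incompatible with the invariants. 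Your local check already plans to merge $p(\mu)$ after the cut, and that merge is what actually does the work.

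The sub-case of a secondary $\mu$ lies outside the lemma. Its parent edge is a split edge inside a single strict SPQR-node, not a tree edge. Merging $p(\mu)$ contracts that edge, leaving nothing to cut.
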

\begin{proof} 
     By first exposing $\mu$, then using cut-exposed on $\mu$, we delete the edge in question, but now the solid degree of $p(\mu)$ has changed and thus \cref{inv2} and \cref{inv3} might be violated. By merging $p(\mu)$ we restore \cref{inv2} and \cref{inv3}. The solid degree of all other nodes is unchanged, and thus the other invariants are preserved. 
\end{proof}

\begin{lemma}
    In a pre-split SPQR tree with root $r$ and no selected vertices, reverse$(r)$ does not violate the pre-split invariants.
\end{lemma}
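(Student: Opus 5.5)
The plan is to reduce the claim to a comparison of the solid/dashed colouring and the primary/secondary status of every node before and after \V{reverse}$(r)$, and then to check the three pre-split invariants using these facts. First I would recall what \V{reverse}$(r)$ does in the implementation of \cref{hp:implement}: it calls \V{Reverse} on the single solid path containing $r$. Say this path runs from $r$ down to some node $u$. The tree becomes rooted at $u$ and $r$-exposed. By the lemma stating that the other heavy path operations change the colour of at most one edge, no edge changes colour here. Every edge on the reversed path stays solid, and every off-path edge stays dashed and keeps its orientation relative to the path node it hangs from.

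Next I would go through the pre-split invariants. No vertices are selected, so child groups depend only on node types and solid degrees $s(\mu)$ (cases 1(a), 1(c) and the $s(\mu)=2$ case of 2 in \cref{incworst:childgroup}). Solid degrees are unchanged because the colouring is unchanged. Hence for every node off the path, and every internal path node, the child groups are the same as before: an internal S-node still lies between the same two solid virtual edges, and a P-node with $s=2$ still has the same non-solid edges. The two endpoints $r$ and $u$ both have $s\le 1$. With no selected vertices, cases 1(b) and 1(d) do not apply, and case 1(c) required a non-root path, so before and after the reversal these endpoints have no child groups. They are therefore neither splittable nor carry secondary children that would need to exist.

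The step I expect to be the main obstacle is the primary/secondary status, since ``parent'' is redefined along the path. \cref{inv2} requires every secondary node to be a \emph{dashed} child of a primary node, so no secondary node lies on a solid edge with its parent. On the solid root path, consecutive nodes therefore have different types, or else one of them would be secondary with a solid parent edge. After reversal, each path node's new parent is still a path neighbour, and so still has a different type. Path nodes thus stay primary, except that the new root $u$ is primary by definition. Off-path nodes keep their parents, so their status is unchanged. Secondary children of path nodes remain dashed children of the same primary nodes, with the same skeletons, which are the same child groups as argued above. This gives \cref{inv2}. \cref{inv1} and \cref{inv3} then follow, because splittability, mergeability and the ``dashed child of an R-node'' condition depend only on the quantities shown to be unchanged.
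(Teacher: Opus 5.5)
Your proposal is correct and takes essentially the same route as the paper's proof. Both observe that reversing the root path changes no edge colours, so solid degrees, child groups and secondary children are unchanged; that the two endpoints carry no child groups because nothing is selected; and hence that all three pre-split invariants survive. Your explicit argument that consecutive nodes on the solid root path must have different types, so every path node is still primary under its new parent, fills in a step the paper only asserts by appeal to \cref{inv2}.
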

\begin{proof}
    \cref{inv1} is unaffected, as reversing a path does not affect the child groups. By \cref{inv2}, every node along the root path is primary, and thus the first part of \cref{inv2} is also preserved. Every internal node of the root path both gains and loses a solid child. The root loses a solid child, and the other endpoint of the root path gains one. In all cases, the number of solid children is the same after the reversal. The number of secondary children of all nodes is also unchanged. Thus, we preserve \cref{inv2}. The splittability of the internal nodes is unaffected. As no nodes are selected, this is also the case for the endpoints. Thus \cref{inv3} is preserved.
\end{proof}

Before explaining how to insert edges in the SPQR-tree, we must first introduce a new operation on the heavy path decomposition.
\begin{itemize}
    \item \V{contract}$(r)$: Where $r$ is the root of a heavy path decomposition. Contract the heavy path containing the root into a single R-node.
\end{itemize}

Due to our choice of representation, the contraction of some heavy path amounts to adding $1$ to the value of every edge on the path. This operation can be supported by lazily propagating a token down the search tree storing the heavy path see e.g. \cite{SleatorTarjan1983, top1}. 

\begin{observation}
    \V{contract}$(r)$ can be implemented in $\BigO(1)$ time. 
\end{observation}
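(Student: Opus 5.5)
We need to show that \V{contract}$(r)$ takes $\BigO(1)$ time. The plan is to augment the globally biased search tree of \cref{hp:fastpath} that stores each heavy path with a lazy additive tag, and to check that this tag survives the search tree operations.

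First, we fix the representation. Each edge of a heavy path is associated with the unique internal search tree node whose left and right subpaths it connects, as described in \cref{hp:fastpath}. So the edges of the path are in bijection with the internal nodes of its biased search tree. We store the edge value implicitly: every internal node $z$ carries a tag $\delta(z)$, and the true value of the edge at $z$ is the sum of the tags on the search-tree path from $z$ to the root. Contracting the root heavy path, that is, adding $1$ to every edge on it, then amounts to incrementing $\delta$ at the root of $\pathh(r)$. Since \V{Path}$(r)$ for the root of $\Tree(G)$ is given (or we keep a direct pointer to the search tree of the root path), this is a single $\BigO(1)$ update.

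Second, we argue that the tags cost nothing asymptotically in the other operations. Every dynamic operation in \cref{hp:fastpath} (\V{Split}, \V{Concatenate}, \V{Reweight}, and the rotations or restructurings they perform) only touches nodes along root-to-node search paths. Before restructuring a node, we push its tag down to its children and into its own edge value. This is the standard technique of \cite{SleatorTarjan1983, top1}. Each push costs $\BigO(1)$ per touched node, so the bounds of \cref{hp:fastpath}, and hence of \cref{hp:expose} and \cref{spqr:spliceslice}, are preserved. \V{Reverse} does not interact with the tags, since reversing a path does not change which edge is which. Reading an edge's value, for instance to decide whether a node belongs to an R-node representation, is done on a search path that has already been traversed, and therefore carries no extra asymptotic cost.

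The main obstacle is the bookkeeping that keeps this lazy-tag discipline consistent with the biased-tree restructuring and with the child-set trees. Dashed edges are not stored in path trees, so their values must be stored explicitly. When \V{splice} turns a dashed edge solid, or \V{slice} turns a solid edge dashed, the edge's value has to be materialized before it moves between structures. This is possible because \V{Split} at that node first pushes all tags along the search path down to the node, and that push happens within the running time already charged to the split.
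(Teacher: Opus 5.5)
Your proposal is correct and matches the paper's argument: the paper only notes that contracting the root heavy path amounts to adding $1$ to every edge value on it, and that this is done in $O(1)$ by putting a lazy token at the root of that path's search tree and pushing it down as in Sleator--Tarjan and top trees. Your write-up spells out the standard push-down bookkeeping the paper leaves to those references. For consistency with the later path-aggregate lemma, the root's stored zero-count and $U_x$/$L_x$ summaries should be switched to their ``all-positive'' versions when the token is applied, which also takes $O(1)$.
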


As contract makes every P and S-node along the heavy root path into a single R-node, we might have invariant violations at each dashed child. In particular all the secondary children of the path are now primary.

\begin{lemma}
    \V{contract} does not violate the pre-split invariants.
\end{lemma}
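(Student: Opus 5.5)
We check the three pre-split invariants one at a time, looking only at nodes whose status \V{contract}$(r)$ can change. After \V{contract}$(r)$, the root solid path $\pi$ becomes a single R-node $R^*$: every edge on $\pi$ gets positive value, so all nodes of $\pi$ are absorbed. The only other changes are to nodes adjacent to $\pi$ through dashed edges, and their type relation to their parent changes. We exploit that the tree had no selected vertices before the contraction (or that the selected vertices sit in $\pi$, which now forms a single node). Then $R^*$ has no child groups: by \cref{incworst:childgroup}, child groups are only defined for S- and P-nodes. So $R^*$ itself is neither splittable nor mergeable, since it has no secondary child. The reason is that every dashed neighbour of $\pi$ now hangs below an R-node of a different type, so it is primary by definition. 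Subtrees not touching $\pi$ are unchanged, so their invariants carry over.

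Next we classify the dashed children $\nu$ of $R^*$ into two kinds. The first kind are nodes that were already primary children of some node of $\pi$. Their skeletons, solid degree and children are untouched, so \cref{inv1} and \cref{inv2} continue to hold for them. For \cref{inv3}, a splittable such node is now the dashed child of an R-node, which is exactly what \cref{inv3} allows. A non-splittable one is trivially fine.

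The second kind are the former secondary children of S- or P-nodes on $\pi$; these are the delicate case. By \cref{inv2}, each such $\nu$ was a dashed child whose skeleton is a proper child group of its parent. In the contracted tree $\nu$ is a primary dashed child of $R^*$, so the first clause of \cref{inv2} (which concerns secondary nodes) no longer applies to it. For \cref{inv1}, we argue that $\nu$ has no secondary child. Its children were primary before, because a secondary child of $\nu$ would make $\nu$ mergeable, and \cref{inv1}--\cref{inv2} together with \cref{presplit:mergesplitinvariants} rule out such chains. Hence $\nu$ is not mergeable, and \cref{inv1} holds whether or not $\nu$ is now splittable. Finally, \cref{inv3} is satisfied because $\nu$ is a dashed child of the R-node $R^*$.

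The main obstacle is the second kind of child. We must make sure that making these nodes primary creates no hidden violation of \cref{inv2} lower down. Concretely, if $\nu$ has a solid heavy child that is a P-node, its child groups could depend on $\nu$'s status. We would argue this does not happen: \cref{split} already made the edge from $\nu$ to its heavy child solid, and $\nu$ is an endpoint (tail) of its own solid path. The child groups of $\nu$'s descendants only depend on solid degrees and on selected vertices, and neither changes. The remaining care is bookkeeping for the root path's endpoints, where the former cases 1(b) and 1(d) of \cref{incworst:childgroup} applied. Since those endpoints are absorbed into $R^*$, no case analysis survives.
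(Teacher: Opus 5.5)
Your proposal is correct and follows the same route as the paper: the root path is absorbed into $R^*$, which has no child groups, and the former secondary children become primary dashed children of an R-node. That is exactly what \cref{inv3} permits, and \cref{inv1} and \cref{inv2} are unaffected. The paper only asserts that last part as trivial; your extra check that a former secondary child has no secondary children is right, but the clean reason is the first clause of \cref{inv2} (a secondary node's parent must be primary), not \cref{presplit:mergesplitinvariants}.
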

\begin{proof}
    All secondary children of primary nodes on the root path are now themselves primary nodes, which are children of the new R-node. This does not cause any violations of \cref{inv3}. \cref{inv1,inv2} are both preserved trivially.
\end{proof}

\subsection{Representing the Pre-split SPQR-tree}\label{spqr:repr}
With the technical description of the data structure done, we can describe the details of representing the nodes of the pre-split SPQR-tree. 

As discussed in \cref{hp:implement}, a node of any heavy path decomposition needs to store at least its parent edge and a data structure which keeps track of a set of dashed children. The child set should support fast retrieval of heavy children and insertion/deletion in $\BigO(\log\frac{W}{\w(c)})$ for an instance with total weight $W$ and child $c$. A P-node has no inherent ordering of the edges and as such, the 'default' representation by way of a weight-ordered biased search tree is sufficient for storing the child set. An S-node on the other hand must keep track of the cyclical ordering of the dashed children imposed by the virtual edges of the skeleton. Let $e$ be the solid child of some S-node $s$. $e$ and $(s,p(s))$ divide the virtual edges of the skeleton into two paths. We store these in two separate biased binary search trees in the correct cyclical ordering, splitting and concatenating as needed when $s$ changes solid children. In particular, when making a solid edge dashed we merge the two child lists and the solid child into a single child set. When making a dashed edge solid, we split the child set by said edge. This takes constant time if the edge is heavy. Additionally we have
\begin{lemma}
    We can determine if the child set of any node contains positive or 0-valued edges in constant time along with the heaviest dashed child in constant time.
\end{lemma}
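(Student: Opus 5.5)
The plan is to augment the child set of each node, stored as a (globally) biased search tree, with a constant amount of aggregate information per internal search-tree node, maintained bottom-up like the usual augmented search trees. At every internal node $z$ of a child-set search tree I store two things. The first is a pair of bits $(\mathrm{pos}(z), \mathrm{zero}(z))$ recording whether the subtree of $z$ contains a dashed child whose connecting tree edge has positive value, respectively value $0$. The second is a pointer $\mathrm{max}(z)$ to the child $c$ in the subtree of $z$ maximising $\w(c)$.

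Both are decomposable aggregates. The bits of $z$ are the OR of the bits of its two children and of the leaf it separates. The max pointer is the heavier of the children's max pointers. So each can be recomputed in $\BigO(1)$ from the children of $z$ whenever the search tree is restructured. The query then just reads the aggregates at the root of the child set. That root is reachable in $\BigO(1)$ because the node stores a pointer to its child set. For S-nodes, whose dashed children are split into two cyclically ordered trees, we read the roots of both trees and combine the answers in constant time.

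The key steps, in order, are as follows.
\begin{enumerate}
\item Check that every operation of \cref{hp:fastpath} used on child sets (\V{Insert}, \V{Delete}, \V{Split}, \V{Concatenate}, \V{Reweight}) only modifies search-tree nodes along the access paths it touches. The aggregates can then be refreshed on those nodes within the same asymptotic cost, since each refresh is $\BigO(1)$ per touched node.
\item Handle changes to the edge values themselves. These come from \V{contract}, which lazily adds $1$ along the root heavy path. That path consists of solid edges, so the values of dashed edges stored in child sets do not change while they sit in child sets. When a solid edge becomes dashed in splice or slice, its value is pushed down (the lazy token is resolved) as it is split off the path. It is then inserted into the child set with its true value, and the bits are updated along the insertion path.
\item When a child's subtree weight $\w(c)$ changes, this happens through \V{Reweight} of $c$ in its parent's child set. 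Reweight updates the ancestors of $c$ in the search tree, so the max pointers are fixed there as well.
\end{enumerate}

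The main obstacle I expect is the interaction between the lazy contract tokens and the stored ``positive value'' bits. A contracted edge may leave a heavy path through a split of the path's search tree without its pending increments having been pushed all the way down. I would handle this by having \V{Split} of a heavy path push lazy tokens along its access path, which is standard and costs $\BigO(1)$ per visited node. This guarantees that any edge extracted as the dashed edge of a slice or splice carries its exact value before it enters a child set.

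A secondary subtlety is that the biased trees restructure by rotations during concatenation. Each rotation changes $\BigO(1)$ nodes, so recomputing the aggregates after every rotation keeps the bounds of \cref{hp:fastpath} unchanged. With this in place, both queries in the statement are constant-time reads at the root of the child set.
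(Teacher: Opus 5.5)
Your proposal is correct and takes essentially the same approach as the paper: each internal node of the (at most two) biased search trees forming a child set stores presence bits for $0$-valued and positive edges and a heaviest-child pointer, maintained bottom-up and read at the root(s) in constant time. Your extra care about lazy \V{contract} tokens, rotations and \V{Reweight} fills in maintenance details that the paper covers only with a one-line remark, and it is consistent with that remark.
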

\begin{proof}
    The child set consists of at most 2 biased search trees over the edges. By letting every internal node of the search tree keep a bit indicating the presence of a 0-edge and a positive edge in its subtree we can maintain the required information in the root. The bit of any internal node is easily maintained during \V{Insert} and \V{Delete}. The same technique allows tracking of the heaviest child.
\end{proof}

\begin{lemma}
    We can maintain the child groups of S and P-nodes under \V{splice}, \V{slice}, \V{reverse}, \V{link-exposed} and \V{cut-exposed} without additional asymptotic cost.
\end{lemma}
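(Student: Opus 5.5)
We argue operation by operation, using the representation of child sets from \cref{spqr:repr}. The central observation is that each heavy path operation changes the solid degree of only a constant number of nodes, and child groups depend only on local information. By \cref{incworst:childgroup}, the child groups of a primary S- or P-node $\mu$ are determined by three things: the (at most two) virtual edges of $\Gamma(\mu)$ that correspond to incident solid tree edges, whether $\mu$ lies on the root path, and the selected vertices. Since no vertices are selected during these operations, only the first two matter. So it suffices to show two things. First, each operation changes the solid/dashed status of edges at $O(1)$ nodes. Second, for each such node, the child groups can be recomputed with search tree operations whose cost is dominated by the splits and concatenations the heavy path operation already performs.

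For an S-node $s$, we maintain the child groups directly as the two biased search trees that store the cyclic arcs of virtual edges between the parent edge and the solid child edge. A P-node has a single child group, namely its child set. Thus maintaining child groups reduces to maintaining these two search trees when the solid child of $s$ changes.

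For \V{splice}$(\mu)$, the solid child of $p(\mu)$ changes from $h$ to $\mu$. We concatenate the two arcs of $p(\mu)$ together with $h$, then split the result at the virtual edge of $\mu$. Concatenation costs at most the cost of inserting $h$, which is constant since $h$ was heavy. Splitting at $\mu$ costs $O(\log \frac{wt(p(\mu))}{\w(\mu)})$, which matches the cost of deleting $\mu$ from the child set that splice already pays for in \cref{spqr:spliceslice}. \V{slice} is symmetric: we merge the two arcs with $\mu$, paying as for inserting $\mu$, and then split at the heavy child in constant time. \V{link-exposed} adds one dashed child to a node with no selected vertices, so that node has no child groups or only one changed set, which costs one insertion. \V{cut-exposed} removes one edge, which costs one split of an arc at the removed edge and is charged to the path split. \V{reverse} swaps the roles of parent and child edge at every internal node of the root path. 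Here we do not touch the nodes individually. Instead, we store in each node of the path search tree a lazy ``reversed'' bit, propagated exactly like the reversal bit of \cref{hp:fastpath}, and let it swap the interpretation of the two arcs. This costs $O(1)$.

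I expect the main obstacle to be \V{reverse}. Its correctness requires that swapping the two arcs at each internal node is exactly the new child-group structure. For the two endpoints, whose solid degree changes from 1 to 1 on the opposite side, we need the observation (as in the earlier proof for \V{reverse}) that, with no selected vertices, endpoints of the root path have no child groups. So only $O(1)$ explicit work is needed there. The remaining point to verify is that the primary/secondary status and the counts of secondary children never change without an accompanying split or merge, which was already established in the preceding lemmas. Hence every update is absorbed in the stated bounds.
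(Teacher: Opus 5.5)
Your proposal is correct and follows essentially the same route as the paper. At $p(\mu)$ you concatenate the two arcs with the old heavy child, which costs $O(1)$ since that child is heavy, and then split at $\mu$ in $O(\log\frac{wt(p(\mu))}{\w(\mu)})$; \V{slice} is symmetric. You realise \V{reverse} by letting the propagated lazy reverse bit swap the two lists at each node (the paper swaps and reverses both, which is the same up to orientation convention), and \V{link-exposed}/\V{cut-exposed} need only a single child-list modification.

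One small slip: in \V{cut-exposed}$(u)$ the removed edge $(u,p(u))$ is solid, so at $p(u)$ the two arcs are concatenated (as when a solid edge becomes dashed) rather than an arc being split. This is still a single search-tree operation, matching the paper's ``split followed by a merge, only once.''
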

\begin{proof}
    The representation of P-nodes is unchanged, hence we only consider S-nodes. When splicing and slicing the node $\mu$, its dashed and solid children don't change. This is not the case for $p(\mu)$. Suppose that we are splicing $\mu$ and that $p(\mu)$ has a heavy child $h(p(\mu))$. Then the required 3-way concatenate between the left child set, $h(p(\mu))$  and the right child set is
    $$
    \BigO(\log \frac{wt(p(\mu))}{\w(h(p(\mu)))}).
    $$ 
    Afterwards split this by $(\mu,p(\mu))$ in 
    $$
    \BigO(\log \frac{wt(p(\mu))}{\w(\mu)}).
    $$
    If $p(\mu)$ does not have a heavy child, forego the first step. This is clearly an 
    $$
    \BigO(\log\frac{wt(p(\mu))}{\w(\mu)})
    $$ extension to \V{splice}.
    The situation is symmetric for slice.
    To implement \V{reverse}, we only need to describe how to reflect the propagation of the lazy reverse bit. This swaps the parent and heavy child of $\mu$. The child lists of $\mu$, however, already reflect the split of the dashed edges between these; only the order needs to be swapped. We do this by swapping the two lists and reversing both. \V{link-exposed} and \V{cut-exposed} simply add or remove a dashed child from the child list. Doing this might require a split followed by a merge, however, only once. This is no slower than linking and cutting in the previous representation, thus the run time is unchanged.
\end{proof}\noindent
Now to enable split and merge we also need one addition to our data structure. Let each non-root solid path have an optional tail variable where a new tail can be added without a change to the biased binary search tree. With this, we can analyse merge and split.
\begin{lemma}
    Merge and split can be implemented in $\BigO(1)$ time.
\end{lemma}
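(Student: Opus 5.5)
The plan is to show that both operations only touch a constant number of pointers and constant-size pieces of the representation. No biased-search-tree operation should ever run on a set whose weight ratio could be large. The key observation is that in a pre-split tree a proper child group is already stored as a contiguous unit. For an S-node $\mu$, the two child groups are exactly the two biased search trees that store the virtual edges between the parent edge and the solid child (cases (a) and (c) of \cref{incworst:childgroup}). In the endpoint cases (b) and (d), the groups are delimited by a selected vertex, and we store the split at that vertex explicitly. For a P-node, the single child group is the whole dashed child set. So splitting never needs a search-tree split or concatenate. It only has to detach a root pointer.

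For \V{Split}$(\mu)$, I would proceed in four steps:
\begin{enumerate}
\item For each proper child group, create a new node $\mu_i$ of the same type as $\mu$.
\item Hand the corresponding biased search tree of $\mu$ over to $\mu_i$ as its child set. This moves all dashed tree edges at once, since the children are stored only in that search tree. The children's parent pointers are taken care of by letting each child-set root point to its owner node, rather than storing the owner in every child.
\item Add one new virtual edge to each of $\Gamma(\mu)$ and $\Gamma(\mu_i)$.
\item Insert $\mu_i$ as a dashed child of $\mu$. Its weight is the total weight of the moved group, which is available at the search-tree root, so the insertion costs $O(1)$ because $\mu$'s child set is now empty or a singleton.
\end{enumerate}
If $\mu_i$ has a heavy child, read off in constant time via the heaviest-child field of the preceding lemma, make that edge solid. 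We do this with the optional tail variable: $\mu_i$ becomes the new tail of that child's solid path without restructuring its biased search tree. The weights $wt(\mu)$ and $\w(\mu)$ are unchanged, because the weight of $\mu$'s subtree is preserved, so $\mu$'s own heavy path needs no reweighting.

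\V{Merge}$(\mu)$ reverses this. We detach the at most two secondary children from $\mu$'s child set; by \cref{inv2} there are at most two of them and the child set is tiny. If a secondary child used the optional tail to carry a solid edge, we clear the tail variable, which reverts that solid path to its original tail and so makes the edge dashed. We then reinstall the secondary children's search trees as $\mu$'s left and right child lists and drop the two virtual edges. All of this is $O(1)$ pointer surgery.

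The main obstacle is ensuring that no step silently requires work proportional to the group size or to a logarithmic weight ratio. There are three places where this could happen: updating parent pointers of moved children, reweighting along the ancestor's heavy path, and turning an edge solid inside a biased tree. I would handle them by the owner-pointer indirection, by noting that total subtree weights are invariant, and by using the optional tail variable, respectively. For the third point I would check carefully that at most one optional tail is ever active per solid path, which follows from \cref{inv2} (at most one secondary level). This ensures that \V{Path} and \V{Tail} queries still run within their stated bounds.
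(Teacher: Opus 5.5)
Your proposal is correct and follows the paper's own argument. Splice and slice already keep an S-node's virtual edges in two biased search trees delimited by the parent and solid-child edges (and a P-node's dashed children in a single set), so a split only hands these trees to new nodes; the children's parent pointers change implicitly because they point into the child list, the optional lazy tail lets the new node take a heavy child as solid without restructuring, and merge undoes all of this.

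You are more explicit than the paper about merge and about the selected-vertex endpoint cases. One caveat: your claim that subtree weights are unchanged by a split depends on how node weights are assigned to the new secondary nodes, and neither you nor the paper specifies this.
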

\begin{proof}
    As the child lists are maintained in splice and slice, merging and splitting become easy. 
    To split a P or S-node $\mu$, split off each side of the child lists and assign them to new nodes. This also implicitly changes the parent pointer of all dashed children, as they point to an item in the dashed child list of $\mu$.
    %
\end{proof}
The inclusion of these lazy tails do not change the implementation of \V{reverse}, \V{link-exposed} and \V{cut-exposed} as these operate on the root solid path. We now prove that we can make \V{splice} and \V{slice} (and consequently \V{expose} and \V{conceal}) respect lazy tails.
\begin{lemma}
    The tail of a solid path is lazy only if it is secondary or it is the child of an R-node.
\end{lemma}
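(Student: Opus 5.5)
We argue by understanding when a tail becomes lazy, and proceed by induction over the sequence of operations that modify the pre-split SPQR-tree. The invariant is: every solid path whose lazy tail variable is in use ends in a node that is secondary, or whose parent is an R-node (a node incident to a positively valued edge). Initially no tail is lazy, so the statement holds. A tail is set lazily only by \V{Split} and \V{Merge}, since only these change the tail of a non-root solid path without touching its biased search tree. So the core of the proof is a case analysis of these two operations. We must also check that every other operation (\V{splice}, \V{slice}, \V{reverse}, \V{link-exposed}, \V{cut-exposed}, \V{contract}) either leaves lazy tails untouched or resolves them.

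\textbf{Step 1: \V{Split}$(\mu)$.} When $\mu$ is split, a proper child group is moved to a new secondary child $\mu_c$. If $\mu_c$ receives a heavy child, the edge to it is made solid, and the solid path of that heavy child gains $\mu_c$ as its new tail. That is precisely the lazy-tail situation. By \cref{inv2}, $\mu_c$ is secondary, so the claim holds. Otherwise, when the heavy child's path is not extended, no lazy tail is created.

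\textbf{Step 2: \V{Merge}$(\mu)$.} The secondary children are contracted into $\mu$ and their solid edges become dashed. The lazy tail $\mu_c$ thereby disappears, and the former heavy child again becomes the real tail of its path. So merge creates no new lazy tails and only removes them.

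\textbf{Step 3: \V{contract}.} This turns the nodes of the root path into a single R-node. Each secondary child of that path becomes a primary child of an R-node, by the argument given for \V{contract} preserving the pre-split invariants. Lazy tails that were secondary therefore become children of an R-node, and the second alternative of the statement applies.

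\textbf{Step 4: the remaining operations.} \V{reverse}, \V{link-exposed} and \V{cut-exposed} act only on the root solid path, as noted before the lemma, so non-root lazy tails are unaffected. For \V{splice} and \V{slice}, the extended versions first merge $p(\mu)$ or $p(p(\mu))$ and $\mu$, which removes the relevant lazy tails by Step 2. They then perform the ordinary operation on real tails and re-split, which by Step 1 creates only secondary lazy tails.

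The main obstacle is Step 4 for \V{splice}. There the tail $\mu'$ of $\mu$'s path may itself be a lazy tail, and the subsequent split of $\mu'$ and $p(\mu)$ must not produce a lazy tail that is primary with a non-R parent. We handle this by checking against \cref{inv3}. After the operation, a primary node with a secondary-created solid child exists only if it was split. By \cref{inv3}, a split primary node is either on the exposed root path, where tails are not lazy, or the dashed child of an R-node. In both cases the claim holds.
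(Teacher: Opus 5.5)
Your induction over operations is essentially the paper's argument, spelled out in more detail: \V{Split} creates a lazy tail only at the freshly created secondary child, \V{Merge} only removes lazy tails, \V{contract} turns secondary lazy tails into children of the new R-node, and the remaining operations either leave non-root paths untouched or merge before and re-split after. The one flaw is your final paragraph, which misreads \cref{inv3}: that invariant restricts where a primary node may still be \emph{splittable} (only as a dashed child of an R-node) and says nothing about where \emph{split} primary nodes sit, which is in fact throughout the tree (e.g.\ every interior S-node of a heavy path with a proper child group). So that appeal is invalid. It is also unnecessary: by your Step~1, every lazy tail created by the re-splits in \V{splice} is a new secondary node, and a lazy $\mu'$ stops being a tail at all once it is spliced into its parent's solid path.
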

\begin{proof}
First note that this invariant is preserved by any call to split and merge, as a primary node can be split and merged at most once. The heavy path operations all merge and split invalidated nodes before and after operating thus preserving the invariant. The only exception is \V{contract} which gives every edge on its path a positive value, making all secondary children of the root path primary; which is allowed as they are all now children of an R-node. Since no two secondary nodes are neighbours, this also bounds the number of lazy endpoints of any given path to 1.
\end{proof}
\begin{lemma}
    Let a solid path with a lazy endpoint be dirty and one without clean. We can modify the biased search tree data structure of Bent, Sleator and Tarjan \cite{BentSleatorTarjan1985} to lazy tails with the following operations:
    \begin{itemize}
        \item \V{AddLazy}$(P,i)$: For clean $P$, add $i$ as a lazy endpoint of $P$ in $\BigO(1)$ time. 
        \item \V{RemoveLazy}$(P)$: Remove and the return the lazy endpoint of $P$ in $\BigO(1)$ time. This leaves $P$ dirty.
        \item \V{ConcatenateLazy}$(P, i, Q)$: Where $Q$ is clean. Return the concatenation of $P$, $i$ and $Q$ in worst-case $\BigO(\log\frac{\max(W_P,W_Q)}{\max(w_{p^+},w_i,w_{q^-})})$. 
        \item \V{SplitLazy}$(P,i)$: Split $P$ at $i$, yielding two new trees, one consisting of the upper part of $P$ and $i$ and one consisting of the lower part in $\BigO(\log\frac{W_P}{w_i})$.
    \end{itemize}
    Using these instead of their non-lazy counterparts in \V{splice}, \V{slice}, \V{expose} and \V{conceal} does not change the running time. 
\end{lemma}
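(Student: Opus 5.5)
The plan is to store the lazy endpoint outside the biased search tree, in a single extra field attached to the root of the tree that represents $P$. The four operations then reduce to $\BigO(1)$ pointer manipulations plus at most one ordinary operation of \cref{hp:fastpath}.

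\V{AddLazy} and \V{RemoveLazy} only write or clear this field. Since $P$ is clean before \V{AddLazy}, the field is empty, and both operations take $\BigO(1)$ time. To keep \V{Tail}, \V{Path} and the weight bookkeeping correct, the root also records the total weight $W_P + w_i$ and the tail $i$. \V{Tail} is answered by first checking the field. For \V{Path}$(v)$ with $v=i$ we let $i$ point directly to the root record, which costs $\BigO(1)$. Because the field lives at the root, it has to be moved whenever the root of the tree changes. Every restructuring operation of Bent, Sleator and Tarjan finishes by producing a new root, so we copy the field there in $\BigO(1)$ extra time.

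For \V{ConcatenateLazy}$(P,i,Q)$ we first call \V{RemoveLazy} on $P$ if $P$ is dirty, and fold the lazy element into the ordinary tree of $P$. Folding means a single concatenation with a one-element tree, which costs $\BigO(\log \frac{W_P}{w_{p^+}})$ where $p^+$ is the old lazy tail. We then concatenate $P$, $i$ and $Q$ with two calls to \V{Concatenate}. Bounding these calls by $\BigO(\log\frac{\max(W_P,W_Q)}{\max(w_{p^+},w_i,w_{q^-})})$ is the step I expect to be the main obstacle. The folding cost is in terms of $w_{p^+}$, whereas the claimed bound only involves the largest of the three boundary weights. My first attempt is to concatenate in the order that attaches the heaviest of the three middle items first. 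Then each individual call is charged against that maximum via the \V{Concatenate} bound of \cref{hp:fastpath}, and the folding term is absorbed because $w_{p^+}\le \max(w_{p^+},w_i,w_{q^-})$ only works in our favour when $p^+$ is heaviest. Otherwise we can merge $p^+$ together with $i$ first, in $\BigO(1)$ time, since these are two singletons.

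For \V{SplitLazy}$(P,i)$ we do an ordinary \V{Split} at $i$ in $\BigO(\log\frac{W_P}{w_i})$ time. If $P$ has a lazy tail, it belongs to the upper part, so we reattach it with \V{AddLazy} in $\BigO(1)$ time. The upper part is clean after the split, which makes this call legal.

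Finally, we substitute these operations into \V{splice} and \V{slice}. Their running-time analysis (\cref{spqr:spliceslice}) charges each split and concatenate at costs of exactly the form above, with the same weights, so the telescoping sums in \cref{hp:expose} are unchanged. The only new terms are $\BigO(1)$ additive costs per splice or slice. These are dominated because expose performs at most $\BigO(\log\frac{W}{wt(v)})$ splices, each of which already costs $\Omega(1)$.
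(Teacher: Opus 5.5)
There is a genuine gap in \V{ConcatenateLazy}, at exactly the step you flagged. Your \V{AddLazy}, \V{RemoveLazy} and root-stored lazy tail agree with the paper.

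You make $i$ an ordinary leaf of the biased tree using standard \V{Concatenate} calls. In any order, one of these calls creates the adjacency between $p^+$ and $i$. The guarantee of \cref{hp:fastpath} for that call is $\BigO(\log\frac{W}{\max(w_{p^+},w_i)})$, with $W$ the weight of the heavier side. It depends only on the two items meeting at the junction, never on $w_{q^-}$. So if $q^-$ is the only heavy item, joining $i$ to $Q$ first is cheap, but the subsequent join of $P$ with $iQ$ is still charged at the light junction $p^+ i$.

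Your fallbacks do not repair this:
\begin{itemize}
\item The inequality $w_{p^+}\le\max(w_{p^+},w_i,w_{q^-})$ gives $\log\frac{W_P}{w_{p^+}}\ge\log\frac{W_P}{\max(w_{p^+},w_i,w_{q^-})}$. That is the wrong direction, so the folding term can exceed the claimed bound.
\item Pairing $p^+$ with $i$ as two singletons only moves the uncontrolled junction to the boundary between $p^+$ and the rest of $P$.
\end{itemize}

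The loss matters. In \V{splice}$(\mu)$ with $\mu$ a lazy tail, the call is \V{ConcatenateLazy}$(P,\mu,Q)$ with $p^+=h(\mu)$ and $q^-=p(\mu)$. The junction between $h(\mu)$ and $\mu$ was never built, because the $\BigO(1)$ split that made $\mu$ a lazy tail skipped it. Both $wt(h(\mu))$ and $wt(\mu)$ can be $\BigO(1)$ while $\w(\mu)$ is large, for instance on a long bare path. Then your guarantee for this call is only $\BigO(\log\w(\mu))$ rather than $\BigO(\log\frac{\w(\rho)}{\w(\mu)})$. So your last paragraph's claim that the splice costs keep the same weights is false, and the telescoping sum in \cref{hp:expose} no longer goes through.

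The idea you are missing, which is what the paper does, is never to give a light $i$ its own position in the biased tree:
\begin{itemize}
\item If $w_i\ge w_{p^+}$ and $w_i\ge w_{q^-}$, concatenate normally. Both junctions touch $i$, so the cost is $\BigO(\log\frac{\max(W_P,W_Q)}{w_i})$.
\item Otherwise, join $P$ and $Q$ directly, at cost $\BigO(\log\frac{\max(W_P,W_Q)}{\max(w_{p^+},w_{q^-})})$; here the denominator already equals the maximum of all three weights. Then store $i$ as a lazy item at the lowest common ancestor of $p^+$ and $q^-$, which this join visits anyway.
\end{itemize}
This requires two invariants: at most one lazy item lies between two consecutive non-lazy items, and every lazy item has a strictly heavier neighbour.

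\V{SplitLazy} must then handle interior lazy items:
\begin{itemize}
\item A split at a non-lazy item hands an adjacent lazy item to the appropriate half.
\item A split at a lazy $i$ is performed at its heavier neighbour, which costs at most $\BigO(\log\frac{W_P}{w_i})$. Afterwards $i$ is reinserted into the upper half with \V{Insert}.
\end{itemize}
A single lazy field at the root cannot hold such interior items, so your structure has to be extended in this way before the claimed bounds hold.
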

\begin{proof}
    To support adding and removing a lazy endpoint, we will simply store it in the root of the biased search tree as needed. To support the concatenation of $P$ and $Q$, with some item $i$ in the middle we do the following: If $w_i \geq w_{p^+},w_{q^-}$, concatenate $P$, $i$ and $Q$ as normal. Otherwise start concatenating $P$ and $Q$ as normal. At some point during this operation we touch the lowest common ancestor of $p^+$ and $q^-$ in the search tree. Store $i$ here as a \textit{lazy item}\footnote{If an internal node has degree $B$, it can be the lowest common ancestor of at most $B-1$ leaf pairs, thus its out-degree at most doubles.}. We maintain that between two non-lazy items, there is at most one lazy item. This also maintains the invariant that for any lazy item,  at least one neighbour has strictly larger weight. \\
    To split at some non-lazy item $i$ use the normal split procedure, and associate any lower lazy neighbour of $i$ to the lower half. This is possible as split traverses the entire $i$-to-root path. If $i$ is lazy, split at $i$'s heaviest neighbour and add $i$ as a non-lazy child of the upper half using \V{Insert}. \\
    During  \V{splice} and \V{slice} we can assume cf. \cref{spqr:spliceslice} that no participating node is secondary and that we can restore the pre-split invariants after the slice in constant time. To \V{splice} $\mu$, first split $p(\mu)$'s solid path at $p(\mu)$. Then insert the  lower (possibly dirty) half as a dashed child of $p(\mu)$. Then if $\mu$ is not a lazy endpoint, we concatenate its heavy path as normal, otherwise, it is the lazy endpoint of an otherwise clean path and we use \V{ConcatenateLazy} to make $(\mu,p(\mu))$ solid. Note that even if $p(\mu)$ was lazy before the split, it is not lazy after.

    To \V{slice} $\mu$ split $p(\mu)$'s solid path at $p(\mu)$, remove $p(\mu)$'s heavy child from its child set and concatenate their solid paths. This may leave $\mu$ and heavy child of $p(\mu)$ as lazy. The running time is the same as for \V{splice}.
    As in \cref{hp:expose} this implies the running times for \V{expose} and \V{conceal}. 
\end{proof}

We can then move on to the representation of R-nodes. As mentioned previously, these are represented by connected components of edges with positive value. Fortunately, the pre-splitting system does not require R-nodes to transform during splice and slice. However, there are still implementation details to discuss. We will denote an S or P-node with a positive edge as \textit{obsolete}. The representation of a relaxed SPQR-tree will then only consist of S-nodes, P-nodes, and obsolete nodes. Note that since \V{contract} only operates on solid paths, the number of vertices in the skeleton of any obsolete node is at most 4. As an R-node is not represented by a single object, we will let each obsolete node represent the R-node as a whole and thus when we say e.g. \V{expose} an R-node, we will simply expose any obsolete node representing the R-node. 

To attach a new adjacent edge to the R-node when e.g. using \V{Subdivide}, we do the following: For each vertex $v$ of $G$, pick consistently a node of $\Tree(G)$ whose skeleton contains the vertex; we call this node the \textit{representative} of $v$. This is stored in a simple deterministic logarithmic-access-time dictionary over all vertices, and the representative is picked in an arbitrary but consistent way, say e.g. as the first node to contain the vertex\footnote{Note that in order to facilitate \V{select}, any S-node representative must also point to the location in their child set which represents their representee vertex.}. Consider then the representatives $\mu$ and $\nu$ of the nodes $u$ and $v$ and consider the path $\mu ~ \dots ~ \nu$. Let $\nu^\disk$ be the node containing $v$ closest to $\mu$ and define $\mu^\disk$ symmetrically. We call $\mu^\disk$ and $\nu^\disk$ the $(u,v)$-poles of $\Tree(G)$. Note that the path $\mu^\disk \ \dots \ \nu^\disk$ must intersect the critical path $m(u,v)$ for any vertex pair. Note also that through varying potential choices of representative for $u$ and $v$, if $u$ and $v$ do not share an SPQR-node, then the $(u,v)$-poles remain unique. 

Now, the 0-valued neighbours of any R-node will either be left over from a contraction in the form of former secondary children attached to some previously split obsolete node or attached after the initial contraction. Observe first that
\begin{observation}
    No SPQR-tree operation subdivides a positive edge.
\end{observation}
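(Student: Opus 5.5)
The plan is a case analysis over every operation that can create, move, or subdivide an edge of $\Tree(G)$. For each one we check that it only ever subdivides $0$-valued edges. An edge gets a positive value only through \V{contract}, which adds $1$ along the root solid path. So once an edge is positive, it stays positive until it is absorbed into the interior of an R-node representation. The statement then reduces to showing that no operation places a new node in the middle of an edge that already lies inside an R-node representation.

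First we list the operations that can introduce a new node between two adjacent tree nodes. \V{Split}$(\mu)$ creates secondary children attached by fresh dashed edges, and these edges get value $0$. The operation only moves existing tree edges, together with their values, to the new child; it never puts a node on an existing edge. \V{Merge} contracts edges and so subdivides nothing. The heavy path operations \V{splice}, \V{slice}, \V{reverse}, \V{link-exposed} and \V{cut-exposed} only recolour edges or add and remove whole edges, so none of them subdivides anything either. This leaves the two SPQR-level updates of \cref{spqr:updates}.

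For \V{InsertEdge}, cases 1 and 3 are realised as a select followed by \V{contract}, so the only remaining subdivision is case 2. There $m(u,v)$ is a $u,v$-marked edge and gets subdivided by a new P-node. We then argue that this marked edge always has value $0$. If it were positive, both endpoints would belong to the same R-node representation. A single R-node contains $u$ and $v$, so $m(u,v)$ is a single node, which contradicts case 2.

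For \V{Subdivide}, the new S-node $\chi$ is attached as a leaf via a new virtual edge; in case 3 a non-virtual edge of a P-node is turned into a virtual one. In both cases an edge is added, not subdivided. The new edge has value $0$ because it arises from a link, not from a contraction.

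The main obstacle is case 2 of \V{InsertEdge}. A $u,v$-marked critical path is defined with respect to the strict SPQR-tree, but the implementation works on the relaxed, represented tree. Two adjacent obsolete nodes inside one R-node can share exactly $u$ and $v$, so their connecting edge can look marked by $\{u,v\}$ even though it lies inside the R-node. To handle this, we identify $m(u,v)$ at the level of represented nodes, where all obsolete nodes joined by positive edges count as one R-node. Under this identification a positive edge is never a critical path of length two, and the dispatch in \V{InsertEdge} never selects such an edge for subdivision.
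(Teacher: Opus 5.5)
Your overall strategy is the right one, and it matches the reasoning the paper leaves implicit; the statement is recorded as an observation without proof. The only operation that puts a new node on an existing tree edge is the $u,v$-marked case of \V{InsertEdge}, and the edge used there is $0$-valued. Two points in your case analysis need repair.

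\textbf{Case 1 and \V{Split}.} Case 1 of \V{InsertEdge} is not realised by a select followed by \V{contract}; \V{contract} is used only in case 3.
\begin{itemize}
\item When $m(u,v)$ is a single P-node, $e$ is added to its skeleton. When it is a single R-node, $e$ is attached at one of the $(u,v)$-poles.
\item When $m(u,v)$ is a single S-node, $u$ and $v$ are selected so that the node is pre-split on either side of them, and it is then relabelled as a P-node.
\end{itemize}
Your conclusion for case 1 still holds, but for a different reason: $m(u,v)$ is then a genuine, non-obsolete S-node, so every tree edge it redistributes is $0$-valued.

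The same reason is what makes \V{Split} harmless, not your claim that ``it never puts a node on an existing edge''. When a proper child group contains a single virtual edge, the corresponding dashed child is re-hung below the new secondary node. Topologically, that does subdivide that tree edge. It is safe only because splits act on non-obsolete S- and P-nodes, whose incident edges are all $0$-valued. The paper relies on R-nodes never being transformed by the pre-splitting scheme.

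\textbf{Case 2.} Your contradiction argument only shows that the strict-tree edge joining the two nodes of $m(u,v)$ is not internal to an R-node. That is immediate from how R-nodes are represented. The real content is the claim you only assert at the end: the algorithm never mistakes a positive $\{u,v\}$-marked edge for that edge. Such positive edges do exist in the representation, so this needs a justification. For instance, the two path edges of a P-node with poles $u,v$ that was absorbed by a \V{contract} are positive and $\{u,v\}$-marked.

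The paper supplies this by construction. \cref{critpath:find} recognises cases 2 and 3 only through a $0$-valued $(u,v)$-marked edge supplied by the dictionary of \cref{critpath:edgedict}. Entries turned positive by \V{contract} remain as tombstones, and they are rejected by checking the value along the root path. Cite this mechanism in place of your assertion about the dispatch, and the argument is complete.
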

Thus, once both poles are contained in the same R-node, they never change again. Consequently, if we are attaching a new SPQR edge to an R-node, then it must be the only node to contain the marking of the new edge. Thus, the pole-path must be entirely contained in an R-node which means that its endpoints form a natural placement of the new edge in the R-node. The child set of an obsolete node can thus be characterised as the at most 4 original edges from before the contraction, plus any edges added later, where the node is a pole of the marking.
We have now seen that all SPQR-tree operations will rely on the ability to expose the critical path, which we will now show how to find given this representation.

\input{tex/97FindCritPath}

\input{tex/HPQueryAndOperation}

%% file: tex/figures/splitmerge.tex
\begin{figure}[t]
    \centering
    \begin{subfigure}{0.65\linewidth}
        \centering
        \includegraphics[width=\linewidth]{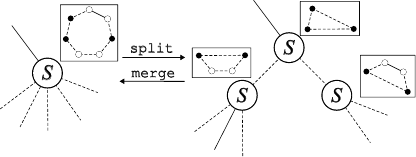}
        \caption{Splitting an S-node $s$ with a single selected vertex. The virtual edge with solid endpoints corresponds to the solid edge of $s$. After the split $s$ has two secondary children, one of which happens to have a heavy child.}
    \end{subfigure}

    \begin{subfigure}{0.65\linewidth}
        \centering
        \includegraphics[width=\linewidth]{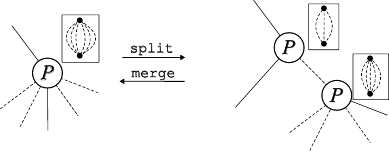}
        \caption{Splitting a P-node $p$ with 2 solid edges. After the split $p$ has a single secondary child containing all dashed children of $p$. One dashed child is heavy and is thus a solid child in the secondary node.}
    \end{subfigure}
    \caption{The result of split and merge for the different kinds of SPQR-nodes. The skeletons are drawn immediately next to the nodes. Virtual edges are marked with dashed lines, and non-virtual marked with solid lines.}
    \label{fig:splitmerge}
\end{figure}

%% file: tex/97FindCritPath.tex
\subsection{Finding critical paths}
Now to find the critical path of a pair of vertices, we can use the fact that the $(u,v)$-poles intersect the path. In particular, if there are nodes shared by $u$ and $v$, they are guaranteed to be poles, and if they do not share a node, the poles are exactly the endpoints of the critical path. We will use a well-known technique from dynamic trees and use the search tree representation of solid paths and dashed children. As explained in \cite{SleatorTarjan1983}, we can view each biased search tree as a binary tree over the edges. In particular we can view the biased search tree over a path as a binary tree where each internal node represents the paths of its right and left children, connected to an edge which resides in the internal node. By storing auxiliary information in each internal node, such as e.g. the minimum edge on the path, we allow logarithmic time queries about entire paths. 

Before computing the actual critical path of some pair of nodes, we first need some definitions.
\begin{definition}
    Let $\pi$ be the root path of an SPQR-tree $\Tree(G)$. We define the following:
    \begin{itemize}
        \item $\Z(\pi)$: The number of 0-valued edges on $\pi$.       
        \item $U_u(\pi)$: The maximal subpath of $\pi$ starting at the upper endpoint of $\pi$ using only edges marked by $u$
        \item $L_u(\pi)$: The maximal subpath of $\pi$ starting at the lower endpoint of $\pi$ using only edges marked by $u$.
    \end{itemize}

\end{definition}\noindent
Note that for any path $P$, $U_x(P)$ and $L_x(P)$ are empty for all except two vertices of $G$. Note also that positively valued edges are not truly edges of the SPQR-tree and, in a sense, act as edges with every marking. In the following we will only store $U_x$ and $L_x$ for any path using the marking of the first edge, but let any subsequent positive edge continue the path.
\begin{lemma}\label{spqr:pathinfo}
    It is possible to extend the data structure of \cref{hp:fastpath} to, in each internal node $I$ of the biased search tree, store the following information about the subpath $\pi(I)$ it represents:
    \begin{itemize}[label=$\bullet$]
        \item $|\pi(I)|$, $\Z(\pi(I))$
        \item For $U_x(\pi(I))$ of non-zero length: $|U_x(\pi(I))|$ and $\Z(U_x(\pi(I)))$ along with the lowest $x$ marked edge.
        \item For $L_x(\pi(I))$ of non-zero length: $|L_x(\pi(I))|$ and $\Z(L_x(\pi(I)))$ along with the highest $x$ marked edge.
    \end{itemize}
    This information can also be maintained while adding a positive $\Delta$ to the entire path.
\end{lemma}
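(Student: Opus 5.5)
The plan is to treat each listed quantity as a monoid-style summary of a path. The summary of the concatenation of two subpaths joined by a middle edge is then computed in $\BigO(1)$ time from the two summaries and that edge. Once this holds, the standard augmentation argument for biased search trees applies: every internal node $I$ stores the summary of $\pi(I)$, computed from its two children and the edge it holds. \V{Concatenate}, \V{Split}, \V{Insert}, \V{Delete} and the rotations/rebuilds of \cite{BentSleatorTarjan1985} only modify $\BigO(1)$ nodes per step along the search paths they traverse. Recomputing summaries bottom-up along these paths therefore costs no more than the operations themselves. \V{Reverse} is handled with the existing lazy reverse bit. When it is pushed down we swap the $U$ and $L$ fields of the node, since $U_x(\pi)$ of the reversed path is $L_x(\pi)$, and "lowest" and "highest" swap accordingly.

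The combination rule comes first. Let $I$ have left subpath $A$ (upper), middle edge $e$, and right subpath $B$ (lower). Then $|\pi(I)| = |A|+|B|+1$ and $\Z(\pi(I)) = \Z(A)+\Z(B)+[\text{val}(e)=0]$. For $U$, we follow the paper's convention: the marking $x$ is determined by the first 0-valued edge, and positive edges act as wildcards. If $U_x(A)$ does not cover all of $A$, then $U_x(\pi(I)) = U_x(A)$. If it covers $A$ (meaning $A$'s edges are all $x$-marked or positive), we check whether $e$ is compatible with $x$. If it is, we extend by $e$ and then by $U_x(B)$, provided $B$'s first-marked prefix is consistent with $x$. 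Lengths, zero counts and lowest $x$-marked edges add or overwrite accordingly. Since each edge's marking is a pair $\{s,t\}$, $U_x$ is nonempty for at most two vertices $x$. Each node therefore stores $\BigO(1)$ records, and the combination is a constant-size case analysis. $L$ is symmetric.

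The main obstacle is the interaction between positive (wildcard) edges and the marking, together with the lazy addition of $\Delta$. Adding a positive $\Delta$ to every edge of the path turns all 0-valued edges positive. Afterwards $\Z(\cdot)=0$ on every subpath, $\Z(U_x)=\Z(L_x)=0$, and every edge becomes a wildcard. $U_x(\pi(I))$ and $L_x(\pi(I))$ then become the whole subpath, with length $|\pi(I)|$, for the markings already recorded. The values $\Z$ and all $U$/$L$ data after the update are thus computable in $\BigO(1)$ at the node from the stored lengths. Because of this, the $\Delta$ can be kept as a lazy token at $I$ that is pushed to the children on access, exactly as in \cite{SleatorTarjan1983,top1}, and tokens compose by addition. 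The lowest/highest $x$-marked edge pointers are unaffected, since edges keep their marking.

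Finally, lazy items from the lazy-tail extension need care: an internal node may hold an extra lazy item between two neighbours. We treat such a node as representing a concatenation with one additional middle edge and apply the same combination rule twice. The summary of $I$ then remains an $\BigO(1)$ computation, and all bounds of \cref{hp:fastpath} are preserved.
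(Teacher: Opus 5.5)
Your plan has the same skeleton as the paper's proof. Both use an $O(1)$-size summary per internal node and the same combination rule: extend $U_x(\pi(X))$ through $e$ and $U_x(\pi(Y))$ exactly when it covers $\pi(X)$ and $e$ is $x$-marked or positive. Both swap $U$ and $L$ when a reverse bit is pushed, and both use a lazy $\Delta$ token. The one real divergence is the $\Delta$ update. The paper stores, beside the normal values, a second copy computed as if all of $\pi(I)$ were positive, and swaps it in on propagation. You instead read the post-$\Delta$ values off the existing summary. For $|\pi(I)|$, the zero counts and $|U_x|,|L_x|$ this is correct and simpler, since the new values are just $|\pi(I)|$ and $0$.

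The step that fails as justified is the claim that the lowest/highest $x$-marked edge pointers are unaffected ``since edges keep their marking''. The set of $x$-marked edges is unchanged, but the range of the extremum, $U_x(\pi(I))$, grows from a prefix to all of $\pi(I)$. Take $0$-valued edges $e_1,\dots,e_4$ marked $\{x,a\},\{b,c\},\{x,d\},\{f,g\}$. The pointer is $e_1$ before the update but $e_3$ after it, and $e_3$ cannot be read off the summary stored at $I$, which is what lazy propagation requires.

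What rescues the claim is SPQR structure you never invoke. By \cref{spqr:connected} the nodes containing $x$ form a subtree, and an SPQR edge is $x$-marked exactly when both endpoints contain $x$. So the $x$-marked edges on a tree path form one contiguous run, with positive stretches standing for single R-nodes. $U_x(\pi(I))$ contains an edge of this run and can only stop at a $0$-valued edge not marked by $x$, so it already contains the whole run. Hence the pointer is the lowest $x$-marked edge of $\pi(I)$ both before and after the update. (If positive edges are counted as $x$-marked, the pointer instead becomes the last edge of $\pi(I)$; still $O(1)$, but not unchanged.) You should either add this argument or store the all-positive value of the pointer, as the paper does.

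A smaller point: you key on the first $0$-valued edge. If $B$'s key does not contain $x$, then $U_x(B)$ is $B$'s leading run of positive edges (all of $B$ if it has no $0$-valued edge). That length is not among your stored fields and must be added.
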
\noindent
\begin{proof}
If the node is a leaf it does not represent a subpath and the information is trivial to compute. If the node is internal with upper child $X$, lower child $Y$ and assigned edge $e$ then
$$
|\pi(I)| = |\pi(X)| + |\pi(Y)| + 1.
$$
and 
$$
\Z(\pi(I)) = \begin{cases}
    \Z(\pi(X)) + \Z(\pi(Y)) + 1 & \text{$e$ has value 0}  \\
    \Z(\pi(X)) + \Z(\pi(Y)) & \text{otherwise.}
\end{cases}
$$
If $e$ is marked by $x$ or positive and $|U_x(\pi(X))|=|\pi(X)|$ then
$$
U_x(\pi(I))=U_x(\pi(X))\cup e \cup U_x(\pi(Y))
$$
otherwise
$$
U_x(\pi(I))=U_x(\pi(X))
$$
with this $|U_x(\pi(I))|$ and $\Z(U_x(\pi(I))$ are easily computed along with the highest $x$ marked edge. To maintain this information under reverse, it is enough to simply swap $U_x$ and $L_x$. When propagating a positive $\Delta$, the entire path now becomes positive and thus the number of zeroes on the paths becomes 0. To compute $U$ and $L$, we can simply keep a version where we pretend that the entire path is positive alongside the normal values and swap when propagating. 
\end{proof}\noindent
In addition to the information about paths, we will also need a way to find a 0-valued edge with a specific marking.
\begin{lemma}\label{critpath:edgedict}
    Within $O(m)$ space, and without further asymptotic cost in connection with an update, we can maintain information that allows us to find a $0$-valued $(u,v)$-marked edge if it exists. Our retrieval time is $O(\log(m))$.
\end{lemma}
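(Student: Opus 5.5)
The idea is to maintain a deterministic dictionary keyed by unordered vertex pairs $\{u,v\}$ (with $u<v$ under a fixed ordering of $V(G)$). It will be implemented as a balanced search tree, for instance a red-black tree, with $O(\log m)$ worst-case insert, delete and lookup. The entry for $\{u,v\}$ holds a doubly linked list of pointers to all $0$-valued SPQR-tree edges currently marked by $\{u,v\}$. Each tree edge in turn stores a back-pointer to its list cell and to its dictionary entry, so it can be unlinked in $O(1)$ time. A query looks up $\{u,v\}$ in $O(\log m)$ time and returns the head of the list if it is nonempty. Each SPQR-tree edge appears in at most one list, and the relaxed tree has $O(n)=O(m)$ edges (every strict node is represented by at most $3$ relaxed nodes, by \cref{split:invariants1}). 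Hence the space is $O(m)$.

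Next, I would go through every operation that can create, destroy, re-mark or revalue an edge and check that it touches only $O(1)$ edges, apart from one exception handled afterwards. The operations \V{split} and \V{merge} create or contract at most a constant number of tree edges, and moving dashed children to a new node does not change their markings. Since the list stores pointers to edge objects, reattaching an edge to a new parent node requires no update. \V{link-exposed} and \V{cut-exposed} add or remove one edge. \V{Subdivide} and the cases of \V{InsertEdge} other than the contraction create $O(1)$ new edges with known markings. Each such change costs one dictionary operation, i.e. $O(\log m)$, which is within the $O(\log n)$ budget of the updates since $m=O(n)$ for graphs built by these operations up to multiplicity, or in any case is absorbed by the update cost.

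The main obstacle is \V{contract}. It lazily makes every edge on the root solid path positive, so possibly $\Theta(n)$ edges stop being $0$-valued, and we cannot remove them from their lists in $O(1)$ time. I would handle this with lazy deletion. The query does not trust the list blindly: it inspects the head edge and checks whether its value is still $0$. This check takes $O(\log n)$ time by walking up the biased search tree of its solid path and pushing down pending $\Delta$ tokens, or alternatively by comparing against the path's aggregate $\Z$ from \cref{spqr:pathinfo}. If the edge is positive, it is discarded and the next one is examined.

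To keep the worst-case bound, I would use the following observation. By \cref{spqr:star}, if more than two edges are $\{u,v\}$-marked they form a star around a unique P-node. A solid path passes through that P-node using at most two of the star edges. So one \V{contract} invalidates at most two edges per list, and in fact at most two $\{u,v\}$-marked edges overall. Because positive edges never become $0$ again and no operation subdivides a positive edge, the stale entries are exactly those edges lying on contracted paths. Within any list, then, at most $O(1)$ stale heads need to be skipped before a valid edge is found or the list is exhausted. The delicate point I expect is making this rigorous: repeated contractions could in principle accumulate stale entries in one list. I would handle this by letting each query physically delete every stale head it skips. In addition, whenever a P-node star is touched by a contraction, I would explicitly remove the $O(1)$ affected star edges while exposing the path, since the P-nodes on the path are reached during \V{expose} anyway. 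This keeps at most $O(1)$ stale entries per list at any time and gives the claimed $O(\log m)$ retrieval.
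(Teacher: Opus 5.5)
There is a genuine gap in how you account for \V{split} and \V{merge}. They are not top-level updates. Every \V{splice} and \V{slice} performs a constant number of them, and \V{expose} and \V{conceal} perform one splice or slice per light edge, i.e.\ up to $\Theta(\log n)$ of them per update or query. The telescoping $O(\log n)$ bound for \V{expose} holds only because each split and merge costs $O(1)$, which the paper explicitly assumes. If each one also performs an $O(\log m)$ dictionary operation, every update and query costs $O(\log n\cdot\log m)=O(\log^2 n)$, so the ``without further asymptotic cost'' claim fails.

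For S-node splits you cannot avoid this search. The edge between a split S-node and its secondary child is marked by the endpoints of the child group. That pair is determined by the current solid edges, so it is in general a fresh key. Removing keys whose list has become empty, which you need for $O(m)$ space, costs the same on merges.

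The idea you are missing is to not store these edges at all. Splitting a P-node creates edges that carry the P-node's own marking, and each P-node keeps a pointer to its dictionary entry, so that update is $O(1)$. For a split S-node, only its at most two secondary children are recorded with the node. At query time one exposes the representatives of $u$ and $v$. One of the $(u,v)$-poles is then an endpoint of the sought edge, and the edge is either that node's parent edge or one of its two secondary-child edges.

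Your treatment of \V{contract} also does not give a worst-case bound. The P-nodes on the contracted path are not ``reached during expose anyway'': \V{expose} touches only the $O(\log n)$ light edges. \V{contract}, by contrast, lazily affects the whole root solid path, which can contain $\Theta(n)$ nodes, including P-nodes interior to pre-existing heavy paths. So stale entries cannot be removed eagerly, and they can accumulate in a single list. A P-node that survives a contraction, now a former secondary child hanging off $R^*$, can lie on a later contracted path and contribute two more stale edges with the same marking each time. Deleting stale heads during queries bounds the cost only in an amortized sense.

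The paper instead keeps a single representative per marking, with dashed edges taking precedence. Since \V{contract} changes only solid edges, a dashed representative is never stale. Whenever \V{splice} or \V{slice} makes the representative solid, which is detected in $O(1)$ via the P-node's pointer, it is replaced by a $0$-valued dashed neighbour of the P-node kept at the root of its child set. Thus a stale representative certifies that no stored $0$-valued edge of that marking exists, and a query only has to verify one candidate in $O(\log n)$ time.
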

\begin{proof}
    We maintain a search tree of size $\BigO(m)$ and height $\BigO(\log m)$, which stores the markings of $\Tree(G)$ and a 0-valued edge if it exists. If there is a dashed 0-valued edge this takes precedence. The tree may contain out-dated information, but in that case, no $0$-valued edge exists, and this can be verified by inspection in constant time. 

    Upon insertion of a new edge into $\Tree(G)$, i.e. when using \V{cut-exposed}, look up the marking in the search tree and insert the edge if the marking is not present, or replace the representative if it has positive value. 
    To maintain dashed precedence, we will in each P-node store a pointer to the relevant marking in the search tree (this pointer is easily created by a lookup during insertion). Then during $\V{splice}(\mu)$, check if $\mu$ or $p(\mu)$ is a P-node. If one of them is, check if $(\mu,p(\mu))$ is the representative of the marking. If this is the case replace it with an arbitrary 0-valued neighbour of the P-node (such a neighbour is easily maintained in the root of the child set) and if no such neighbour exists there is no other 0-valued edge with this marking, making the original representative the only candidate. In \V{slice}$(\mu)$ a dashed heavy edge might be the representative, in which case we find a suitable replacement or prove that no such replacement exists as before.

    What remains is to find a $0$-valued edge that was created by \V{split}ing an S or P-node. When splitting a P-node, its marking is already represented, and we can follow the pointer and update to representative in the same location in constant time. When splitting an $S$-node, we store along with the split S-node, its at most two secondary children. These $0$-valued edges will not be stored, but can instead be found in $O(\log m)$ time, analogously to \cref{spqr:pathinfo} by exposing the representatives of its markings and using the resulting solid path to find one of the endpoints. 

    We inspect the poles of the markings of the two vertices. We argue that one of these poles must be an endpoint of the edge. Once we have found an endpoint of thee edge, we can find the edge as edge either the parental edge of that vertex, or by inspecting its at most two secondary children. 
    There are three cases, depending on whether the representatives are on the root-side of the split (``up'') or not (``down''). 
    Let $S_1$ denote the secondary child of of the node $S_0$ that was split. 
    If both representatives are down, both poles are $S_1$. If they are one different sides of the edge,  then $S_1$ will be the pole closest to the one that is down. If they are both up, then $S_0$ will be a pole.
\end{proof}
Note that a contract might make multiple 0-valued edges positive, leaving them as tombstones in the search tree. Furthermore, as the positive values are propagated lazily, one must verify that a 0-valued edge is indeed still 0-valued by traversing the root path in $\Tree(G)$ in $\BigO(\log n)$ time. 
With this we have enough information at hand to find the critical path of some vertex pair.
\begin{lemma}\label{critpath:find}
        Let $u,v\in V(G)$ and $u\in \Gamma(\mu)$ and $v\in \Gamma(\nu)$. We can then find the endpoints of $m(u,v)$ in $\Tree(G)$ using a constant number of heavy path operations.
\end{lemma}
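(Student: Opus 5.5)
We find the endpoints of $m(u,v)$ by exposing the pole path and then reading off the case of \cref{spqr:critpath} from the aggregates of \cref{spqr:pathinfo}, using a constant number of heavy path operations.

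\emph{Exposing the pole path.} Call \V{Select}$(u,v)$ with the given nodes $\mu$ and $\nu$. Then $\Tree(G)$ is rooted at $\mu$, the root solid path $\pi$ is exactly the tree path $\mu \dots \nu$, and the pre-split invariants are restored with $O(1)$ merges and splits. On $\pi$ we use the stored aggregates $U_u(\pi)$, $L_v(\pi)$, $|\pi|$ and $\Z(\pi)$.

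\emph{Locating the poles.} By \cref{spqr:connected}, the nodes containing $u$ form a subtree. The $u$-marked edges of $\pi$ starting from the upper end $\mu$ are therefore exactly $U_u(\pi)$, and its lowest node is the pole $\mu^\disk$. Symmetrically, $L_v(\pi)$ ends at $\nu^\disk$. Positive edges act as wildcards here, which is correct: an obsolete run is one R-node, and it contains $u$ if it is reached along $u$-marked edges.

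To actually obtain the nodes $\mu^\disk$ and $\nu^\disk$, descend the biased search tree of $\pi$ using the stored lowest $u$-marked and highest $v$-marked edges. This is one \V{Path}-style traversal.

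\emph{Deciding the case.} Compare positions along $\pi$.

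If $\mu^\disk$ lies strictly above $\nu^\disk$, i.e.\ $|U_u|+|L_v|<|\pi|$, no node contains both $u$ and $v$. Then the poles are the endpoints of $m(u,v)$ (case 4), and we are done.

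Otherwise the segment where $U_u$ and $L_v$ overlap is a run of nodes containing both $u$ and $v$, by \cref{spqr:twiceconnected}. We classify it by the number of zero edges in it, computed as $\Z(U_u)+\Z(L_v)-\Z(\pi)$ via inclusion--exclusion on this single path:
\begin{itemize}
\item If there are no zero edges, the overlap is a single (possibly obsolete) node, which is case 1.
\item If there is one zero edge, the two nodes on it are the endpoints (case 2). It remains to check whether it is a $u,v$-marked edge or an edge incident to a P-node with further $u,v$-marked edges.
\item If there are more zero edges, \cref{spqr:uniqueP} and \cref{spqr:star} give a unique P-node (case 3).
\end{itemize}

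\emph{Main obstacle.} The hard part is that the true critical path may leave $\pi$. The shared P-node star of \cref{spqr:star} can have $u,v$-marked edges hanging off $\pi$ as dashed children. Worse, a shared node may sit in a secondary child created by \V{split}, so the overlap on $\pi$ undercounts.

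I would handle this with \cref{critpath:edgedict}. Look up a 0-valued $(u,v)$-marked edge in $O(\log m)$ time and verify that it is still 0-valued by a root-path traversal. If such an edge exists and is not on $\pi$, then its P-endpoint is the unique P-node. We reach it from either endpoint via the parent pointer or the at most two secondary children, as in the proof of \cref{critpath:edgedict}, and one further \V{expose} confirms it.

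Finally, call \V{Deselect} to restore the proper decomposition. In total we used $O(1)$ selects, exposes and path queries.
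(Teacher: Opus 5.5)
\textbf{The gap.} Your path-first classification is only correct when the relevant $(u,v)$-marked edges lie on $\pi$. The pole path always meets the set of nodes containing both $u$ and $v$, but it may meet it in a single node while the $(u,v)$-marked edge hangs off $\pi$. Your patch assumes every such off-path edge comes from the P-node star of \cref{spqr:star} (``its P-endpoint is the unique P-node''). However, case~2 of \cref{spqr:critpath} can arise off $\pi$ with no P-node at all.

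Take an S-node $A$ with cycle $u\,v\,c\,d$ and virtual edges $uv$ (to an R-node $B$), $du$ (to a node $X\not\ni v$) and $vc$ (to a node $Y\not\ni u$), and let $\mu=X$, $\nu=Y$. Then $A$ is internal to $\pi$. Its child groups have size one, so it is not split. $U_u(\pi)$ and $L_v(\pi)$ overlap exactly in $A$ with no zero edge, so your procedure reports case~1 with $m(u,v)=A$, a single S-node. Consequently \V{Triconnected}$(u,v)$ would answer ``no'', although $u$ and $v$ share the R-node $B$. The dictionary does return the edge $AB$, but it has no P-endpoint, so the patch as stated does not repair this. The correct answer is $m(u,v)=A\,B$.

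\textbf{How the paper avoids it.} The paper reverses the order of the two steps. It first queries \cref{critpath:edgedict}: a $0$-valued $(u,v)$-marked edge exists exactly in cases~2 and~3. The endpoints of the returned edge then decide which case applies: the answer is the edge itself, or its P-endpoint when that P-node has further virtual edges. Only when no such edge exists does the paper root at $\mu$, expose $\nu$ and compare $U_u(P)$ and $L_v(P)$. At that point only cases~1 and~4 remain, so your inclusion--exclusion count and the multi-zero-edge branch are unnecessary.

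\textbf{A second, smaller issue.} When $\mu^\disk$ lies strictly above $\nu^\disk$, the paper still checks whether the segment between the poles contains a $0$-valued edge. If it does not, $u$ and $v$ share an R-node, and that R-node is $m(u,v)$. Your step ``strictly above $\Rightarrow$ case~4'' omits this check. It relies on $U_u$ treating a leading run of positive edges as $u$-marked, whereas the paper's aggregates (\cref{spqr:pathinfo}) are stored only for the marking of the first edge of the path.
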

\begin{proof}
    First apply \cref{critpath:edgedict} to find a (dashed) 0-valued edge if it exists. If it does, we are in case 2. or 3. from \cref{spqr:critpath}. These can be distinguished by looking at the endpoints of the edge. This leaves cases 1. and 4. To distinguish these make $\mu$ the root of $\Tree(G)$ and expose $\nu$ and consider the paths $P=\mu\dots\nu$, $U_u(P)$ and $L_v(P)$. Let $\mu^\disk$ be the lower endpoint of $U_u(P)$ and $\nu^\disk$ be the upper endpoint of $L_v(P)$. We then have three main cases which can be distinguished by comparing the lengths of $U_u(P)$ and $L_v(P)$ against $|P|$:
    \begin{enumerate}[label=(\Alph*)]
        \item $\mu^\disk$ is above $\nu^\disk$.
        \item $\mu^\disk=\nu^\disk$.
        \item $\mu^\disk$ is below $\nu^\disk$.
    \end{enumerate}
    We can now determine the endpoints of the heavy path using \cref{spqr:pathinfo}.
    \begin{enumerate}[label=(\Alph*)]
        \item If $\mu^\disk$ is above $\nu^\disk$ then $U_u(P)\cap L_v(P)$ is empty. This leaves two possibilities, either there is a 0-valued edge on $\mu^\disk\dots \nu^\disk$, in which case $m(u,v)$ is $\mu^\disk \dots \nu^\disk$ or there is none, meaning that $u$ and $v$ share an R-node. If this is the case, this R-node must be $m(u,v)$ as there are no 0-valued $(u,v)$-marked edges.
        \item If $\mu^\disk=\nu^\disk$ and is not obsolete, then $\mu^\disk=m(u,v)$. If it is obsolete, the non-existence of any 0-valued $(u,v)$-marked edge ensures that the R-node it represents is $m(u,v)$. 
        \item If $\mu^\disk$ is below $\nu^\disk$ there is some subpath of $\mu\dots \nu$ only marked by $u$ and $v$. This subpath cannot contain any 0-valued edges and thus $u$ and $v$ must share a single R-node which is also the critical path. 
    \end{enumerate}
\end{proof}

%% file: tex/HPQueryAndOperation.tex
\subsection{Implementing SPQR-tree operations}
We now show how to implement the operations on our representation of the relaxed SPQR-tree of some graph. 

\begin{lemma}
    Given vertices $u$ and $v$ with representatives $\mu$ and $\nu$ and a pre-split SPQR-tree rooted at $\mu$, we can find $m(u,v)$ and provide a heavy path representing it in a constant number of heavy path operations.
\end{lemma}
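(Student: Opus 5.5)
We build on \cref{critpath:find}, which already locates the endpoints of $m(u,v)$ with a constant number of heavy path operations. What is left is to turn those endpoints into a solid path that represents $m(u,v)$ exactly, with the pre-split invariants holding throughout.

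\textbf{Step 1: find the endpoints.} Since the tree is rooted at $\mu$, we expose $\nu$. This makes the root solid path exactly $P=\mu\dots\nu$. We then query the root of the biased search tree of $P$ for $|P|$, $U_u(P)$ and $L_v(P)$, as in \cref{spqr:pathinfo}, and first call \cref{critpath:edgedict} to detect a 0-valued $(u,v)$-marked edge. This gives the case split of \cref{critpath:find}.

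\textbf{Step 2: the cases with a marked edge or a single node.}
\begin{itemize}
\item If a 0-valued $(u,v)$-marked edge exists, $m(u,v)$ is either that edge or the unique P-node (\cref{spqr:uniqueP}); it is read off from the edge's endpoints in $O(1)$.
\item In cases (B) and (C), $m(u,v)$ is a single node, namely $\mu^\disk$ or an R-node. It is obtained by walking the search tree of $P$ to the stored lowest $u$-marked edge (resp.\ highest $v$-marked edge). This takes one traversal of a root-to-leaf search path, which the biased search tree bounds by the same cost as an expose.
\end{itemize}

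\textbf{Step 3: the genuine path case (A).} Here $m(u,v)=\mu^\disk\dots\nu^\disk$. To isolate it as a heavy path we:
\begin{enumerate}
\item re-root at $\mu^\disk$: expose $\mu^\disk$, reverse, and conceal $\mu$;
\item expose $\nu^\disk$.
\end{enumerate}
The root solid path is then exactly $\mu^\disk\dots\nu^\disk$. This is the form needed later for \V{contract}, with $u$ and $v$ as the selected vertices (via \V{Select}).

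Each expose, conceal and reverse preserves the pre-split invariants; this follows from the lemmas on splice/slice and on reverse, plus the restoration lemma for select/deselect, which costs a constant number of merges and splits of $O(1)$ each. So the total is a constant number of heavy path operations.

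\textbf{Main obstacle.} The delicate step is justifying that $\mu^\disk$ and $\nu^\disk$ really are the endpoints of $m(u,v)$ in the relaxed, pre-split tree. Secondary nodes and obsolete nodes could shift the place where the $u$-marking ends relative to the strict SPQR-tree.

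We will argue as follows:
\begin{itemize}
\item Secondary children are always dashed (\cref{inv2}), so they never lie on the exposed path. Hence markings along $P$ coincide with the strict tree's, except that positive edges act as wildcards.
\item By \cref{spqr:connected}, the nodes containing $u$ form a subtree. Therefore the maximal $u$-marked prefix $U_u(P)$ ends exactly at the node containing $u$ closest to $\nu$.
\end{itemize}

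If this identification fails for a secondary endpoint, the fallback is to replace that endpoint by its primary parent. By the pre-split invariants this shifts the endpoint by one edge and costs $O(1)$.
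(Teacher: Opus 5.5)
Your route is essentially the paper's: expose $\nu$, get the endpoints of $m(u,v)$ from \cref{critpath:find}, then root the tree at one endpoint and expose the other. The paper everts the lower endpoint and exposes the upper one; you do the mirror image, which is equivalent.

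There is one concrete slip in Step 3: you must \V{conceal}$(\nu)$ before re-rooting, as the paper does. \V{expose} is only defined on a proper tree. If the tree is still $\nu$-exposed, the whole path $\mu\dots\nu$ is already solid. \V{reverse} then flips that entire path and roots the tree at $\nu$ rather than at $\mu^\disk$. The subsequent \V{expose}$(\nu^\disk)$ then produces the path from $\nu$ to $\nu^\disk$ instead of $m(u,v)$.

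There are also some smaller points.
\begin{itemize}
\item The lemma asks for a heavy path representing $m(u,v)$ in every case, so apply the evert/expose step uniformly rather than only in case (A). For a $(u,v)$-marked edge it makes that edge the root path, which \V{InsertEdge} later splits. For a single node or an R-node it gives a trivial path or a path of positive edges. Your Step 2 only identifies the node and never produces the path.
\item Case (A) of \cref{critpath:find} includes a subcase with no 0-valued edge between the poles. There $m(u,v)$ is the shared R-node, not the path $\mu^\disk\dots\nu^\disk$ as your Step 3 asserts.
\item Your ``main obstacle'' discussion is unnecessary, since the paper takes the correctness of the endpoints directly from \cref{critpath:find}. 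Your own first bullet there (by \cref{inv2}, secondary nodes never lie on the solid root path) already rules out the situation the fallback is meant to handle. The fallback can be dropped.
\end{itemize}
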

\begin{proof}
First expose $\nu$ and then apply \cref{critpath:find} to find the endpoints of $m(u,v)$. Conceal $\nu$, evert the lower endpoint and expose the upper endpoint. 
\end{proof}
\begin{lemma}
    Given vertices $u$ and $v$ with representatives $\mu$ and $\nu$ and a pre-split SPQR-tree rooted at $\nu$, we can support \V{Triconnected}$(u,v)$ queries in a constant number of heavy path operations.
\end{lemma}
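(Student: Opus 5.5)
The plan is to reduce the query to the observation following \cref{spqr:triconnected}: $u$ and $v$ are triconnected if and only if $m(u,v)$ is $u,v$-marked, or $m(u,v)$ is a single P- or R-node. It therefore suffices to compute the critical path with a constant number of heavy path operations and then classify it in constant time.

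\textbf{Step 1: locate the critical path.} Since the tree is rooted at $\nu$, the roles of $u$ and $v$ in the previous lemma are simply swapped. Expose $\mu$, then apply \cref{critpath:find} with the path $P=\nu\dots\mu$ and the path summaries $U_v(P)$, $L_u(P)$ from \cref{spqr:pathinfo}. Both are read in constant time from the root of the biased search tree of the root solid path.

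\textbf{Step 2: case analysis.} First query the dictionary of \cref{critpath:edgedict} for a $0$-valued $(u,v)$-marked edge. The lazy positive values must be respected, so validate the candidate by walking its root path. That walk costs $O(\log n)$, which matches the cost of one heavy path operation.

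\begin{itemize}
\item If a valid edge exists, we are in case 2 or 3 of \cref{spqr:critpath}.
\begin{itemize}
\item Case 3 yields a P-node, so the answer is yes.
\item Case 2 yields a $u,v$-marked edge, so the answer is again yes.
\end{itemize}
In either case answer ``yes''.
\item If no valid edge exists, we are in case 1 or 4, distinguished as in (A)--(C) of the proof of \cref{critpath:find}.
\begin{itemize}
\item Case (C), and case (A) with no $0$-valued edge on $\mu^\disk\dots\nu^\disk$: $u$ and $v$ share an R-node. Answer ``yes''.
\item Case (B) with an obsolete node: the same conclusion holds, answer ``yes''.
\item Case (B) with a non-obsolete node: answer ``yes'' exactly when $\mu^\disk$ is labelled P. If it is an S-node, the answer is ``no'' by \cref{spqr:triconnected}.
\item Case (A) with some $0$-valued edge on $\mu^\disk\dots\nu^\disk$: $m(u,v)$ has length greater than one and is not $u,v$-marked, so answer ``no''.
\end{itemize}
\end{itemize}
The existence of a $0$-valued edge on a subpath is read from $\Z$ of the relevant summaries: $\Z(P)-\Z(U_v(P))-\Z(L_u(P))$ when the prefix and suffix are disjoint.

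\textbf{Step 3: restore the invariants.} Conceal $\mu$ to return the tree to its initial state. This is one more heavy path operation, so the total stays constant.

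\textbf{Main obstacle.} The delicate step is making sure the relaxed representation does not produce wrong answers.
\begin{itemize}
\item Secondary nodes split off an S- or P-node represent only part of a strict node. If the pole lands on a secondary P-child, the strict node is still a P-node, so the label remains correct.
\item If $u$ and $v$ lie in different secondary pieces of one S-node, the answer must remain ``no''. Here I would argue that a split piece never creates a false $u,v$-shared P- or R-node.
\item Obsolete S- or P-nodes must be read as R-nodes. The value of the incident edges distinguishes them, which is constant-time information by the child-set bit lemma.
\end{itemize}
Verifying these cases against invariants \cref{inv1}--\cref{inv3} is where I expect most of the care to be needed.
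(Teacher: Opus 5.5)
Your route is the paper's: find $m(u,v)$ with \cref{critpath:find}, then answer ``yes'' iff there is a $0$-valued $(u,v)$-marked edge or $u,v$ share a P- or R-node. The step that fails is ``a valid $0$-valued $(u,v)$-marked edge exists, hence yes''. The observation behind this step relies on \cref{spqr:partite}, namely that no two S-nodes are adjacent. You are querying a pre-split \emph{relaxed} tree, however. There, splitting an S-node creates a $0$-valued edge to a secondary S-child, marked by the two endpoints of a proper child group, and \cref{critpath:edgedict} is explicitly designed to return such split-created edges.

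Concretely, let $\sigma$ be a strict S-node with cycle $a_1a_2c\,b_1b_2d$. Its edges $a_1a_2$ and $b_1b_2$ are virtual, leading to neighbours $\rho_1,\rho_2$, and its other four edges are real. Suppose $\rho_1\sigma\rho_2$ lies on a solid path, e.g.\ with $\rho_1,\rho_2$ the representatives of $a_2,b_1$, after rooting at $\rho_2$ and exposing $\rho_1$. Then $\sigma$ is primary and not a dashed child, its child groups $\{a_2c,cb_1\}$ and $\{b_2d,da_1\}$ are proper, and \cref{inv3} forces it to be split. This produces a $0$-valued S--S edge marked $\{a_2,b_1\}$. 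Your Step 2 therefore answers ``yes'' to \V{Triconnected}$(a_2,b_1)$. But $a_2$ and $b_1$ are non-adjacent on a single strict S-cycle and share no P- or R-node, so by \cref{spqr:triconnected} they are not triconnected; indeed $\{a_1,c\}$ separates them.

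So the danger in your ``main obstacle'' points the wrong way. Splitting never creates a false shared P- or R-node, but it does create false $(u,v)$-marked edges. The repair needs only $O(1)$ extra work:
\begin{itemize}
\item If an endpoint of the retrieved edge is a P-node or obsolete, answer ``yes''.
\item Otherwise the edge joins a primary S-node $\sigma$ to one of its secondary children, and $u,v$ lie on one strict S-cycle. Answer ``yes'' iff the constant-size skeleton of $\sigma$ contains a virtual $(u,v)$-edge whose tree edge leads to a P-node or an obsolete node.
\end{itemize}
The paper's one-line proof states the same unrefined rule, so appealing to it does not close this gap.
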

\begin{proof}
    We find the critical path $m(u,v)$ and consider the endpoints. We answer positively if and only if there is a 0-valued $(u,v)$-marked edge or $u$ and $v$ share a P or R-node. Correctness is a direct consequence of \cref{spqr:triconnected}.
\end{proof}
 
\begin{lemma}
    Given a pre-split SPQR-tree, we can support \V{Subdivide}$(u,v)$ in a constant number of heavy path operations.
\end{lemma}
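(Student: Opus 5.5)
We follow the case analysis of \V{Subdivide} from \cref{spqr:updates}. Since $(u,v)\in E(G)$, cases 2 and 4 of \cref{spqr:critpath} cannot occur, so $m(u,v)$ is either a single node containing both $u$ and $v$ (the non-virtual edge $(u,v)$ lies in a unique skeleton by \cref{spqr:props}) or the unique P-node containing both. We first locate $m(u,v)$ with the critical-path lemma above: select $u$ and $v$ via their representatives (one \V{Select}, which costs a constant number of expose/conceal/reverse operations plus a constant number of splits and merges to restore the pre-split invariants), then apply \cref{critpath:find}. Knowing whether $m(u,v)$ is a single P-node, a single S-node, or an R-node (possibly represented by obsolete nodes) takes constant extra work by inspecting node types and edge values at the found endpoint.

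Next we perform the modification. If $m(u,v)$ is an S-node $\mu$, we insert the new vertex $w$ into the cycle $\Gamma(\mu)$ in place of the edge $(u,v)$. The tree does not change, but the cyclic child list of $\mu$ gains a non-virtual item, so child groups may change. We handle this by doing the insertion with $\mu$ merged, i.e.\ primary with no secondary children. Then we split $\mu$ again, using \cref{presplit:mergesplitinvariants} to restore the invariants in $O(1)$ splits and merges. In the other cases we build the new S-node $\chi$ with skeleton $u\,w\,v$ and the virtual edge $(u,v)$. This is a one-node pre-split SPQR-tree with no selected vertices. If $\mu$ is a P-node, we relabel the non-virtual copy of $(u,v)$ as virtual. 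If $\mu$ is an R-node, we turn $(u,v)$ into a virtual edge and attach $\chi$ at a $(u,v)$-pole, which by the observation that no operation subdivides a positive edge is a valid attachment point inside the R-node.

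To attach $\chi$, we first \V{Deselect}, again using $O(1)$ merges and splits, so that no vertices are selected. We then expose $\mu$ (or the chosen obsolete node) and call \V{link-exposed}$(\mu,\chi)$. By the link lemma this preserves the pre-split invariants: $\mu$ and $\chi$ share exactly $\{u,v\}$, this pair is a 2-cut in $G$ after the subdivision, and the two nodes are of different types. An R-node versus an S-node is fine. A P-node versus an S-node is fine. If $\mu$ were an S-node we would have used the cycle insertion instead. Finally we conceal, and we update the representative dictionary for $w$ (setting it to $\chi$ or $\mu$) and the marking search tree of \cref{critpath:edgedict} with the new $(u,v)$-marked 0-valued edge. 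Both are $O(\log n)$ dictionary operations and do not affect the count of heavy path operations.

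The main obstacle is the R-node case. The R-node is not a single object but a component of obsolete nodes, so we must argue that attaching $\chi$ at the pole node of the marking $\{u,v\}$ is consistent with the representation and with the later pole-based critical path searches. We plan to argue it from the observation that poles of a pair contained in one R-node never change again, so $\chi$'s marking is found at that pole subsequently. A secondary concern is checking that the single-edge changes to a P-node's skeleton keep its at most one child group well defined. This follows because child groups of P-nodes depend only on solid incidences, and those are unchanged after deselecting.
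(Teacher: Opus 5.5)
Your proposal is correct and follows essentially the same route as the paper's proof. You find $m(u,v)$ and observe that only cases 1 and 3 of the critical-path definition can arise; you subdivide directly inside the cycle when $m(u,v)$ is an S-node, and otherwise create the S-node $\chi$ on $u\,w\,v$, make $(u,v)$ virtual, attach $\chi$ via expose, \V{link-exposed} and conceal, and set the representative of the new vertex. Your extra bookkeeping (merging and re-splitting around the S-node change, deselecting before linking, attaching at a $(u,v)$-pole in the R-node case as prescribed in the representation section, and updating the marking dictionary) only makes explicit what the paper's terse proof leaves implicit.
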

\begin{proof}
    First, we find $m(u,v)$. As the edge $(u,v)$ exists in $G$, there must be some node that contains both $u$ and $v$. We go case by case with respect to \cref{spqr:critpath}. Let $\chi$ be a new S-node with the cycle $u \ x \ v$ as its skeleton.
    \begin{enumerate}
        \item[1.] If $m(u,v)=\mu$ is exactly one SPQR-node, then $(u,v)$ is non-virtual in $\mu$. If $\mu$ is an S-node subdivide this edge otherwise use the necessary expose and conceals to connect $\mu$ and $\chi$ with link-exposed.
        \item[3.] If $m(u,v)$ is a single P-node, connect it to $\chi$ using link-exposed and the necessary expose and conceal operations.\qedhere
    \end{enumerate}
    We let the representative of $x$ be $\chi$. The running time is evident as we only use a constant amount of heavy path operations. 
\end{proof}\noindent
We can now describe the insertion of an edge.
\begin{lemma}
    Given vertices $u$ and $v$ with representatives $\mu$ and $\nu$ and a pre-split SPQR-tree rooted at $\nu$, we can support \V{InsertEdge}$(u,v)$ in a constant number of heavy path operations.
\end{lemma}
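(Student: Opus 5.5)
The plan follows the case structure of \V{InsertEdge} from \cref{spqr:updates}. Each case is realised with a constant number of heavy path operations plus one \V{contract}. First we use the previous lemmas to find $m(u,v)$ and turn it into a heavy path. Concretely, we apply \cref{critpath:find} to obtain its endpoints, conceal $\nu$, evert the lower endpoint and expose the upper one. The tree is then rooted at one endpoint of $m(u,v)$, exposed at the other, and the root solid path is exactly $m(u,v)$. All three heavy path operations preserve the pre-split invariants by the earlier lemmas, and \V{select}/\V{deselect} followed by a constant number of splits and merges restores \cref{inv2}.

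\textbf{Cases 1 and 2 of \cref{spqr:critpath}.} If $m(u,v)$ is a single P-node, or a single R-node given by its obsolete representatives, we add the non-virtual edge $(u,v)$ to the skeleton or child set in $O(1)$ time. If it is a single S-node, we select $u,v$ so that the node has child groups on either side of $u$ and $v$ (case 1(d) of \cref{incworst:childgroup}). We split it in $O(1)$ time, create a new P-node holding $e$ and the two virtual edges, and attach it with \V{link-exposed}. If $m(u,v)$ is $u,v$-marked, \cref{critpath:edgedict} gives the 0-valued edge directly. We \V{cut-exposed} it and reconnect both sides through a fresh P-node with \V{link-exposed}, fixing invariants with a constant number of merges. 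If there are more than two nodes sharing $u,v$, the P-node is found as the common endpoint and we insert $e$ into it.

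\textbf{Case 4.} The key observation, following the remark after \cref{incworst:childgroup}, is that in a pre-split tree with selected vertices $u,v$ the required restructuring has already been done along the root path. Every internal S-node has had its child groups, the virtual edges not on the $u$--$v$ side, moved to secondary children. Every P-node has had its non-solid edges moved to a secondary child. The endpoint S-nodes are split between their path neighbour and $u$ or $v$ by cases 1(b) and 1(c). So steps (a)--(d) of the insertion amount to turning the root path into one R-node. That is a single \V{contract}$(r)$, which takes $O(1)$ time and preserves the invariants by the contract lemma. The secondary children automatically become primary dashed children of the new R-node, which is exactly step (d). We then add $e$ as a non-virtual edge of the R-node.

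\textbf{Main obstacle.} Step (e) is the delicate part: secondary children whose skeleton has fewer than three edges must be absorbed into $R^*$. Since child groups are defined so that only proper groups, with more than one edge, are split off, an improper group never became a separate node. It stays in the obsolete node's skeleton, which has at most 4 vertices, so no further merging is needed. We must still check the P-node case where only two parallel edges remain after removing path edges. This is handled by invariant \cref{inv3} together with the observation that each endpoint needs at most one merge. Finally we call \V{deselect} and restore the invariants with $O(1)$ splits and merges, and update the edge dictionary and representatives in $O(\log n)$ time. The total is a constant number of heavy path operations.
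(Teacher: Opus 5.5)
Your proposal is correct and follows the paper's proof essentially step for step. You find $m(u,v)$ and make it the root solid path, handle the single-node and $u,v$-marked cases directly, and in the general case select $u$ and $v$ and apply one \V{contract}, arguing that steps (a)--(e) are already carried out by the pre-split invariants because improper child groups are never split off. The slips are minor: both endpoints of the root path fall under case 1(b) of the child-group definition, since 1(c) concerns tails of non-root paths. And for a single S-node the paper simply relabels the split node as a P-node, then de-lazies and re-splits its former secondary children to restore (I-3), instead of linking in a fresh P-node.
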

\begin{proof}
    This is essentially a direct implementation of the procedure from \cref{spqr}. First find $m(u,v)$, then 
    \begin{enumerate}
        \item If $m(u,v)$ is an S-node, select $(u,v)$ on $m(u,v)$ and relabel the node with a P. The secondary children of $m(u,v)$ are now primary without an R-node parent and must thus not be the lazy tail. If they are, insert them properly in their heavy path and split them to restore \cref{inv3}. 
        If $m(u,v)$ is an R-node, expose the $(u,v)$-poles and associate the new edge to one of the poles. 
        \item If $m(u,v)$ is a P-node insert $(u,v)$ in the skeleton.
        \item If $m(u,v)$ is $(u,v)$-marked, split the solid path, create a new P-node $\omega$ with $(u,v)$ as its skeleton, and insert the P-node in between.  
        \item If $m(u,v)$ is not $(u,v)$-marked, make $m(u,v)$ exposed and make $u$ and $v$ selected, then use contract on the endpoints.
    \end{enumerate}
    After the insertion use conceal on the lowest endpoint of the heavy path. 
    The first three cases essentially implement the original insertion procedure. That the last case is correct warrants an argument. We go through the procedure and argue that each step is completed.
    \begin{enumerate}
        \item All the edges of $m(u,v)$ are removed as part of contract. The P-nodes have been split in such a way that their secondary children contain all the edges that aren't on the path.
        \item All of the R-nodes have no child groups and can thus be merged into the new R-node.
        \item All internal S-nodes have child groups corresponding to the split between their neighbours, and thus their secondary children correspond (almost) exactly to the result of a split. In the endpoints of the root path, the child groups also correspond to splitting by the path neighbour and $u$ or $v$. The only mismatch between the proper child groups and the result of a split is the possibility of children with $k<3$ edges, as these are by definition not proper. These are treated in case 5.
        \item  The S-nodes and P-nodes in question exist as dashed secondary children of the path and are thus, after contraction, already connected to $R^*$.
        \item This step would correspond to merging all secondary children with fewer than 3 edges. These correspond to child groups with at most one edge, which are categorically excluded from the definition and are thus already part of the path.
    \end{enumerate}
    The running time is evident as we use a constant amount of heavy path operations.
\end{proof}\noindent
By letting all vertices have unit-weight, we obtain
\begin{restate}{\cref{result:biconn}}
\theoremonetext
\end{restate}

In addition to this, we have the following extensions: 
\begin{lemma}
    Given a relaxed SPQR-tree, where some edge $(u,v)$ has just been inserted, we can remove said edge in $O(\log n)$ time.
\end{lemma}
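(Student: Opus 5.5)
The natural approach is an undo log: the deletion replays, in reverse, the changes made by the most recent \V{InsertEdge}. By the preceding lemma, \V{InsertEdge}$(u,v)$ consists of a constant number of heavy path operations (\V{expose}, \V{conceal}, \V{reverse}, \V{link-exposed}, \V{cut-exposed}, \V{splice}, \V{slice}), plus a constant number of \V{split}s and \V{merge}s, and at most one \V{contract}. We therefore record, during the insertion, the sequence of primitive operations performed together with the $O(1)$ words each one needs in order to be inverted. To remove the edge, we undo these primitives in reverse order. Each primitive has an inverse of the same cost:
\begin{itemize}
\item \V{expose} and \V{conceal} are mutually inverse, as are \V{splice} and \V{slice};
\item \V{link-exposed} is inverted by \V{cut-exposed};
\item \V{reverse} is its own inverse;
\item \V{split} and \V{merge} invert each other in $O(1)$ time.
\end{itemize}
All of these run in $O(\log n)$ worst-case time. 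The edge dictionary of \cref{critpath:edgedict} and the representative dictionary are restored by replaying the logged dictionary changes.

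I would go through the four cases of the insertion procedure. In cases 1--3 the structural change is local: relabelling an S-node as a P-node, adding a non-virtual edge to a skeleton, or subdividing a tree edge by a new P-node $\omega$. For these, undoing means relabelling back, deleting the skeleton edge, or removing $\omega$ with two \V{cut-exposed} calls and one \V{link-exposed}, followed by the same constant number of \V{split}s and \V{merge}s that restore the pre-split invariants, as in \cref{presplit:mergesplitinvariants}.

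The main obstacle is case 4, which performs a \V{contract}. That operation adds $1$ lazily to every edge of the root solid path, and it makes the secondary children of that path primary. We undo it by adding $-1$ to the same path. \cref{spqr:pathinfo} maintains its information under addition of a $\Delta$, and the argument there does not depend on the sign: an edge returns to value $0$ exactly when it was $0$ before. The catch is that this uses the solid path as it existed at the moment of the contraction. So we must undo every later operation first, bringing the path back to root-solid form. This is why the statement allows only the edge that was \emph{just} inserted to be removed.

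Once the values are restored, the former secondary children become secondary again, and the lazy tails and the invariant \cref{inv3} hold again. These are the same structures they were immediately before the contraction, since the log undoes every \V{split} and \V{merge} performed afterwards. Finally, the tombstones left by \V{contract} in the edge dictionary were never removed, so they become valid again automatically. The total cost is a constant number of $O(\log n)$ operations.
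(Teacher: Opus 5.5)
Your route differs from the paper's. The paper never inverts an operation: an insertion runs in $O(\log n)$ time, so it performs only $O(\log n)$ memory writes. The paper logs each overwritten word and rolls back by restoring these words in reverse order. Your operation-level undo is largely sound for the abstract structure. The proper and exposed decompositions are determined by the root, the exposed node and the weights, so expose/conceal, \V{reverse}, link/cut and split/merge do cancel when replayed in reverse order. However, the step that undoes \V{contract} has a genuine gap.

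You assert that the argument of \cref{spqr:pathinfo} does not depend on the sign of $\Delta$, but it does. The paper handles a lazy $+\Delta$ by noting that afterwards every edge of the path is positive. So $\Z$ drops to $0$, and one swaps in a precomputed ``all edges positive'' version of $U_x, L_x$. For $\Delta=-1$ there is no such shortcut. The edges that return to value $0$ are exactly those of value $1$, and no stored aggregate identifies them. Hence $\Z$, $U_x$ and $L_x$ cannot be recomputed at an internal search-tree node.

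Nor can you simply retract the original $+1$ token. You undo the later operations (the final \V{conceal} of the insertion, with the splits, concatenations and lazy-tail manipulations inside it) by semantic inverses. This restores the solid path only abstractly, not the shape of its biased search tree. So the token may have been pushed down, and the pre-contraction aggregates are lost. Without correct $\Z$, $U_x$, $L_x$, subsequent \cref{critpath:find} calls return wrong answers.

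There are two ways to repair this:
\begin{itemize}
\item Change the augmentation. For example, store each subtree's minimum edge value and compute $\Z$, $U_x$, $L_x$ relative to it, with min-valued edges playing the role of $0$-valued ones. Keep the all-positive version for use when a child's minimum exceeds its parent's. This is invariant under adding any constant and composes correctly.
\item Log memory writes as the paper does, which restores the $+1$ token and every aggregate verbatim.
\end{itemize}
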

\begin{proof}
As the insertion of an edge only creates $\BigO(\log n)$ changes, we can simply store a log of the memory operations used to insert the edge, which can then be used to bring the data structure back to the previous state when required. 
\end{proof}
\begin{lemma}
    It is possible to maintain an SPQR-tree, rooted at an S-node which supports split, merge and insert in $O(\log W)$ time.
\end{lemma}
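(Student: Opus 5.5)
The plan is to reuse the whole machinery of the pre-split relaxed SPQR-tree, but freeze the root at a fixed S-node $\sigma$. Then every split, merge and insertion becomes a constant number of heavy path operations on a tree whose root never moves. The statement is vague about weights, so I read $W$ as the total node weight of the heavy path decomposition, i.e.\ $W=\w(\sigma)$. By the corollary after \cref{hp:expose}, every heavy path operation then costs $\BigO(\log \frac{W}{w(\cdot)})\subseteq \BigO(\log W)$, using positive integer weights with $w\ge 1$. The target is therefore to show that each of split, merge and insert needs only $\BigO(1)$ heavy path operations, plus $\BigO(1)$ work that is not weight-dependent.

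The first step handles the choice of root. Since $\sigma$ is an S-node containing at least three vertices, I keep $\sigma$ as the root throughout and never call \V{reverse}. Selection is then done by choosing $y\in\Gamma(\sigma)$ as the fixed selected root vertex. Only \V{expose} and \V{conceal} are used to move the exposed end. The lemma about restoring the pre-split invariants after select/deselect still applies, because only the endpoints of the root path change their child groups.

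The second step covers split and merge. By \cref{presplit:mergesplitinvariants}, a split or merge, together with its cascade of repairs, is $\BigO(1)$ primitive splits and merges. By the lemma on implementing merge and split, each of these takes $\BigO(1)$ time with lazy tails. When a split or merge is requested at an arbitrary node $\mu$, I first \V{expose}$(\mu)$ so that its child groups are determined, in $\BigO(\log W)$ time. I then perform the operation and finally \V{conceal}$(\mu)$. \cref{spqr:spliceslice} guarantees that splice and slice keep the invariants at unchanged cost, so the total is $\BigO(\log W)$.

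The third step is insert. I follow the \V{InsertEdge} lemma, but the critical path $m(u,v)$ must now be found without rerooting at a representative. For this I expose the representative of $u$ and read $U_u$ and $L_u$ off the root path via \cref{spqr:pathinfo}. I then expose the representative of $v$ and compute the lowest common ancestor on the two root paths with the standard link-cut technique, one extra expose. This gives the $(u,v)$-poles, and \cref{critpath:find} is applied to the concatenation of the two root segments. To make $m(u,v)$ contractible I need a rooted solid path. I plan to expose its lower endpoint and, if the upper endpoint is not $\sigma$, temporarily cut the subtree at the upper endpoint with \V{cut-exposed}. I then perform \V{contract} on that subtree's root path, and afterwards relink with \V{link-exposed}. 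These are $\BigO(1)$ heavy path operations each.

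The main obstacle is exactly this last step: contract is defined only on the root heavy path, while the root is fixed at $\sigma$. The cut/contract/link detour must preserve the pre-split invariants at the cut point. This needs one extra merge and split at the upper endpoint, by the cut-exposed lemma, and I expect the delicate case to be when the upper endpoint is $\sigma$ itself, which must remain an S-node. There I would first split $\sigma$ so that the path portion lies in a secondary child, and contract that child instead. The edge dictionary of \cref{critpath:edgedict} is maintained as before, with $\BigO(\log m)\subseteq\BigO(\log W)$ lookup.
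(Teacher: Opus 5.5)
\textbf{There is a genuine gap: you prove a different statement from the one the paper intends.}

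The paper's proof is one line: the representation of an S-node's child set directly allows the three operations. So the lemma concerns the root S-node itself. Here ``split'' cuts its cyclic sequence of virtual edges in two, ``merge'' concatenates two such sequences (two S-node-rooted trees glued at their roots), and ``insert'' adds a new edge to that cycle. This is what the Holm--Rotenberg block-cut construction, invoked immediately afterwards, consumes. An S-node stores its virtual edges as at most two biased search trees in cyclic order (Section~\ref{spqr:repr}). Hence these are just Bent--Sleator--Tarjan \emph{Split}, \emph{Concatenate} and \emph{Insert}, each costing $O(\log\frac{W}{w})\subseteq O(\log W)$. Your proposal never uses this representation. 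Instead you read split and merge as the pre-split \emph{Split}$(\mu)$ and \emph{Merge}$(\mu)$, and insert as \emph{InsertEdge}. Those split and merge operations are already $O(1)$ local operations. Moreover, pre-exposing $\mu$ only changes the child groups along the new root path, so it does not give a meaningful ``split at $\mu$''.

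\textbf{Your plan also fails on your own reading, because the root cannot stay fixed at $\sigma$.}

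If $m(u,v)=\sigma$, case~1 of \emph{InsertEdge} relabels $\sigma$ as a P-node. If $m(u,v)$ passes through $\sigma$, the correct SPQR-tree afterwards has no node corresponding to $\sigma$ at all. The two arcs of its cycle between the path's virtual edges become two separate S-nodes, both adjacent to $R^*$, or become single edges absorbed into $R^*$. Your fix, ``split $\sigma$ so that the path portion lies in a secondary child'', is not even available. The path portion is not a contiguous arc of the cycle, so it is not a child group. Keeping both arcs in one root S-node would join it to $R^*$ by two tree edges.

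\textbf{The cut/contract/link detour assumes $m(u,v)$ is vertical with respect to $\sigma$.}

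In general the path bends at the LCA you computed. Exposing the lower endpoint of one branch in the subtree cut at that LCA does not make the whole path a root solid path. So a \emph{reverse} would be needed after all, which is exactly the operation you ruled out. The missing idea is simply that the operations in the lemma act on the root S-node's cyclically ordered child set, where the bound is immediate.
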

\begin{proof}
The representation of the child set of an S-node directly allows this.
\end{proof}
We can now use the construction by Holm and Rotenberg (see \cite{HolmRotenberg2019} Section 5) to maintain a block-cut tree of a general graph $G$, which supports all operations in worst-case $O(\log n)$ calls to the SPQR-tree operations above. This construction uses the same building blocks as the pre-split SPQR-tree, and represents each block of the block-cut tree as an SPQR-tree. Every heavy path then simulates an edge between the extremal blocks, which allows fast insertion of such an edge once the blocks of its endpoints have been exposed. This yields
\begin{restate}{\cref{result:general}}
    \theoremtwotext
\end{restate}

%% file: tex/40ending.tex
\section{Conclusion and Open Problems}
We have described a data structure which maintains a biconnected graph while supporting triconnectivity queries, edge insertion and subdivision in worst-case $\BigO(\log n)$ time. The data structure does this by maintaining a representation of the SPQR-tree of the underlying graph, which is at all times ready to insert an edge with only a logarithmic number of changes to the tree. Plugging our result into the construction of Holm and Rotenberg~\cite{HolmRotenberg2019} then gives an algorithm for incremental triconnectivity in general graphs in worst-case $\BigO(\log^2 n)$ time per query and operation, improving the previous upper bound from $\BigO(\log^3 n)$ to $\BigO(\log^2 n)$.

Our result for biconnected graphs comes within an $\BigO(\log \log n)$ factor of the cell-probe incremental connectivity lower bound \cite{fredmansaks89}, and thus, a natural open problem is whether there exists an algorithm which matches this lower bound for biconnected graphs. Another open problem is the existence of an algorithm for triconnectivity in general graphs analogous to the $\BigO(\log n)$ upper bound for biconnectivity~\cite{PoutreW98}. In the fully dynamic setting, there is still no known data structure with polylogarithmic update- and query time, that can answer triconnectivity queries between vertex pairs.

\section{Declaration of AI Usage}
The paper is based on a M.Sc. thesis written without the usage of artificial intelligence or language models. In connection with preparing the manuscript for submission, the authors have used Claude Opus 5.5 and Google Gemini 4 Argon (Deep Review) for proof reading for mathematical rigour and clarity of exposition. The theorems in the paper are identical to those presented in the non-AI-assisted masters thesis. 

%% file: tex/input-til-referenceliste.tex
\nocite{Frederickson85}
\nocite{}

\nocite{}

\nocite{}
%
\nocite{PoutreW98,HolmRotenberg2019,HolmHR23,HolmR24}

%% file: refs.bib
@inproceedings{HolmRotenberg2019,
  author       = {Jacob Holm and
                  Eva Rotenberg},
  editor       = {Shuchi Chawla},
  title        = {Worst-Case Polylog Incremental SPQR-trees: Embeddings, Planarity,
                  and Triconnectivity},
  booktitle    = {Proceedings of the 2020 {ACM-SIAM} Symposium on Discrete Algorithms,
                  {SODA} 2020, Salt Lake City, UT, USA, January 5-8, 2020},
  pages        = {2378--2397},
  publisher    = {{SIAM}},
  year         = {2020},
  url          = {https://doi.org/10.1137/1.9781611975994.146},
  doi          = {10.1137/1.9781611975994.146},
  bibsource    = {dblp computer science bibliography, https://dblp.org}
}

@article{BentSleatorTarjan1985,
  title     = {Biased Search Trees},
  author    = {Samuel W. Bent and Daniel D. Sleator and Robert E. Tarjan},
  journal   = {SIAM Journal on Computing},
  volume    = {14},
  number    = {3},
  pages     = {545--568},
  year      = {1985},
  month     = {8},
  publisher = {Society for Industrial and Applied Mathematics},
  doi       = {10.1137/0214037}
}

@article{SleatorTarjan1983,
  author    = {Daniel D. Sleator and Robert E. Tarjan},
  title     = {A Data Structure for Dynamic Trees},
  journal   = {Journal of Computer and System Sciences},
  volume    = {26},
  number    = {3},
  pages     = {362--391},
  year      = {1983},
  publisher = {Academic Press},
  doi       = {10.1016/0022-0000(83)90006-5}
}

@article{top1,
  author       = {Stephen Alstrup and
                  Jacob Holm and
                  Kristian de Lichtenberg and
                  Mikkel Thorup},
  title        = {Maintaining information in fully dynamic trees with top trees},
  journal      = {{ACM} Trans. Algorithms},
  volume       = {1},
  number       = {2},
  pages        = {243--264},
  year         = {2005},
  url          = {https://doi.org/10.1145/1103963.1103966},
  doi          = {10.1145/1103963.1103966},
  bibsource    = {dblp computer science bibliography, https://dblp.org}
}

@article{Hopcroft1973,
  title={Dividing a Graph into Triconnected Components},
  author={John E. Hopcroft and Robert Endre Tarjan},
  journal={SIAM J. Comput.},
  year={1973},
  volume={2},
  pages={135-158},
  url={https://api.semanticscholar.org/CorpusID:26605417}
}

@inproceedings{DecrementalSPQR,
title = "Decremental SPQR-trees for planar graphs",
author = "Jacob Holm and Italiano, {Giuseppe F.} and Adam Karczmarz and Jakub {\L}acki and Eva Rotenberg",
year = "2018",
month = aug,
day = "1",
doi = "10.4230/LIPIcs.ESA.2018.46",
isbn = "9783959770811",
series = "Leibniz International Proceedings in Informatics, LIPIcs",
booktitle = "26th European Symposium on Algorithms, ESA 2018",
}

@article{Battista1996,
author = {Di Battista, G. and Tamassia, R.},
title = {On-line maintenance of triconnected components with SPQR-trees},
year = {1996},
issue_date = {April     1996},
publisher = {Springer-Verlag},
address = {Berlin, Heidelberg},
volume = {15},
number = {4},
issn = {0178-4617},
url = {https://doi.org/10.1007/BF01961541},
doi = {10.1007/BF01961541},
journal = {Algorithmica},
month = apr,
pages = {302–318},
numpages = {17}
}

@inbook{BridgeFinding,
	author = {Jacob Holm and Eva Rotenberg and Mikkel Thorup},
	title = {Dynamic {B}ridge-{F}inding in {$\widetilde{O}(\log^2 n)$} {A}mortized {T}ime},
	booktitle = {Proceedings of the 2018 Annual ACM-SIAM Symposium on Discrete Algorithms (SODA)},
	chapter = {},
	pages = {35-52},
	year = {2018},
	publisher = {{SIAM}},
	doi = {10.1137/1.9781611975031.3},
	URL = {https://epubs.siam.org/doi/abs/10.1137/1.9781611975031.3}
}

@article{Frederickson85,
	author    = {Greg N. Frederickson},
	title     = {{Data Structures for On-Line Updating of Minimum Spanning Trees, with
	Applications}},
	journal   = {{SIAM} Journal on Computing},
	year      = {1985},
	volume    = {14},
	number    = {4},
	pages     = {781--798},
	doi       = {10.1137/0214055},
	bibsource = {dblp computer science bibliography, http://dblp.org}
}

@inproceedings{Kapron:2013,
	author = {Kapron, Bruce M. and King, Valerie and Mountjoy, Ben},
	title = {Dynamic Graph Connectivity in Polylogarithmic Worst Case Time},
	booktitle = {Proceedings of the Twenty-fourth Annual ACM-SIAM Symposium on Discrete Algorithms},
	series = {SODA '13},
	year = {2013},
	isbn = {978-1-611972-51-1},
	location = {New Orleans, Louisiana},
	pages = {1131--1142},
	numpages = {12},
	url = {http://dl.acm.org/citation.cfm?id=2627817.2627898},
	acmid = {2627898},
	publisher = {Society for Industrial and Applied Mathematics},
	address = {Philadelphia, PA, USA},
}

@incollection{Wulff-Nilsen16a,
	author    = {Christian Wulff{-}Nilsen},
	title     = {Faster Deterministic Fully-Dynamic Graph Connectivity},
	booktitle = {Encyclopedia of Algorithms},
	pages     = {738--741},
	year      = {2016},
	publisher = {Springer Berlin Heidelberg},
	address	  = {Berlin, Heidelberg},
}

@article{patrascu06,
	title={Logarithmic lower bounds in the cell-probe model},
	author={P\v{a}tra\c{s}cu, Mihai and Demaine, Erik D.},
	journal={SIAM Journal on Computing},
	volume={35},
	number={4},
	pages={932--963},
	year={2006},
	publisher={Society for Industrial and Applied Mathematics}
}

@article{Tarjan:75,
	author = {Tarjan, Robert E.},
	title = {{Efficiency of a Good But Not Linear Set Union Algorithm}},
	journal = {Journal of the ACM},
	issue_date = {April 1975},
	volume = {22},
	number = {2},
	month = apr,
	year = {1975},
	issn = {0004-5411},
	pages = {215--225},
	numpages = {11},
	OPTurl = {http://doi.acm.org/10.1145/321879.321884},
	doi = {10.1145/321879.321884},
	acmid = {321884},
	publisher = {ACM},
	address = {New York, NY, USA},
}

@MISC{Henzinger97,
	author = {Monika Rauch Henzinger and Valerie King},
	title = {{Fully Dynamic 2-Edge Connectivity Algorithm in Polylogarithmic Time per Operation}},
    institution = {},
	year = {1997}
}

@article{Eppstein97,
	author = {Eppstein, David and Galil, Zvi and Italiano, Giuseppe F. and Nissenzweig, Amnon},
	title = {Sparsification---{A} {T}echnique {F}or {S}peeding {U}p {D}ynamic {G}raph {A}lgorithms}, 
	journal = {Journal of the ACM},
	issue_date = {Sept. 1997},
	volume = {44},
	number = {5},
	month = sep,
	year = {1997},
	issn = {0004-5411},
	pages = {669--696},
	numpages = {28},
	doi = {10.1145/265910.265914},
	acmid = {265914},
	publisher = {ACM},
	address = {New York, NY, USA},
}

@article{Frederickson97,
	author = {Frederickson, Greg N.},
	title = {Ambivalent {D}ata {S}tructures for {D}ynamic 2-{E}dge-{C}onnectivity and {$k$} {S}mallest {S}panning {T}rees},
	journal = {SIAM Journal on Computing},
	volume = {26},
	number = {2},
	pages = {484-538},
	year = {1997},
	doi = {10.1137/S0097539792226825},
}

@article{Eppstein99,
	author = {Eppstein, David and Galil, Zvi and Italiano, Giuseppe F. and Spencer, Thomas H.},
	title = {Separator-{B}ased {S}parsification {II}: {E}dge and {V}ertex {C}onnectivity},
	journal = {SIAM Journal on Computing},
	issue_date = {Feb. 1999},
	volume = {28},
	number = {1},
	month = feb,
	year = {1999},
	issn = {0097-5397},
	pages = {341--381},
	numpages = {41},
	OPTurl = {https://doi.org/10.1137/S0097539794269072},
	doi = {10.1137/S0097539794269072},
	acmid = {298566},
	publisher = {Society for Industrial and Applied Mathematics},
	address = {Philadelphia, PA, USA},
}

@article{HeTh97,
	author    = {Monika Rauch Henzinger and
	Mikkel Thorup},
	title     = {Sampling to provide or to bound: With applications to fully dynamic graph algorithms},
	journal   = {Random Struct. Algorithms},
	volume    = {11},
	number    = {4},
	pages     = {369--379},
	year      = {1997},
	doi       = {10.1002/(SICI)1098-2418(199712)11:4<369::AID-RSA5>3.0.CO;2-X},
	bibsource = {dblp computer science bibliography, http://dblp.org}
}

@article{Henzinger95fully,
	author       = {Monika Rauch Henzinger},
	title        = {Fully Dynamic Biconnectivity in Graphs},
	journal      = {Algorithmica},
	volume       = {13},
	number       = {6},
	pages        = {503--538},
	year         = {1995},
	url          = {https://doi.org/10.1007/BF01189067},
	doi          = {10.1007/BF01189067},
	bibsource    = {dblp computer science bibliography, https://dblp.org}
}

@InProceedings{Henzinger95,
	author="Henzinger, Monika R.
	and La Poutr{\'e}, Han",
	editor="Spirakis, Paul",
	title="Certificates and fast algorithms for biconnectivity in fully-dynamic graphs",
	booktitle="Algorithms --- ESA '95",
	year="1995",
	publisher="Springer Berlin Heidelberg",
	address="Berlin, Heidelberg",
	pages="171--184",
	isbn="978-3-540-44913-3"
}

@inproceedings{HuangHKP17,
	author    = {Shang{-}En Huang and
	Dawei Huang and
	Tsvi Kopelowitz and
	Seth Pettie},
	title     = {Fully {D}ynamic {C}onnectivity in {$O(\log n (\log \log n)^2)$} {A}mortized {E}xpected {T}ime},
	booktitle = {Proceedings of the Twenty-Eighth Annual {ACM-SIAM} Symposium on Discrete
	Algorithms, {SODA} 2017, Barcelona, Spain, Hotel Porta Fira, January
	16-19},
	pages     = {510--520},
	year      = {2017},
	OPTcrossref  = {DBLP:conf/soda/2017},
	OPTurl       = {https://doi.org/10.1137/1.9781611974782.32},
	doi       = {10.1137/1.9781611974782.32},
	bibsource = {dblp computer science bibliography, https://dblp.org}
}

@inproceedings{Eppstein:2003,
	author = {Eppstein, David},
	title = {{Dynamic Generators of Topologically Embedded Graphs}},
	booktitle = {Proceedings of the Fourteenth Annual ACM-SIAM Symposium on Discrete Algorithms},
	series = {SODA '03},
	year = {2003},
	isbn = {0-89871-538-5},
	location = {Baltimore, Maryland},
	pages = {599--608},
	numpages = {10},
	url = {http://dl.acm.org/citation.cfm?id=644108.644208},
	acmid = {644208},
	publisher = {Society for Industrial and Applied Mathematics},
	address = {Philadelphia, PA, USA},
}

@inproceedings{HolmHR23,
	author       = {Jacob Holm and
	Ivor {van der Hoog} and
	Eva Rotenberg},
	editor       = {Erin W. Chambers and
	Joachim Gudmundsson},
	title        = {{Worst-Case Deterministic Fully-Dynamic Biconnectivity in Changeable
	Planar Embeddings}},
	booktitle    = {39th International Symposium on Computational Geometry, SoCG 2023,
	June 12-15, 2023, Dallas, Texas, {USA}},
	series       = {LIPIcs},
	volume       = {258},
	pages        = {40:1--40:18},
	publisher    = {Schloss Dagstuhl - Leibniz-Zentrum f{\"{u}}r Informatik},
	year         = {2023},
	url          = {https://doi.org/10.4230/LIPIcs.SoCG.2023.40},
	doi          = {10.4230/LIPICS.SOCG.2023.40},
	bibsource    = {dblp computer science bibliography, https://dblp.org}
}

@inproceedings{Thorup00,
	author       = {Mikkel Thorup},
	editor       = {F. Frances Yao and
	Eugene M. Luks},
	title        = {Near-optimal fully-dynamic graph connectivity},
	booktitle    = {Proceedings of the Thirty-Second Annual {ACM} Symposium on Theory
	of Computing, May 21-23, 2000, Portland, OR, {USA}},
	pages        = {343--350},
	publisher    = {{ACM}},
	year         = {2000},
	url          = {https://doi.org/10.1145/335305.335345},
	doi          = {10.1145/335305.335345},
	bibsource    = {dblp computer science bibliography, https://dblp.org}
}

@inproceedings{ChuzhoyGLNPS20,
	author       = {Julia Chuzhoy and
	Yu Gao and
	Jason Li and
	Danupon Nanongkai and
	Richard Peng and
	Thatchaphol Saranurak},
	editor       = {Sandy Irani},
	title        = {{A Deterministic Algorithm for Balanced Cut with Applications to Dynamic
	Connectivity, Flows, and Beyond}},
	booktitle    = {61st {IEEE} Annual Symposium on Foundations of Computer Science, {FOCS}
	2020, Durham, NC, USA, November 16-19, 2020},
	pages        = {1158--1167},
	publisher    = {{IEEE}},
	year         = {2020},
	url          = {https://doi.org/10.1109/FOCS46700.2020.00111},
	doi          = {10.1109/FOCS46700.2020.00111},
	bibsource    = {dblp computer science bibliography, https://dblp.org}
}

@article{Henzinger00,
	author       = {Monika Rauch Henzinger},
	title        = {{Improved Data Structures for Fully Dynamic Biconnectivity}},
	journal      = {{SIAM} J. Comput.},
	volume       = {29},
	number       = {6},
	pages        = {1761--1815},
	year         = {2000},
	url          = {https://doi.org/10.1137/S0097539794263907},
	doi          = {10.1137/S0097539794263907},
	bibsource    = {dblp computer science bibliography, https://dblp.org}
}

@inproceedings{Rauch94,
	author       = {Monika Rauch},
	editor       = {Frank Thomson Leighton and
	Michael T. Goodrich},
	title        = {Improved data structures for fully dynamic biconnectivity},
	booktitle    = {Proceedings of the Twenty-Sixth Annual {ACM} Symposium on Theory of
	Computing, 23-25 May 1994, Montr{\'{e}}al, Qu{\'{e}}bec,
	Canada},
	pages        = {686--695},
	publisher    = {{ACM}},
	year         = {1994},
	url          = {https://doi.org/10.1145/195058.195434},
	doi          = {10.1145/195058.195434},
	bibsource    = {dblp computer science bibliography, https://dblp.org}
}

@article{menger,
    author = {Menger, Karl},
    journal = {Fundamenta Mathematicae},
    number = {1},
    pages = {96--115},
    title = {{Zur allgemeinen Kurventheorie}},
    url = {https://doi.org/10.4064/FM-10-1-96-115},
    doi = {10.4064/FM-10-1-96-115},
    volume = {10},
    year = {1927},
}

@article{PoutreW98,
  author       = {Johannes A. La Poutr{\'{e}} and
                  Jeffery R. Westbrook},
  title        = {{Dynamic 2-Connectivity with Backtracking}},
  journal      = {{SIAM} J. Comput.},
  volume       = {28},
  number       = {1},
  pages        = {10--26},
  year         = {1998},
  url          = {https://doi.org/10.1137/S0097539794272582},
  doi          = {10.1137/S0097539794272582},
  bibsource    = {dblp computer science bibliography, https://dblp.org}
}

@article{WestbrookT92,
  author       = {Jeffery R. Westbrook and
                  Robert Endre Tarjan},
  title        = {{Maintaining Bridge-Connected and Biconnected Components On-Line}},
  journal      = {Algorithmica},
  volume       = {7},
  number       = {5{\&}6},
  pages        = {433--464},
  year         = {1992},
  url          = {https://doi.org/10.1007/BF01758773},
  doi          = {10.1007/BF01758773},
  bibsource    = {dblp computer science bibliography, https://dblp.org}
}

@article{HolmR24,
	author       = {Jacob Holm and
	Eva Rotenberg},
	title        = {Good r-divisions Imply Optimal Amortized Decremental Biconnectivity},
	journal      = {Theory Comput. Syst.},
	volume       = {68},
	number       = {4},
	pages        = {1014--1048},
	year         = {2024},
	url          = {https://doi.org/10.1007/s00224-024-10181-z},
	doi          = {10.1007/S00224-024-10181-Z},
	bibsource    = {dblp computer science bibliography, https://dblp.org}
}

@article{tarjan79,
  author       = {Robert Endre Tarjan},
  title        = {A Class of Algorithms which Require Nonlinear Time to Maintain Disjoint
                  Sets},
  journal      = {J. Comput. Syst. Sci.},
  volume       = {18},
  number       = {2},
  pages        = {110--127},
  year         = {1979},
  url          = {https://doi.org/10.1016/0022-0000(79)90042-4},
  doi          = {10.1016/0022-0000(79)90042-4},
  bibsource    = {dblp computer science bibliography, https://dblp.org}
}

@inproceedings{fredmansaks89,
  author       = {Michael L. Fredman and
                  Michael E. Saks},
  editor       = {David S. Johnson},
  title        = {The Cell Probe Complexity of Dynamic Data Structures},
  booktitle    = {Proceedings of the 21st Annual {ACM} Symposium on Theory of Computing,
                  May 14-17, 1989, Seattle, Washington, {USA}},
  pages        = {345--354},
  publisher    = {{ACM}},
  year         = {1989},
  url          = {https://doi.org/10.1145/73007.73040},
  doi          = {10.1145/73007.73040},
  bibsource    = {dblp computer science bibliography, https://dblp.org}
}

@article{blum86,
  author       = {Norbert Blum},
  title        = {On the Single-Operation Worst-Case Time Complexity of the Disjoint
                  Set Union Problem},
  journal      = {{SIAM} J. Comput.},
  volume       = {15},
  number       = {4},
  pages        = {1021--1024},
  year         = {1986},
  url          = {https://doi.org/10.1137/0215072},
  doi          = {10.1137/0215072},
  bibsource    = {dblp computer science bibliography, https://dblp.org}
}

@inproceedings{alstrupBR99,
  author       = {Stephen Alstrup and
                  Amir M. Ben{-}Amram and
                  Theis Rauhe},
  editor       = {Jeffrey S. Vitter and
                  Lawrence L. Larmore and
                  Frank T. Leighton},
  title        = {Worst-Case and Amortised Optimality in Union-Find (Extended Abstract)},
  booktitle    = {Proceedings of the Thirty-First Annual {ACM} Symposium on Theory of
                  Computing, May 1-4, 1999, Atlanta, Georgia, {USA}},
  pages        = {499--506},
  publisher    = {{ACM}},
  year         = {1999},
  url          = {https://doi.org/10.1145/301250.301383},
  doi          = {10.1145/301250.301383},
  bibsource    = {dblp computer science bibliography, https://dblp.org}
}

@book{tutte66,
 ISBN = {9781487572969},
 URL = {http://www.jstor.org/stable/10.3138/j.ctvfrxhc8},
 author = {William T. Tutte},
 publisher = {University of Toronto Press},
 title = {Connectivity in Graphs},
 urldate = {2026-09-23},
 year = {1966}
}

@article{HenzingerK99,
  author       = {Monika R. Henzinger and
                  Valerie King},
  title        = {Randomized Fully Dynamic Graph Algorithms with Polylogarithmic Time
                  per Operation},
  journal      = {J. {ACM}},
  volume       = {46},
  number       = {4},
  pages        = {502--516},
  year         = {1999},
  url          = {https://doi.org/10.1145/320211.320215},
  doi          = {10.1145/320211.320215},
  bibsource    = {dblp computer science bibliography, https://dblp.org}
}

@misc{Adena26,
      title={Incremental Exact Vertex Connectivity}, 
      author={Felix Adena and Monika Henzinger and Christoph Kern and Daniel Rutschmann},
      year={2026},
      eprint={2610.21531},
      archivePrefix={arXiv},
      primaryClass={cs.DS},
      url={https://arxiv.org/abs/2610.21531}, 
}

@article{HolmLT01,
  author       = {Jacob Holm and
                  Kristian de Lichtenberg and
                  Mikkel Thorup},
  title        = {Poly-logarithmic deterministic fully-dynamic algorithms for connectivity,
                  minimum spanning tree, 2-edge, and biconnectivity},
  journal      = {J. {ACM}},
  volume       = {48},
  number       = {4},
  pages        = {723--760},
  year         = {2001},
  url          = {https://doi.org/10.1145/502090.502095},
  doi          = {10.1145/502090.502095},
  bibsource    = {dblp computer science bibliography, https://dblp.org}
}

@article{SleatorT85,
  author       = {Daniel D. Sleator and
                  Robert E. Tarjan},
  title        = {Self-Adjusting Binary Search Trees},
  journal      = {J. {ACM}},
  volume       = {32},
  number       = {3},
  pages        = {652--686},
  year         = {1985},
  url          = {https://doi.org/10.1145/3828.3835},
  doi          = {10.1145/3828.3835},
  bibsource    = {dblp computer science bibliography, https://dblp.org}
}

@article{Whitney1932,
  author  = {Whitney, Hassler},
  title   = {Non-separable and planar graphs},
  journal = {Transactions of the American Mathematical Society},
  volume  = {34},
  number  = {2},
  pages   = {339--362},
  year    = {1932},
  doi     = {10.1090/S0002-9947-1932-1501641-2}
}

@article{maclane1937,
author = "Mac Lane, Saunders",
doi = "10.1215/S0012-7094-37-00336-3",
fjournal = "Duke Mathematical Journal",
journal = "Duke Math. J.",
month = "09",
number = "3",
pages = "460--472",
publisher = "Duke University Press",
title = "A structural characterization of planar combinatorial graphs",
url = "https://doi.org/10.1215/S0012-7094-37-00336-3",
volume = "3",
year = "1937"
}

@inproceedings{Rauch92,
  author       = {Monika Rauch},
  title        = {Fully Dynamic Biconnectivity in Graphs},
  booktitle    = {33rd Annual Symposium on Foundations of Computer Science, Pittsburgh,
                  Pennsylvania, USA, October 24-27, 1992},
  pages        = {50--59},
  publisher    = {{IEEE} Computer Society},
  year         = {1992},
  url          = {https://doi.org/10.1109/SFCS.1992.267819},
  doi          = {10.1109/SFCS.1992.267819},
  bibsource    = {dblp computer science bibliography, https://dblp.org}
}

@article{Planken25,
  author       = {Romain Bourneuf and
                  Tim Planken},
  title        = {A Structural Linear-Time Algorithm for Computing the Tutte Decomposition},
  journal      = {CoRR},
  volume       = {abs/2508.06212},
  year         = {2025},
  url          = {https://doi.org/10.48550/arXiv.2508.06212},
  doi          = {10.48550/ARXIV.2508.06212},
  eprinttype   = {arXiv},
  eprint       = {2508.06212},
  bibsource    = {dblp computer science bibliography, https://dblp.org}
}

@inproceedings{GutwengerM00,
  author       = {Carsten Gutwenger and
                  Petra Mutzel},
  editor       = {Joe Marks},
  title        = {A Linear Time Implementation of SPQR-Trees},
  booktitle    = {Graph Drawing, 8th International Symposium, {GD} 2000, Colonial Williamsburg,
                  VA, USA, September 20-23, 2000, Proceedings},
  series       = {Lecture Notes in Computer Science},
  volume       = {1984},
  pages        = {77--90},
  publisher    = {Springer},
  year         = {2000},
  url          = {https://doi.org/10.1007/3-540-44541-2},
  doi          = {10.1007/3-540-44541-2_8},
  bibsource    = {dblp computer science bibliography, https://dblp.org}
}
